\DeclareSIUnit[number-unit-product = ]\percent{\char`\%}
\newcommand{\cmark}{\ding{51}}
\newcommand{\xmark}{\ding{55}}
\algnewcommand\algorithmicmsg[2]{\ifthenelse{\isempty{#2}}{\textit{#1}\xspace}{\textit{#1}(#2)\xspace}}
\algnewcommand\Msg{\algorithmicmsg}
\algnewcommand\algorithmicsend{\textbf{send}\xspace}
\algnewcommand\Send{\algorithmicsend}
\algnewcommand\algorithmicwait{\textbf{wait}\xspace}
\algnewcommand\WaitUntil{\algorithmicwait}
\algnewcommand\algorithmicvar{\textbf{var}\xspace}
\algnewcommand\Var{\algorithmicvar}
\algnewcommand\algorithmicglobal{\textbf{state}\xspace}
\algnewcommand\Global{\algorithmicglobal}
\algnewcommand{\IfThen}[2]{
  \State \algorithmicif\ #1\ \algorithmicthen\ #2}
\algnewcommand{\IfThenElse}[3]{
  \State \algorithmicif\ #1\ \algorithmicthen\ #2\ \algorithmicelse\ #3}
\algnewcommand\algorithmicswitch{\textbf{switch}}
\algnewcommand\algorithmiccase{\textbf{case}}
\newtheorem{thm}{Theorem}
\newtheorem{lem}[thm]{Lemma}
\newcommand{\noeditingmarks}{}
\newcommand{\textred}[1]{\textcolor{red}{#1}}
   \newcommand{\pgwrapper}[2]{\textbf{#1: }\textred{\textit{#2}}}
   \newcommand{\newtext}[1]{{\color{blue}#1}}
   \newcommand{\pgwrapper}[2]{}
  \newcommand{\newtext}[1]{#1}
\definecolor{worange}{RGB}{245, 128, 37}
\newcommand{\makePasteable}{no}
\renewcommand{\noindentparagraph}[1]{\vspace{.5ex} \noindent \textbf{\textit{#1}}}
\titlespacing*{\section} {0pt}{2.5ex plus 1ex minus .2ex}{1ex plus .2ex}
\titlespacing*{\subsection}{0pt}{1.3ex plus 1ex minus .2ex}{.75ex plus .2ex}
\setlist{itemsep=1.5pt,parsep=1.5pt}             
\newcommand{\rs}{RSC}
\newcommand{\rslong}{regular sequential consistency}
\newcommand{\Rslong}{Regular sequential consistency}
\newcommand{\RSlong}{Regular Sequential Consistency}
\newcommand{\rss}{RSS}
\newcommand{\rsslong}{regular sequential serializability}
\newcommand{\RSSlong}{Regular Sequential Serializability}
\newcommand{\gryff}{Gryff}
\newcommand{\gryffrs}{\gryff{}-\rs{}}
\newcommand{\spanner}{Spanner}
\newcommand{\spannerrss}{\spanner{}-\rss{}}
\newcommand{\librss}{\texttt{libRSS}}
\newcommand{\RTBarriers}{Real-Time Fences}
\newcommand{\Rtbarriers}{Real-time fences}
\newcommand{\rtbarrier}{real-time fence}
\newcommand{\rtbarriers}{real-time fences}
\newcommand{\barrier}{fence}
\newcommand{\barriers}{fences}
\DeclareMathOperator*{\states}{\textit{states}}
\DeclareMathOperator*{\start}{\textit{start}}
\DeclareMathOperator*{\sig}{\textit{sig}}
\DeclareMathOperator*{\trans}{\textit{trans}}
\DeclareMathOperator*{\acts}{\textit{acts}}
\DeclareMathOperator*{\localacts}{\textit{local}}
\DeclareMathOperator*{\extacts}{\textit{extacts}}
\DeclareMathOperator*{\actin}{\textit{in}}
\DeclareMathOperator*{\actout}{\textit{out}}
\DeclareMathOperator*{\actint}{\textit{int}}
\DeclareMathOperator*{\sched}{\textit{sched}}
\DeclareMathOperator*{\trace}{\textit{trace}}
\DeclareMathOperator*{\complete}{\textit{complete}}
\DeclareMathOperator*{\system}{\textit{sys}}
\DeclareMathOperator*{\user}{\textit{user}}
\DeclareMathOperator*{\conflicts}{\mathcal{C}}
\DeclareMathOperator*{\send}{\textit{sendto}}
\DeclareMathOperator*{\sent}{\textit{sent}}
\DeclareMathOperator*{\request}{\textit{recvfrom}}
\DeclareMathOperator*{\receive}{\textit{received}}
\DeclareMathOperator*{\sendto}{\send}
\DeclareMathOperator*{\recvfrom}{\request}
\newcommand{\anomaly}[1]{\mathcal{A_\textrm{#1}}}
\newcommand{\invariant}[1]{\mathcal{I_\textrm{#1}}}
\newcommand{\type}{\mathfrak{T}}
\newcommand{\spec}{\mathfrak{S}}
\DeclareMathOperator*{\vals}{\textit{vals}}
\DeclareMathOperator*{\invs}{\textit{invs}}
\DeclareMathOperator*{\resps}{\textit{resps}}
\DeclareMathOperator*{\ops}{\textit{ops}}
\newcommand{\caused}{\rightsquigarrow} 
\newcommand{\rt}{\rightarrow} 
\DeclareMathOperator*{\pastset}{\mathcal{P}}
\DeclareMathOperator*{\lastinv}{\mathcal{L}}
\DeclareMathOperator*{\nextbar}{\text{nf}}
\newcommand{\tmin}{t_{\text{min}}}
\newcommand{\tsnap}{t_{\text{snap}}}
\newcommand{\tprepare}{t_{\text{p}}}
\newcommand{\tcommit}{t_{\text{c}}}
\newcommand{\tread}{t_\text{read}}
\newcommand{\trw}{T_\text{RW}}
\newcommand{\tro}{T_\text{RO}}
\newcommand{\elb}{t_\text{ee}}
\newcommand{\timestamp}{carstamp}
\newcommand{\mts}{\mathit{cs}}
\DeclareMathOperator*{\fq}{\mathit{dp}}
\DeclareMathOperator*{\pp}{\mathit{pp}}
\DeclareMathOperator*{\vp}{\mathit{vp}}
\DeclareMathOperator*{\inv}{\mathit{inv}}
\DeclareMathOperator*{\resp}{\mathit{resp}}
\begin{document}

\title[\RSSlong{} and \RSlong{}]{\RSSlong{} and \\ \RSlong{}}


\author{Jeffrey Helt}
\affiliation{\institution{Princeton University}\country{United States}}
\email{jhelt@cs.princeton.edu}

\author{Matthew Burke}
\affiliation{\institution{Cornell University}\country{United States}}
\email{matthelb@cs.cornell.edu}

\author{Amit Levy}
\affiliation{\institution{Princeton University}\country{United States}}
\email{aalevy@cs.princeton.edu}

\author{Wyatt Lloyd}
\affiliation{\institution{Princeton University}\country{United States}}
\email{wlloyd@princeton.edu}

\renewcommand{\shortauthors}{Helt et al.}

\begin{abstract}
    Strictly serializable (linearizable) services appear to execute transactions (operations) sequentially, in an order consistent with real time. This restricts a transaction's (operation's) possible return values and in turn, simplifies application programming. In exchange, strictly serializable (linearizable) services perform worse than those with weaker consistency. But switching to such services can break applications.
    
    This work introduces two new consistency models to ease this trade-off: \rsslong{} (\rss{}) and \rslong{} (\rs{}).
    They are just as strong for applications: we prove any application invariant that holds when using a strictly serializable (linearizable) service also holds when using an \rss{} (\rs{}) service.
    Yet they relax the constraints on services---they allow new, better-performing designs.
    To demonstrate this, we design, implement, and evaluate variants of two
    systems, \spanner{} and \gryff{}, relaxing their consistency to \rss{} and
    \rs{}, respectively. The new variants achieve better
    read-only transaction and read tail latency than their counterparts.
\end{abstract}

\begin{CCSXML}
<ccs2012>
<concept>
<concept_id>10002951.10002952.10003190.10003195</concept_id>
<concept_desc>Information systems~Parallel and distributed DBMSs</concept_desc>
<concept_significance>500</concept_significance>
</concept>
<concept>
<concept_id>10002951.10002952.10003190.10010832</concept_id>
<concept_desc>Information systems~Distributed database transactions</concept_desc>
<concept_significance>500</concept_significance>
</concept>
</ccs2012>
\end{CCSXML}

\ccsdesc[500]{Information systems~Parallel and distributed DBMSs}
\ccsdesc[500]{Information systems~Distributed database transactions}

\keywords{distributed systems, consistency, databases}

\settopmatter{printacmref=true, printccs=true, printfolios=false, authorsperrow=4}

\maketitle

\section{Introduction}
\label{sec:intro}

Strict serializability~\cite{papadimitriou1979serializability} and
linearizability~\cite{herlihy1990linearizability} are exemplary consistency
models. Strictly serializable (linearizable) services appear to execute
transactions (operations) sequentially, in an order consistent with real time.
They simplify building correct applications atop them by reducing the number
of possible values services may return to application processes. This, in turn,
makes it easier for programmers to enforce necessary application invariants.

In exchange for their strong guarantees, strictly serializable and
linearizable services incur worse performance than those with weaker
consistency~\cite{hunt2010zookeeper,deCandia2007dynamo,lloyd2011cops,lloyd2013eiger,bailis2016ramp}.
For example, consider a read in a key-value store that returns the value written by a
concurrent write. If the key-value store is weakly consistent, the read imposes
no constraints on future reads. But if the key-value store is strictly serializable,
the read imposes a global constraint on future reads---they all must
return the new value, even if the write has not yet finished. Existing strictly
serializable services guarantee this by blocking
reads~\cite{corbett2013spanner}, incurring multiple round trips between clients
and shards~\cite{zhang2018tapir,zhang2015tapir}, or aborting conflicting
writes~\cite{zhang2018tapir,zhang2015tapir}. These harm service performance,
either by increasing abort rates or increasing latency.



Services with weaker consistency
models~\cite{ahamad1995causal,lloyd2011cops,akkoorath2016cure,lamport1979sequential}, however, offer application programmers with a harsh trade-off. In exchange for
better performance, they may break the invariants of applications built atop them.


\newtext{
This work introduces two new consistency models to ease this trade-off:
\rsslong{} (\rss{}) and \rslong{} (\rs{}). They allow services to achieve
better performance while being \textit{invariant-equivalent} to strict
serializability and linearizability, respectively. For any application that
does not require synchronized clocks, any invariant that holds while
interacting with a set of strictly serializable (linearizable) services also
holds when executing atop a set of \rss{} (\rs{}) services.
}

\newtext{
To maintain application invariants, a set of \rss{} (\rs{}) services must
appear to execute transactions (operations) sequentially, in an order that is
consistent with a broad set of causal constraints (e.g., through message passing). We
prove formally this is sufficient for \rss{} (\rs{}) to be
invariant-equivalent to strict serializability (linearizability).
}

To allow for better performance, \rss{} and \rs{} relax some of strict serializability
and linearizability's real-time guarantees for causally unrelated transactions
or operations, respectively. \newtext{For example, when a read returns the value written by a concurrent write, instead of a global constraint,
\rss{} imposes a causal constraint---only reads that causally follow the first
must return the new value.}

But in addition to helping enforce invariants, strict serializability's (linearizability's) real-time guarantees help applications match their users'
expectations. For instance, from interacting with applications on their local
machine, users expect writes to be immediately visible to all future reads.
Applications built atop weakly consistent services can violate
these expectations, exposing anomalies.

\newtext{
Because \rss{} (\rs{}) relax some of strict serializability's (linearizability's)
real-time constraints, applications built atop an \rss{} (\rs{}) service may
expose more anomalies. But
prior work suggests anomalies are rare in practice~\cite{lu2015existentialConsistency}, and further, \rss{} and
\rs{} include some real-time guarantees to make the chance of observing these new anomalies
small. They should only be possible within short time windows (a few
seconds). Thus, we expect the
difference between \rss{} (\rs{}) and strict serializability (linearizability)
to go unnoticed in practice.
}

\newtext{
To compose a set of \rss{} (\rs{}) services such that they appear to execute transactions (operations) in some global \rss{} (\rs{}) order, each must implement one other mechanism: a \rtbarrier{}. We show how the necessary \barriers{} can be invoked without changing applications.
}


Finally, to demonstrate the performance benefits permitted by \rss{} and \rs{}, we
design, implement, and evaluate variants of two existing services:
\spanner{}~\cite{corbett2013spanner}, Google's globally distributed database,
and \gryff{}~\cite{burke2020gryff}, a replicated key-value store. The variants
implement \rss{} and \rs{} instead of strict serializability and
linearizability, respectively.

\spannerrss{} improves read-only transaction latency by reducing the chances
they must block for conflicting read-write transactions. Instead, \spannerrss{}
allows read-only transactions to immediately return old values in some cases. As
a result, in low- and moderate-contention workloads, \spannerrss{} reduces
read-only transaction tail latency by up to \newtext{\SI{49}{\percent}} without affecting
read-write transaction latency.

\gryffrs{} improves read latency with a different approach. By removing the
write-back phase of reads, \gryffrs{} halves the number of round trips required
between application processes and \gryff{}'s replicas. As a result, for
moderate- and high-contention workloads, \gryffrs{} reduces p99 read
latency by about \SI{40}{\percent}. Further, because \gryffrs{}'s reads always
finish in one round, it offers larger reductions in latency (up to
\SI{50}{\percent}) farther out on the tail.

In sum, this paper makes the following contributions:
\begin{itemize}[leftmargin=*]
\item \newtext{We define \rss{} and \rs{}, the first invariant-equivalent consistency models to strict serializability and linearizability.}
\item \newtext{We prove that for any application not requiring synchronized clocks, any invariant that holds with strictly
  serializable (linearizable) services also holds with \rss{} (\rs{}).}
\item We design, implement, and evaluate \spannerrss{} and \gryffrs{}, which
  significantly improve read tail latency compared to their counterparts.
\end{itemize}


\section{Background and Motivation}
\label{sec:background}

\begin{figure}[!t]
\centering
\includegraphics[width=\linewidth,page=1]{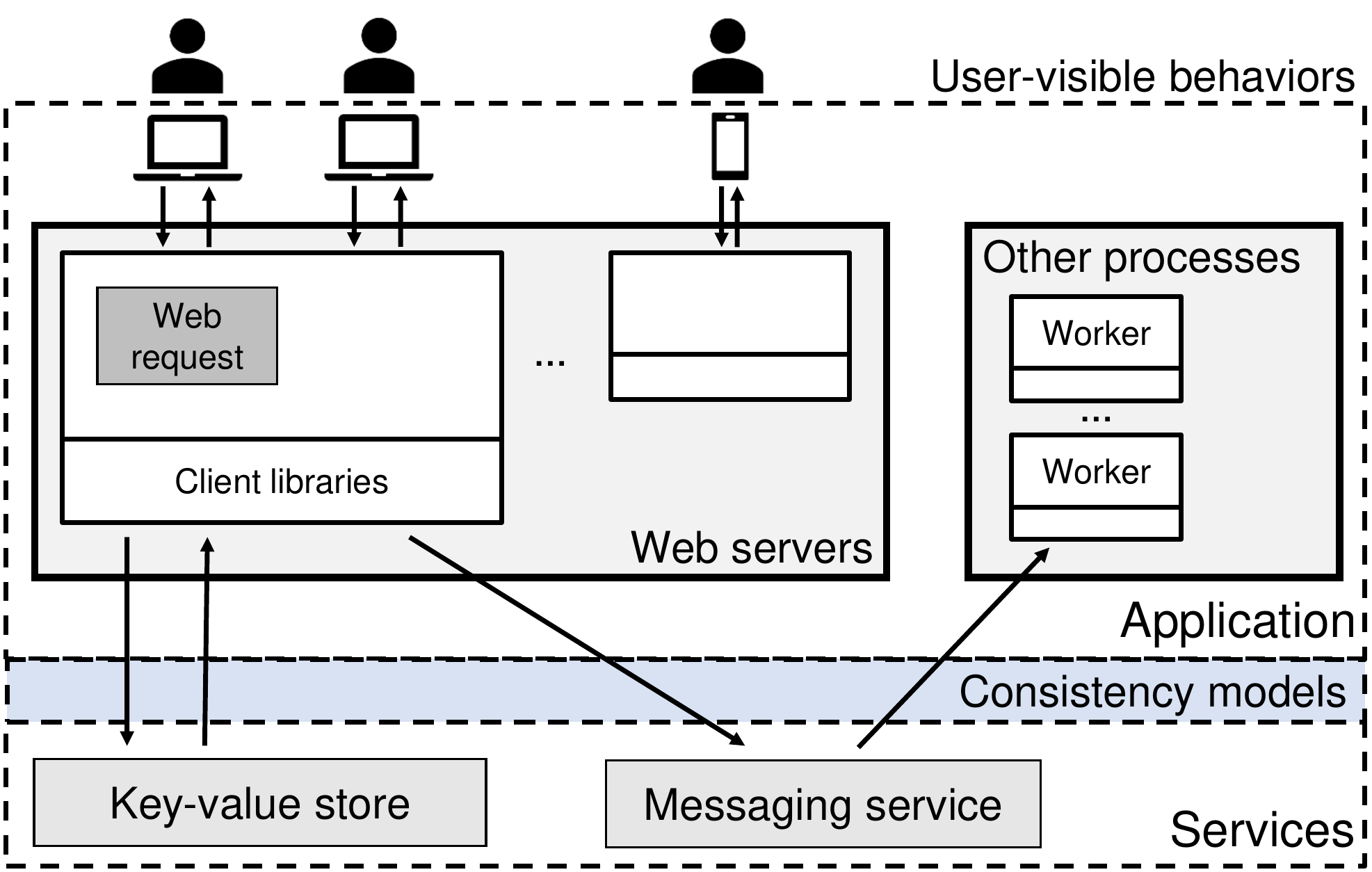}
\caption{An application deployed in a data center. It
  comprises the processes running on user devices, Web servers, and
  asynchronous workers. They are supported by a pair of services. The services'
  consistency models significantly impact application
  correctness and performance.}
\label{fig:application}
\end{figure}

In this section, we first describe the typical structure of applications and
their interaction with supporting services. We then discuss the role
consistency models play in these interactions. Finally, we demonstrate how
existing consistency models offer difficult trade-offs to application
programmers and service designers.

\begin{table*}[!t]
\centering
\begin{tabular}{| l | c c c | c c c c | c |}
\cline{2-9}
\multicolumn{1}{l|}{} & \multicolumn{3}{c|}{\textbf{Invariants}} & \multicolumn{4}{c|}{\textbf{Possible Anomalies}} & \multicolumn{1}{c|}{\textbf{Performance}} \\
\hline
\textbf{Consistency} & $\bm{\invariant{1}}$ & $\bm{\invariant{2}}$ & $\bm{\invariant{SS}}$ & $\bm{\anomaly{1}}$ & $\bm{\anomaly{2}}$ & $\bm{\anomaly{3}}$ & $\bm{\anomaly{4}}$ & \multicolumn{1}{|c|}{\textbf{Latency}} \\ \hline

Strict Serializability~\cite{papadimitriou1979serializability}
& \cmark & \cmark & \cmark
& never & never & never & always
& $\bm{\uparrow\uparrow}$ \\
\textbf{\RSSlong{}}
& \cmark & \cmark & \cmark
& never & never & temporarily & always
& $\bm{\uparrow}$ \\
Process-ordered Serializability~\cite{daudjee2004lazy,lu2016snow}
& \cmark & \xmark & \xmark
& never & always & always & always
& $\bm{-}$ \\

\hline \hline

\multicolumn{9}{|l|}{$\bm{\invariant{1\phantom{S}}:} \forall P, \forall i : i \in P.\textit{Album}.\textit{list} \implies P.\textit{Album}.\textit{photos}[i].\textit{data} \neq \textit{null}$} \\
\multicolumn{9}{|l|}{$\bm{\invariant{2\phantom{S}} :} \forall W, \forall i : \textsc{Head}(\textit{W.PhotoQ}) = i \land \textit{W.Photo}.\textit{id} = i \implies \textit{W.Photo}.\textit{data} \neq \textit{null}$} \\
\multicolumn{9}{|l|}{$\bm{\invariant{SS}:} \text{Any invariant that holds with strict serializability.}$} \\

\hline \hline

\multicolumn{9}{|l|}{$\bm{\anomaly{1}:}$ Alice adds two photos; later, only one photo
  is in her album.} \\
\multicolumn{9}{|l|}{$\bm{\anomaly{2}:}$ Alice adds a photo and calls Bob; Bob does not see the photo.} \\
\multicolumn{9}{|l|}{$\bm{\anomaly{3}:}$ Alice sees Charlie's photo and calls Bob; Bob does not see the photo.} \\
\multicolumn{9}{|l|}{$\bm{\anomaly{4}:}$ Alice tries to add a photo but never receives a response.} \\
\hline
\end{tabular}

\vspace{2ex}
\caption{Comparing consistency models by which invariants hold, which anomalies
  are prevented, and the latency of operations. $\invariant{1}$ states an album
  never contains a photo with null data; $\invariant{2}$ states a worker never
  reads a photo with null data.}
 \label{tbl:consistency-comparison}
\end{table*}

\subsection{Distributed Applications}
\label{sec:background:applications}

Distributed applications can be split into two parts: a set of processes executing
application-specific logic and a set of services supporting them. The
application-specific processes include those that respond interactively to
users, such as those executing on a user's device and those they cooperate
with synchronously or asynchronously, such as Web servers running in a nearby
data center. The services provide generic, reusable functionality, such as data
storage~\cite{lloyd2011cops,lloyd2013eiger,bronson2013tao,corbett2013spanner}
and messaging~\cite{sharma2015wormhole}.

For example, consider a Web application deployed in a data center
(Figure~\ref{fig:application}). A user interacts with a browser on their device.
These interactions define the behaviors of the application. Under the hood, the
browser sends HTTP requests to a set of Web servers. In processing a request, a
server reads and modifies state in a key-value store and renders responses.
There are also
worker processes that these servers invoke asynchronously to perform longer
running tasks~\cite{huang2017sve}. The set of processes running
application-specific logic are the \textit{clients} of the services. The
services are responsible for persisting application state, replicating it across
data centers, and coordinating between application components.

\subsection{Motivating Example: Photo-Sharing App}
\label{sec:background:example}

Throughout the paper, we consider a simple but illustrative example:
a photo-sharing application. The application allows users to backup and share
their photos with other users while handling compression and other photo-processing functions.

In our example application, photos are organized into albums. Both photos and
albums are stored in a globally distributed, transactional key-value store. Each
photo and album has a unique key. A photo's key maps to a binary blob while an
album's key maps to structured data containing the keys of all photos in the
album. If a key is read but is not present, the key-value store simply returns \texttt{null}.

In addition to the transactional key-value store, the application uses a
messaging service to enqueue requests for asynchronous processing. For example,
when a user adds a high-resolution photo, the application enqueues the
photo's key in the messaging service, requesting some worker process create
lower-resolution thumbnails of the image.

When a user adds a new photo to an album, a Web server issues a read-write
transaction: it creates a new key-value mapping for the photo, reads the album,
and writes back the album after modifying its value to include a reference to
the newly added photo. Then it enqueues a request for additional processing.

\subsection{Consistency Models}
\label{sec:background:consistency}

The correctness and performance of an application are heavily
influenced by the consistency models of its supporting services. A \textit{consistency model} is a contract between a service and its
clients regarding the allowable return values for a given set of operations.
Services with stronger consistency are generally restricted to fewer possible return values, so
it is easier for programmers to build a correct application atop them.
Restricting the allowable return values, however, often incurs worse performance
in these services and consequently, in the application.

\noindentparagraph{Invariants and anomalies.} The stronger restrictions of
stronger consistency models enable applications to more easily ensure
correctness and provide better semantics to users. \newtext{The correctness of an
application is determined by its \textit{invariants}, which are logical
predicates that hold for all states of an application (i.e., the combined states of all application processes).} The semantics for
users are determined by the rate of \textit{anomalies}, which
are behaviors the user would not observe while accessing a
single-threaded, monolithic application running on a local machine with no
failures. Table~\ref{tbl:consistency-comparison} shows some invariants
and anomalies for our application.

Application logic relies on invariants to function correctly. For the
photo-sharing example, client-side application logic assumes that if an album
contains a reference to a photo, the photo exists in the key-value store
($\invariant{1}$). Similarly, workers that receive a photo's key through the
messaging service assume fetching the key from the key-value store will
not return \texttt{null} ($\invariant{2}$).

\newtext{
Applications also attempt to present reasonable behaviors to users, which is
quantified by the rate of anomalies. Unlike an invariant violation, the detection of an anomaly may require information that is beyond the application's state. For example, once Alice adds a new photo to an album, Bob not seeing it is an anomaly
($\anomaly{2}$). But detecting $\anomaly{2}$ requires the application either to have synchronized clocks to record the start and end times of Alice and Bob's requests or to somehow know that Alice communicated with Bob.
}

\subsection{Strict Serializability is Too Strong}
\label{sec:strict_serializability}

Strict serializability~\cite{papadimitriou1979serializability} is one of the
strongest consistency models. A service that guarantees \textit{strict serializability} appears to execute
transactions sequentially, one at a time, in an order consistent with the
transactions' real-time order. As a result, only transactions that are
concurrent (i.e., both begin before either ends) may be
reordered. Further, strict serializability is \textit{composable}:
clients may use multiple services, and the resulting
execution will always be strictly serializable because real-time order is universal to
all services~\cite{herlihy1990linearizability}.\footnote{This holds only with
  the reasonable assumption that individual transactions do not span multiple
  services. This makes each strictly serializable service equivalent to a
  linearizable ``object''~\cite{herlihy1990linearizability}.}

Strict serializability ensures a large set of invariants hold. For example, it
ensures both invariants hold for our photo-sharing application. For
$\invariant{1}$, the application logic writes a photo's data and adds it to an
album in a single transaction $T$. Strict serializability then trivially
ensures $\invariant{1}$; because transactions appear to execute sequentially,
any other transaction will be either before $T$ and not see the photo
or after $T$ and see both the photo in the album and its
data. For $\invariant{2}$, the application logic first executes the
add-new-photo transaction and then enqueues a request to process it in the
messaging service. The real-time order and composability of strict
serializability then ensures $\invariant{2}$; the enqueue begins in real time
after the add-new-photo transaction ends, and thus, any process that
sees the enqueued request must subsequently see the writes of the add-new-photo
transaction.

Strict serializability also mitigates anomalies. As shown in
Table~\ref{tbl:consistency-comparison}, $\anomaly{1}$, $\anomaly{2}$, and
$\anomaly{3}$ never occur with a strictly serializable key-value store. Because strictly
serializable transactions appear to execute sequentially, no writes are lost, and its
real-time guarantees ensure Bob's transactions always follow Alice's after
receiving her call.

Yet applications built atop strictly serializability services are not perfect. For instance,
asynchronous networks, transient failures, and a lack of fate sharing among
components can all cause anomalies that are beyond the scope of a
consistency model. $\anomaly{4}$ in Table~\ref{tbl:consistency-comparison} shows
one example, \newtext{which could not occur on a local machine with no failures.}

\noindentparagraph{Strict serializability imposes performance costs.} In
exchange for ensuring invariants and preventing most anomalies, strict
serializability imposes significant performance costs on services. For instance,
consider anomaly $\anomaly{3}$ in Table~\ref{tbl:consistency-comparison}, and
assume Charlie is in the middle of adding a photo when Alice sees it. Strict
serializability mandates that any subsequent read by any application server
includes the photo, even if Bob is on a different continent and Charlie's
transaction has not finished. As a result, the key-value store must
ensure Alice's transaction only includes Charlie's photo once all
subsequent reads will, too.

Existing services provide this guarantee through a variety
of mechanisms. Some block read-only transactions during conflicting read-write
transactions~\cite{corbett2013spanner}. Others incur multiple round trips
between an application server and a set of
replicas~\cite{zhang2015tapir,zhang2018tapir} or abort concurrent read-write
transactions~\cite{zhang2015tapir,zhang2018tapir}. These mechanisms reduce
performance by increasing either read-only transaction latency or abort rates.

\subsection{Process-Ordered Serializability is Too Weak}
\label{sec:process-ordered serializability}

Because strict serializability incurs heavy performance overhead, many services
provide a weaker consistency model. The next strongest is
\textit{process-ordered (PO) serializability}, which guarantees that services appear
to execute transactions sequentially, in an order consistent with each client's
process order~\cite{lu2016snow, daudjee2004lazy}. PO
serializability is weaker than strict serializability because it does not
guarantee that non-concurrent transactions respect their real-time order.
Moreover, PO serializability is not composable. Thus, process
orders across services can be lost.

Because PO serializability is weaker than strict serializability,
it avoids some of its performance costs. For instance, there are read-only
transaction protocols that can always complete in one round of non-blocking
requests with constant metadata in services with PO serializability,
while this is impossible with strict serializability~\cite{lu2020port}.



\noindentparagraph{PO serializability provides fewer invariants.}
In exchange for better performance, weaker consistency models, like
PO serializability, present application programmers with a harsh
trade-off: fewer invariants will hold with reasonable application
logic.\footnote{We say ``reasonable application logic'' because one can always
  write a middleware layer that implements a stronger consistency model, $X$,
  atop a weaker one, $Y$, e.g., by taking the ideas of bolt-on
  consistency~\cite{bailis2013bolton} to an extreme. But the resulting $X$
  middleware on $Y$ service is simply an inefficient implementation of an $X$
  service.} For our photo-sharing example, $\invariant{1}$ holds because like with strict
serializability, PO-serializable services appear to execute
transactions sequentially.

On the other hand,
$\invariant{2}$ does not hold because PO serializability is not
composable. A worker seeing a photo in the message queue does not ensure its
subsequent reads to the key-value store will include the writes of a preceding add-new-photo
transaction because the message queue and key-value store are distinct services.

\subsection{Non-Transactional Consistency Models}
\label{sec:background:non}

Our discussion above focuses on transactional consistency models. The same
tension between application invariants and service performance exists for the
equivalent non-transactional models.
Linearizability~\cite{herlihy1990linearizability} and sequential consistency~\cite{lamport1979sequential} are the non-transactional
equivalents of strict serializability and process-ordered serializability, respectively.
If we temporarily ignore albums and assume
application processes issue a single write to add a photo, invariant $\invariant{2}$ holds with a linearizable key-value
store but not with a sequentially consistent one. But linearizable services must
employ mechanisms that hurt their performance to satisfy linearizability's
constraints, for example, by requiring additional rounds of communication for
reads (\S\ref{sec:gryffrs}).
\section{Regular Sequential Consistency Models}
\label{sec:consistency}

\newtext{
A consistency model's guarantees affect both application programmers and users. Stronger models place less burden on programmers (by guaranteeing more invariants) and users (by exposing fewer anomalies) but constrain service performance. In this work, we propose two new consistency models, \rsslong{} (\rss{}) and \rslong{} (\rs{}), to diminish this trade-off.
}

\newtext{
\rss{} and \rs{} are invariant-equivalent to strict serializability and linearizability, respectively. Thus, they place no additional burden on application programmers.
}

\newtext{
While they do allow more anomalies, prior work suggests anomalies with much weaker models (e.g., eventual consistency) are rare in practice (e.g., at most six anomalies per million operations~\cite{lu2015existentialConsistency}). Thus, we expect the additional burden on users to be negligible.
}

\newtext{
In this section, we define \rss{} and \rs{} and prove their invariant-equivalence to strict serializability and linearizability. We first describe our formal model of distributed applications
(\S\ref{sec:consistency:applications}) and the services they use
(\S\ref{sec:consistency:systems}). We then define \rss{} and \rs{}
(\S\ref{sec:consistency:consistency} and \S\ref{sec:consistency:definitions})
and finally prove our main result (\S\ref{sec:consistency:proof}). (We demonstrate \rss{} and \rs{} allow for services with better performance in later sections.)
}

\subsection{Applications and Executions}
\label{sec:consistency:applications}

We model a distributed \textit{application} as a collection of $n$
\textit{processes}. Processes are state machines~\cite{lynch1987ioa,lynch1996da}
that implement application logic by performing local computation, exchanging
messages, and invoking operations on services.

An application's processes define a prefix-closed set of \textit{executions},
which are sequences $s_0,\pi_1,s_1,\ldots$ of alternating \textit{states} and
\textit{actions}, starting and ending with a state. An application state
contains the state of each process---it is an $n$-length vector of process
states. As part of a process's state, we assume it has access to a local
clock, which it can use to set local timers, but the clock makes no guarantees
about its drift or skew relative to those at other processes.

Each action is a step taken by exactly one process and is one of three types:
\textit{internal}, \textit{input}, or \textit{output}. Internal actions model
local computation. Processes use input and output actions to interact with other
processes (e.g., receiving and replying to a remote procedure call) and their
environment (e.g., responding to a user gesture). As we will describe in the
following section, a subset of the input and output actions are
\textit{invocations} and \textit{responses}, respectively, which are used to
interact with services.

Processes can also exchange messages with one another via unidirectional
channels. To send a message to process $P_j$, $P_i$ uses two actions: first,
$P_i$ uses an output action $\sendto_{ij}(m)$ and later, an input action
$\sent_{ij}$ occurs, indicating $m$'s transmission on the network. Similarly, to receive a
message from $P_i$, $P_j$ first uses an output action $\request_{ij}$ and later,
an input action $\receive_{ij}(m)$ occurs, indicating the receipt of $m$.


Given an execution $\alpha$, we will often refer to an individual process's
\textit{sub-execution}, denoted $\alpha|P_i$. $\alpha|P_i$ comprises
only $P_i$'s actions and the $i$th component of each state in $\alpha$. 


\noindentparagraph{Well-formed.} An execution is \textit{well-formed} if it
satisfies the following: (1) Messages are sent before they are received; (2) A
process has at most one (total) outstanding invocation (at a service) or $\request_{ij}$ (at a channel); and (3) Processes do not take output steps while
waiting for a response from a service. We henceforth only consider well-formed executions.


\noindentparagraph{Equivalence.} Two executions $\alpha$ and $\beta$ are
\textit{equivalent} if for all $P_i$, $\alpha|P_i = \beta|P_i$. Intuitively,
equivalent executions are indistinguishable to the processes.

\subsection{Services}
\label{sec:consistency:systems}

Databases, message queues, and other back-end \textit{services} that application
processes interact with are defined by their \textit{operations} and a
\textit{specification}~\cite{herlihy1990linearizability,lynch1996da}. An
\textit{operation} comprises pairs of \textit{invocations}, specifying the
operations and their arguments, and matching \textit{responses}, containing
return values. The specification is a prefix-closed set of sequences of
invocation-response pairs defining the service's correct behavior in the absence
of concurrency. A sequence $S$ in specification $\spec$ defines a total order
over its operations, denoted $<_S$.

Several services can be composed into a composite service by combining their
specifications as the set of all
interleavings of the original services' specifications. Notably, this means a
service composed of constituent services that support transactions include those
transactional operations but \textit{does not} support transactions across its
constituent services. In the results below, we assume the processes interact
with an arbitrary (possibly composite) service.

\subsection{Consistency Models}
\label{sec:consistency:consistency}

A \textit{consistency model} specifies the possible responses a
service may return in the face of concurrent operations. Before we define our
new consistency models, we must define four preliminaries. \newtext{For ease of presentation,
two of our definitions, conflicts and reads-from, assume a key-value store interface. While these definitions could be
made general, we leave precisely defining them for other interfaces to future work.}

\noindentparagraph{Complete operations.} Given an execution $\alpha$, we say an
operation is \textit{complete} if its invocation has a matching response in
$\alpha$. We denote $\complete(\alpha)$ as the maximal subsequence of $\alpha$
comprising only complete operations~\cite{herlihy1990linearizability}.

\noindentparagraph{Conflicting operations.} \newtext{Given read-write transaction $W$,
we say a read-only transaction $R$ \textit{conflicts} with $W$ if $W$ writes a key that $R$ reads.
Given an execution $\alpha$, we denote the set of read-only transactions in $\alpha$ that conflict with $W$ as $\conflicts_\alpha(W)$. We define conflicts and $\conflicts_\alpha(w)$ analogously for non-transactional reads and writes.
}

\noindentparagraph{Real-time order.} Two actions in an execution $\alpha$ are
ordered in real
time~\cite{herlihy1990linearizability,papadimitriou1979serializability}, denoted
$\pi_1 \rt_\alpha \pi_2$, if and only if $\pi_1$ is a response, $\pi_2$ is an
invocation, and $\pi_1$ precedes $\pi_2$ in $\alpha$.

\noindentparagraph{Causal order.} Two actions are causally
related~\cite{ahamad1995causal,lloyd2011cops,lloyd2013eiger,akkoorath2016cure,lamport1978clocks,lamport1979sequential}
in an execution $\alpha$, denoted $\pi_1 \caused_\alpha \pi_2$ if any of the
following hold: (1) \textit{Process order:} $\pi_1$ precedes $\pi_2$ in a
process's sub-execution; (2) \textit{Message passing:} $\pi_1$ is a
$\sendto_{ij}(m)$ and $\pi_2$ is its corresponding $\receive_{ij}(m)$; (3)
\newtext{\textit{Reads from:} $\pi_1$ is operation $o_1$'s response, $\pi_2$ is $o_2$'s
invocation, and $o_2$ reads a value written by $o_1$;} or (4) \textit{Transitivity:} there exists
some action $\pi_3$ such that $\pi_1 \caused_\alpha \pi_3$ and $\pi_3
\caused_\alpha \pi_2$.

\subsection{\rss{} and \rs{}}
\label{sec:consistency:definitions}

We now define our new consistency models, \rsslong{} and \rslong{}. Their
definitions are nearly identical but because supporting transactions has
significant practical implications, we
distinguish between the transactional and non-transactional versions.

Intuitively, \rss{} (\rs{}) guarantees a total order of transactions
(operations) such that they respect causality. \newtext{Further, like prior ``regular'' models~\cite{lamport1986interprocess,shao2011mwregularity,viotti2016conssurvey}, reads must return a value at least as recent as the most recently completed, conflicting write.}

\noindentparagraph{\RSSlong{}.} Let $\mathcal{T}$ be the set of all transactions
and $\mathcal{W} \subseteq \mathcal{T}$ be the set of read-write transactions.
An execution $\alpha_1$ satisfies \textit{\rss{}} if it can be extended to
$\alpha_2$ by adding zero or more responses such that there exists a sequence $S
\in \spec$ where (1) $S$ is equivalent to $\complete(\alpha_2)$; (2) for all pairs
of transactions $T_1,T_2 \in \mathcal{T}$, $T_1 \caused_{\alpha_1} T_2 \implies T_1 <_S T_2$;
and (3) \newtext{for all read-write transactions $W \in \mathcal{W}$ and transactions $T
\in \conflicts_{\alpha_1}(W) \cup \mathcal{W}$, $W \rt_{\alpha_1} T \implies W <_S T$.}

\noindentparagraph{\RSlong{}.} Let $\mathcal{O}$ be the set of all operations and
$\mathcal{W} \subseteq \mathcal{O}$ be the set of writes. An execution
$\alpha_1$ satisfies \textit{\rs{}} if it can be extended to $\alpha_2$ by
adding zero or more responses such that there exists a sequence $S \in \spec$
where (1) $S$ is equivalent to $\complete(\alpha_2)$; (2) for all pairs of operations
$o_1,o_2 \in \mathcal{O}$, $o_1 \caused_{\alpha_1} o_2 \implies o_1
<_S o_2$; and (3) \newtext{for all writes $w \in \mathcal{W}$ and operations $o \in
\conflicts_{\alpha_1}(w) \cup \mathcal{W}$, $w \rt_{\alpha_1} o \implies w <_S o$.}

\subsection{\rss{} and \rs{} Maintain Application Invariants}
\label{sec:consistency:proof}


This section presents a condensed version of the proof. For brevity, we assume
here processes do not fail and all operations finish. The full proof is in
Appendix~\ref{sec:proof}.

\noindentparagraph{Preliminaries.} The results below reason about an
application's invariants, which are assertions about its states. Formally, we
say a state is \textit{reachable} in application $A$ if it is the final state of
some execution of $A$. An \textit{invariant} $\invariant{A}$ is a predicate that
is true for all of $A$'s reachable states~\cite{lynch1996da}.

In the proofs below, it will be convenient to focus on the actions within an
execution. Given an execution $\alpha$, its \textit{schedule}, $\sched(\alpha)$,
is the subsequence of just its actions.

\begin{figure}[!t]
\centering
\includegraphics[width=0.7\linewidth,page=3]{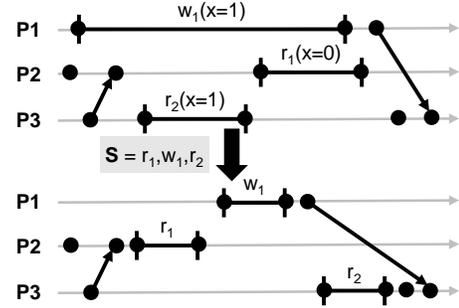}
\caption{Example transformation from an \rss{} execution to a strictly
  serializable one. Lemma~\ref{thm:executions:rss} proves such a transformation
  is possible with any \rss{} execution.}
\label{fig:transform-example}
\end{figure}

\noindentparagraph{Proof intuition.} Our main results follows from two
observations. First, Lemma~\ref{thm:executions:rss} shows we can transform an
execution in which the operations respect \rss{} into an execution in which they
respect strict serializability without reordering any actions at any of the
processes. Figure~\ref{fig:transform-example} shows an example. The key insight
is that both \rss{} and strict serializability guarantee equivalence to a
sequence in the service's specification, which by definition is strictly
serializable. Second, Theorem~\ref{thm:invariants:rss} shows that the final
states of the two executions related by Lemma~\ref{thm:executions:rss} are
identical. It follows that invariants that hold in the first execution also hold
in the second.

\begin{lem}
  \label{thm:executions:rss}
  Suppose $\alpha$ is an execution of application $A$ that satisfies \rss{}.
  Then there is an equivalent execution $\beta$ of $A$ that satisfies strict
  serializability.
\end{lem}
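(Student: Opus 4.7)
The plan is to take the sequence $S$ that witnesses \rss{} for $\alpha$ and to reuse it unchanged as the strict-serializability witness for a carefully chosen equivalent execution $\beta$. Unpacking the \rss{} hypothesis yields an extension $\alpha_2$ of $\alpha$ (adding only responses) together with a sequential $S \in \spec$ equivalent to $\complete(\alpha_2)$ that respects causal order and the write real-time constraints. The only missing ingredient for strict serializability is that ordinary real-time order in $\beta$ agree with $<_S$; the entire task therefore reduces to reinterleaving $\alpha_2$'s actions so that, for every $T_a <_S T_b$, the response of $T_a$ lands before the invocation of $T_b$, all without disturbing any per-process sub-execution.

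Concretely, I would define a partial order $\prec$ on the actions of $\alpha_2$ with three generators: (i) process order, preserving each $\alpha_2|P_i$ intact; (ii) message passing, placing every $\sendto_{ij}(m)$ before its matching $\receive_{ij}(m)$; and (iii) linearization, placing the response of $T_a$ before the invocation of $T_b$ whenever $T_a <_S T_b$ for complete operations $T_a, T_b$. Any topological sort of $\prec$, with states filled in deterministically from the processes and the service, is $\sched(\beta_2)$; removing the responses that $\alpha_2$ added to $\alpha$ yields $\beta$. By (i), $\beta|P_i = \alpha|P_i$ for every $i$, so $\beta$ is equivalent to $\alpha$.

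Verifying the two required properties of $\beta$ is then mostly mechanical. $\beta$ is a genuine execution of $A$: each process's sub-execution is identical to its sub-execution in $\alpha$, so all local transitions remain enabled; (ii) preserves message well-formedness; and (iii) together with $S \in \spec$ ensures the service observes complete operations in a specification-legal order. For strict serializability, $S$ itself serves as the witness: $S \in \spec$ is equivalent to $\complete(\beta_2) = \complete(\alpha_2)$, and any $T_1 \rt_\beta T_2$ is forced by (iii) to satisfy $T_1 <_S T_2$.

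The main obstacle is showing $\prec$ is acyclic, which is what lets the topological sort exist. A cycle using only process-order and message-passing edges would already contradict the partial-order property of $\caused_{\alpha_1}$, so any cycle must traverse at least one linearization edge. Consider the operations $T_{i_1}, T_{i_2}, \ldots, T_{i_k}, T_{i_1}$ visited in order around such a cycle. Each consecutive pair is related either by a linearization edge, giving $T_{i_j} <_S T_{i_{j+1}}$ directly, or by an intervening chain of causal edges, giving $T_{i_j} \caused_{\alpha_1} T_{i_{j+1}}$ and hence $T_{i_j} <_S T_{i_{j+1}}$ by the \rss{} guarantee. Concatenating around the cycle yields $T_{i_1} <_S \cdots <_S T_{i_1}$, contradicting totality of $<_S$. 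Thus $\prec$ admits a linear extension and the construction goes through.
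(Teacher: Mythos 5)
Your proposal is correct and follows essentially the same strategy as the paper's proof: reuse the \rss{} witness $S$ and reinterleave $\alpha$'s actions while preserving every per-process sub-execution, so that the service interactions become sequential in $<_S$ order and $S$ itself certifies strict serializability. The only difference is presentational---you obtain the new schedule as a topological sort of an explicit constraint graph and prove acyclicity, whereas the paper defines the total order directly (ranking each action by the $<_S$-maximal invocation or response that causally precedes it, breaking ties by the original order) and then verifies no per-process pair is swapped; both arguments turn on the same contradiction, namely that a violation would force some operation to precede itself under $<_S$ via causality.
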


\begin{proof}
  The proof proceeds in two steps. First, we construct a schedule $\beta^\prime$
  from $\alpha$'s schedule $\alpha^\prime$ without reordering any
  actions at any of the processes. Second, we construct the execution $\beta$
  from $\beta^\prime$ by inserting the states.

  \textit{Step 1.} Since $\alpha$ satisfies \rss{}, there exists a sequence
  $S \in \spec$ such that $<_S$ respects $\caused_{\alpha}$ and thus
  $\caused_{\alpha^\prime}$. To get $\beta^\prime$, we reorder $\alpha^\prime$
  such that each action is ordered after the maximal (as defined by $<_S$)
  invocation or response action that causally precedes it. To do so, we define
  three relations.
  
  First, let $\pi_1 \prec \pi_2$ if there is some invocation or response $\pi_3$
  that causally precedes $\pi_2$ and that is strictly greater (by $<_S$) than all
  invocations and responses that causally precede $\pi_1$. Second, let $\pi_1
  \equiv \pi_2$ if $\pi_1 \not\prec \pi_2$ and $\pi_2 \not\prec \pi_1$. Third,
  let $<_{\alpha^\prime}$ be the total order of actions defined by
  $\alpha^\prime$. Then $\beta^\prime$ is the schedule found by ordering the
  actions such that $\pi_1 <_{\beta^\prime} \pi_2$ if and only if $\pi_1 \prec
  \pi_2$ or $\pi_1 \equiv \pi_2$ and $\pi_1 <_{\alpha^\prime} \pi_2$.

  We show $\alpha^\prime|P_i = \beta^\prime| P_i$ for all $P_i$ by
  contradiction, so assume some pair of actions $\pi_1,\pi_2$ from the same
  $P_i$ were reordered in $\beta^\prime$. Without loss of generality, assume
  $\pi_2$ is ordered before $\pi_1$ in $\alpha^\prime$ but the reverse is true
  in $\beta^\prime$. It is clear that $\pi_1 \not\equiv \pi_2$ because otherwise
  $\pi_1$ and $\pi_2$ would be ordered identically in $\alpha^\prime$ and
  $\beta^\prime$. Thus, it must be that $\pi_1 \prec \pi_2$.

  Since $\pi_1 \prec \pi_2$, there must be some invocation or response $\pi_3$
  that causally precedes $\pi_2$ and is greater than those that causally precede
  $\pi_1$. But since $\pi_1$ and $\pi_2$ are from the same process and $\pi_2
  <_{\alpha^\prime} \pi_1$ by assumption, $\pi_3 \caused_{\alpha^\prime} \pi_1$
  by the transitivity of $\caused_{\alpha^\prime}$, contradicting the strictness
  in the definition of $\pi_3$. Thus, $\alpha^\prime$ must be equivalent to
  $\beta^\prime$.

  \textit{Step 2.} To get the execution $\beta$ from $\beta^\prime$, we must
  define the processes' states. Since the order of each process's actions
  is the same in $\alpha^\prime$ and $\beta^\prime$, each process will proceed
  through the same sequence of states. \newtext{Thus, we can construct $\beta$'s states
  from the sequences of each process's states in $\alpha$.}

  To conclude, we show that $\beta$ satisfies the stated properties. Since
  $\alpha^\prime|P_i = \beta^\prime| P_i$ for all $P_i$, it is clear that $\alpha$ is equivalent to $\beta$. Further, because we only reordered the states and
  actions in $\alpha$ to get $\beta$, $\beta$ is clearly finite, and because
  $\caused_\alpha$ captures the sending and receiving of messages, $\beta$ is
  well-formed. Finally, since $S$ is a sequence of matching invocation-response
  pairs, the processes' interactions with the service in $\beta$ are sequential,
  not overlapping in real time. Thus, $\beta$ satisfies strict serializability.
\end{proof}

\begin{thm}
  \label{thm:invariants:rss}
  Suppose $\invariant{A}$ is an invariant that holds for any execution $\beta$
  of $A$ that satisfies strict serializability. Then $\invariant{A}$ also
  holds for any execution $\alpha$ of $A$ that satisfies \rss{}.
\end{thm}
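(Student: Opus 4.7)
The plan is to combine Lemma~\ref{thm:executions:rss} with the observation that equivalent executions yield identical application states, so every state reachable via an \rss{} execution is also reachable via a strictly serializable one; the invariant then transfers for free.

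I would first reduce the problem to final states. Because an invariant must hold on every reachable state, and reachable states are by definition final states of some (prefix) execution, it suffices to show that the final state of any finite \rss{} execution of $A$ coincides with the final state of some strictly serializable execution of $A$. To handle intermediate states, I would apply the argument to each finite prefix $\alpha'$ of $\alpha$. This requires a brief check that \rss{} is prefix-closed: given the witness extension $\alpha_2$ and sequence $S \in \spec$ for $\alpha$, I would argue that the corresponding restriction gives a valid witness for $\alpha'$, since $\complete$, $\caused$, and $\rt$ behave monotonically on prefixes and $\spec$ is prefix-closed by assumption.

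Next, fix an \rss{} execution $\alpha$ with final state $s$. Apply Lemma~\ref{thm:executions:rss} to obtain an equivalent strictly serializable execution $\beta$ of $A$, so $\alpha|P_i = \beta|P_i$ for every process $P_i$. Since an application state is an $n$-vector of process states, and each component is determined solely by the corresponding process's sub-execution, the final state of $\beta$ agrees with $s$ component-wise. Therefore $s$ is reachable via the strictly serializable execution $\beta$. By hypothesis $\invariant{A}$ holds on every reachable state of any strictly serializable execution of $A$, so $\invariant{A}(s)$ holds.

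I expect the main obstacle to be not the core argument, which is essentially immediate from the lemma, but the prefix-closure bookkeeping: carefully justifying that each intermediate reachable state of $\alpha$ can itself be treated as the final state of some shorter \rss{} execution to which Lemma~\ref{thm:executions:rss} applies. Everything else reduces to unfolding the definitions of equivalence and of application state, together with the definition of a reachable state.
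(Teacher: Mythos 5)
Your proposal is correct and matches the paper's proof: both apply Lemma~\ref{thm:executions:rss} to obtain an equivalent strictly serializable execution $\beta$ and observe that, since $\alpha|P_i = \beta|P_i$ for all $P_i$ and an application state is just the vector of process states, the final states coincide and the invariant transfers. The extra prefix-closure bookkeeping you anticipate is harmless but unnecessary in the paper's treatment, since $\alpha$ is quantified over arbitrary \rss{} executions and every reachable intermediate state is the final state of such a (prefix) execution.
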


\begin{proof}
  Let $\alpha$ be an arbitrary execution of $A$ that satisfies \rss{}. We must
  show that $\invariant{A}$ is true for the final state $s$ of $\alpha$.

  By Lemma~\ref{thm:executions:rss}, there is an equivalent execution $\beta$
  that satisfies strict serializability. Let $s^\prime$ be the final state of
  $\beta$. Because $\alpha|P_i = \beta|P_i$ for all $P_i$, it is easy to see
  that $s^\prime = s$. By assumption, $\invariant{A}$ is true of $s^\prime$,
  so $\invariant{A}$ is also true of $s$.
\end{proof}

We prove similar results for \rs{} and linearizability in
Appendix~\ref{sec:proof}.

\section{Practical Implications}
\label{sec:discussion}

Lemma~\ref{thm:executions:rss} shows we can transform any \rss{} execution into
an equivalent strictly serializable one. Theorem~\ref{thm:invariants:rss} shows
this is sufficient for \rss{} to maintain application invariants.

While this transformation preserves the order of each process's actions, however, the order of causally
unrelated actions, e.g., the order of Alice and Bob's Web requests handled by
different servers, may not be. In fact, this is why anomalies like
$\anomaly{2}$ and $\anomaly{3}$ are possible with \rss{} and \rs{}.

\newtext{
Further, \rss{} (\rs{}) is defined with respect to a potentially composite
service. The results above thus assume a set of distinct services can together guarantee \rss{} (\rs{}), even if processes interact with multiple services, but they do not specify how this is achieved. }

\newtext{
In the remainder of this section, we first describe how
multiple services can be composed such that their composition guarantees \rss{} (\S\ref{sec:discussion:composition}). We then discuss supporting applications whose processes interact via message passing (\S\ref{sec:discussion:mp}).  For ease of exposition, the
discussion focuses on \rss{} but applies equally to \rs{}.
}

\subsection{Composing \rss{} Services}
\label{sec:discussion:composition}

A set of \rss{} services must
always ensure the values returned by their transactions reflect a
global total order spanning all services. This is straightforward with
strictly serializable services because real-time order is
universal across services.

\newtext{
With \rss{}, however, some pairs of transactions, such as causally unrelated read-only
transactions, may be reordered with respect to real time. As a result, the states observed by processes as they interact with  multiple services can form cycles (e.g., $P_1$ reads $x=1$ then $y=0$ while $P_2$ reads $y=1$ then $x=0$), precluding a total order.
Service builders thus must implement one additional mechanism,
\textit{\rtbarriers{}}, to allow a set of \rss{} services to globally guarantee \rss{}.
}

\newtext{
A \rtbarrier{} $f_x$ at \rss{} service $x$ provides the
following guarantee: for each pair of transactions $T_1$ and $T_2$ at service $x$, if $T_1 \caused f_x$ and $f_x \rt T_2$, then $T_1 <_{S_x} T_2$, where $<_{S_x}$ is the total order of $x$'s transactions.  Every transaction that causally precedes the \barrier{} must be serialized before any transaction that follows the \barrier{} in real time. Intuitively, a process that issues a \rtbarrier{} ensures all other processes observe state that is at least as new as the state it observed. Thus, if each process issues a \barrier{} at its previous service before interacting with another, the \barriers{} prevent cycles in the states observed by multiple processes as they cross service boundaries. (We discuss the service-specific implementation of \rtbarriers{} for \spannerrss{} and \gryffrs{} in Sections~\ref{sec:spannerrss} and~\ref{sec:gryffrs}.)
}

\begin{figure}[t]
  \centering
  \setlength{\tabcolsep}{0.5\tabcolsep}
\begin{tabular}{@{}l l@{}}
\toprule
\textbf{Function} & \textbf{Description}
\\ \midrule
$\textsc{RegisterService}(\text{name}, \text{\barrier{}\_f})$ & Register new service.
\\ \addlinespace[1mm]
$\textsc{UnregisterService}(\text{name})$ & Unregister service.
\\ \addlinespace[1mm]
$\textsc{StartTransaction}(\text{name})$ & Start txn at service.
\\ \bottomrule
\end{tabular}

  \caption{\newtext{\librss{} Interface. \librss{} helps \rss{} service builders implement composition by invoking the necessary \rtbarriers{}.}}
  \label{fig:librss-interface}
\end{figure}

\newtext{
Although the need to implement a \barrier{} for each \rss{} service places an additional burden on service builders, using \rtbarriers{} to guarantee a global total order across services
does not require changes to applications. The client libraries of the \rss{} services can insert \rtbarriers{} as necessary at run time. To this end, we implement a meta-library, \librss{}, to aid service builders with composition. Figure~\ref{fig:librss-interface} shows its interface. 
}

\newtext{
At initialization, an \rss{} service's client library registers itself with the \librss{} meta-library,
passing it a unique name and a callback that implements its \barrier{}.
The meta-library keeps an in-memory registry of all \rss{} services. During execution, the client library
must simply notify the meta-library before starting a new transaction.
}

\newtext{
With
these calls, the meta-library implements composition without intervention from application programmers. Every time an \rss{} client starts a
transaction, the meta-library checks if the transaction is at the same service as the previous one, if any. If not, \librss{} invokes the prior service's \barrier{}. In Appendix~\ref{sec:proof:composition},
we prove that if each service's \rtbarrier{} provides the guarantee described above and \librss{} follows this simple protocol, then the composition of a set of \rss{} services globally guarantees \rss{}.
}

\subsection{Capturing Causality}
\label{sec:discussion:mp}

\newtext{
A meta-library that issues \rtbarriers{} is sufficient to guarantee \rss{} for applications whose processes
interact solely through a set of \rss{} services. But for those whose processes also interact through message passing,
an \rss{} service must ensure causality is respected across these interactions.
}

\newtext{
For instance, recall our photo-sharing application and assume Alice is using her browser, which sends requests to Web servers that interact with an \rss{} key-value store. If one server reads and transmits a photo to Alice's browser and the browser subsequently reads the same photo via a second server, the key-value store must ensure causality is respected across the two transactions---the second must not return \texttt{null}. But if the store is unaware of the causal constraint between the two read-only transactions, then this may not be guaranteed.}

\newtext{
One approach is to require application processes to issue a \barrier{}
before such out-of-band interactions. For instance, the Web server must issue one before transmitting the response back to Alice's browser.
Depending on the structure of the
application, however, this may be inefficient.
}

\newtext{
A better approach is to use a
context propagation framework~\cite{mace2018baggageContexts} to pass metadata
between the interacting processes. This would ensure the second Web server has the
necessary metadata to convey causality before it interacts with the \rss{}
store. This context must also include the name of the last \rss{} service the process interacted with, so \librss{} can correctly implement composition.
}


\section{\spannerrss{}}
\label{sec:spannerrss}


\spanner{} is a globally distributed, transactional
database~\cite{corbett2013spanner}. It uses synchronized clocks to guarantee
strict serializability~\cite{papadimitriou1979serializability}. While
\spanner{} is designed to provide low-latency read-only (RO) transactions most
of the time, they may block, increasing tail latency significantly. \newtext{Such increases in the tail latency of low-level services can translate into increases in common-case, user-visible latency~\cite{dean2013tail}.}

Our variant of \spanner{}'s protocol, \spannerrss{}, improves tail latency
for RO transactions by relaxing the constraints on read-only transactions in accordance with \rss{}. \newtext{(We prove it provides \rss{} in Appendix~\ref{sec:correctness:spannerrss}.)}

\noindentparagraph{Spanner background.}
\spanner{} is a multi-versioned key-value
store. Keys are split across many shards, and shards are replicated using
Multi-Paxos~\cite{lamport1998paxos}. Clients atomically read and write keys at
multiple shards using transactions.

\spanner{}'s read-write (RW) transactions use two-phase locking~\cite{berstein1987ccbook} and a variant of two-phase
commit~\cite{gray1978dbos}. Each shard's Paxos leader serves as a participant or
coordinator. Further, using its TrueTime API, \spanner{} gives each
transaction a commit timestamp that is guaranteed to be between the
transaction's real start and end times.

During execution, clients acquire read locks and buffer writes. To commit, the client chooses a coordinator and sends its writes to the shards.
Each participant then does the following: (1) ensures the transaction still
holds its read locks, (2) acquires write locks, (3) chooses a prepare
timestamp, (4) replicates the prepare success, and (5) notifies the
coordinator. Assuming all participants succeed, the coordinator finishes
similarly: It checks read locks, acquires write locks, chooses the
transaction's commit timestamp, and replicates the commit success. Finally, the
coordinator releases its locks and sends the outcome to the client and
participants.

To guarantee strict serializability, each participant ensures its prepare
timestamp is greater than the timestamps of any previously committed or
prepared transactions. The coordinator chooses the commit timestamp similarly
but also ensures it is greater than the transaction's start time and greater
than or equal to all of the prepare timestamps. Combined with commit
wait~\cite{corbett2013spanner}, this ensures the transaction's commit timestamp
is between its start and end times.

\begin{figure}[!t]
\centering
\includegraphics[width=0.8\linewidth,page=4]{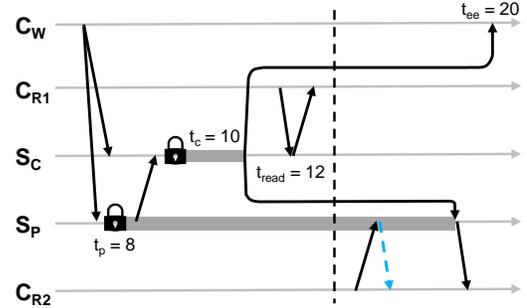}
\caption{Example execution where \spannerrss{}'s RO transaction returns before
  \spanner{}'s. (Replication is omitted.) Client $C_W$ is committing writes
  to two shards, $S_C$ and $S_P$; $S_C$ is the coordinator. $C_{R1}$ reads
  $C_W$'s writes before $C_W$'s transaction finishes. Strict
  serializability still mandates $C_{R2}$'s read also includes them. Conversely,
  with \spannerrss{}, $C_{R2}$'s read returns immediately (shown by
  blue dotted line).}
\label{fig:spanner-execution}
\end{figure}

\noindentparagraph{Strict serializability unnecessarily blocks ROs.}
Many workloads are dominated by
reads~\cite{shute2013f1,pang2019zanzibar,bronson2013tao}. Thus, \spanner{} also
includes an optimized RO transaction protocol to make the majority of
transactions as fast as possible. \spanner{}'s RO transactions are strictly
serializable but unlike RW transactions, only require one round trip between a
client and the participant shards. As a result, RO transactions have
significantly lower latency than RW transactions.


RO transactions, however, must sometimes block to ensure strict serializability.
RO transactions in \spanner{} read at a client assigned timestamp $\tread =
\textit{TT.now.latest}$, which TrueTime guarantees is after
$\tro.\textit{start}$. When a read arrives at a shard with $\tread$ greater than
the prepare timestamp of some conflicting RW transaction $\trw$, it must block
until the shard learns if $\trw$ commits at some time $\tcommit$ or aborts.
Otherwise, $\tro$ risks violating strict serializability: if $\tcommit < \tread$
because $\trw.\textit{end} < \tro.\textit{start}$ ($\tcommit < \trw.\textit{end}
< \tro.\textit{start} < \tread$), then strict serializability mandates that
$\tro$ includes $\trw$'s writes.
Because they must potentially wait while a RW transaction executes two-phase
commit, blocked RO transactions can have significantly higher latency.

One potential optimization to improve RO transaction latency would be to include
the earliest client-side end time $\elb$ for each RW transaction. Then,
RO transactions could avoid blocking if $\tread < \elb$.  Unfortunately, strict
serializability disallows this optimization because it requires $\tro$ to
observe $\trw$ even before $\elb$ if there is some other RO transaction that finishes before $\tro$ and includes any of $\trw$'s writes. 

\newtext{
Figure~\ref{fig:spanner-execution} shows an example. Because $C_{R1}$'s read observes $C_W$'s RW transaction at $S_C$, strict serializability requires all future reads at both shards to include $C_W$'s writes. Thus, $C_{R2}$'s read must block until $C_W$'s RW transaction commits.}

\newtext{
In contrast, \rss{} allows this optimization. $C_{R1}$'s transaction only imposes a constraint on reads that causally follow it, so $C_{R2}$'s read may still return an older value. 
}

\begin{algorithm}[!t]
  \caption{\spannerrss{} Client}
  \label{alg:spannerrss-client}
  \begin{algorithmic}[1]
  \State \Global $\tmin \gets 0$
  \Function{Client::ROTransaction}{$K$}
  \State $S \gets \Call{ShardLeaders}{K}$
  \State $\tread \gets \Call{TrueTime::Now.Latest}{}$
  \State \Send \Msg{ROCommit}{$K, \tread, \tmin$} to all $s \in S$
  \State \WaitUntil receive \Msg{ROFastReply}{$P_s, V_s$} from all
  $s \in S$
  \State $P, V \gets \bigcup_{s \in S} P_s, \bigcup_{s \in S} V_s$
  \State $\tsnap \gets \Call{CalculateSnapshotTS}{K, V}$
  \While{$\Call{CheckSnapshot}{P, \tsnap} \neq \text{COMMIT}$}
  \State \WaitUntil for \Msg{ROSlowReply}{$i, d, \tcommit, V^\prime$} from $s \in S$
  \State $P, V \gets \Call{UpdatePrepared}{P, V, i, d, \tcommit, V^\prime}$
  \EndWhile
  \State $\tmin \gets \max\left(\tmin, \tsnap\right)$
  \State \Return $\Call{ReadAtTimestamp}{V, \tsnap}$
  \EndFunction
  \Statex
  \Function{Client::CalculateSnapshotTS}{$K, V$}
  \State $\tsnap \gets 0$
  \For{$k \in K$}
  \State $V^\prime \gets \{(\tcommit, k^\prime, v) \in V : k = k^\prime\}$
  \State $t_{\text{earliest}} \gets \min_{(\tcommit,k^\prime,v) \in V^\prime} \tcommit$
  \State $\tsnap \gets \max\left(\tsnap, t_{\text{earliest}}\right)$
  \EndFor
  \State \Return $\tsnap$
  \EndFunction
  \Statex
  \Function{Client::CheckSnapshot}{$P, \tsnap$}
  \State $\tprepare^\prime \gets \min_{(i, \tprepare) \in P} \tprepare$
  \IfThenElse{$\tprepare^\prime \leq \tsnap$}{$d \gets \text{WAIT}$}{$d \gets \text{COMMIT}$}
  \State \Return $d$
  \EndFunction
\end{algorithmic}


\end{algorithm}

\noindentparagraph{\spannerrss{}.} \spannerrss{} is our variant of 
\spanner{} that improves tail RO transaction latency by often avoiding
blocking, even when there are conflicting RW transactions. Intuitively,
a RO transaction can avoid blocking by observing a state of the database
as of some timestamp $\tsnap$ that is before its read timestamp $\tread$ if it
can infer the state satisfies \rsslong{}.

Observing this state from before $\tread$ is safe under \rss{} when three
conditions are met: (1) there are no unobserved writes from a \newtext{conflicting} RW transaction
that could have ended before $\tro$ started; (2) there are no causal constraints
that require $\tro$ to observe a write at a timestamp later than $\tsnap$; and
(3) \newtext{its results are consistent with a sequential execution of transactions.}

\begin{algorithm}[!t]
  \caption{\spannerrss{} Shard}
  \label{alg:spannerrss-shard}
  \begin{algorithmic}[1]
  \State \Global $\mathcal{P} \gets \left\{\left(i, \ell, \tprepare, \elb,
      W\right),\ldots\right\}$ \Comment{Prepared txns}
  \State \Global $\mathcal{D} \gets \left\{\left(\tcommit, k, v\right),\ldots\right\}$ \Comment{Database}
  \Function{Shard::ROCommitRecv}{$c, K, \tread, \tmin$}
  \State $\WaitUntil \text{ until } \tread \leq \Call{Paxos::MaxWriteTS}{}$
  \State $P \gets \left\{(i, \ell, \tprepare, \elb, W) \in \mathcal{P} \mid R \cap W
    \neq \emptyset \land \tprepare \leq \tread \right\}$
  \State $B \gets \left\{(i, \ell, \tprepare, \elb, W)
    \in P \mid \tprepare \leq \tmin \lor \elb \leq \tread \right\}$
  \State \WaitUntil until all $p \in B$ commit or abort
  \State $V \gets \Call{ReadAtTimestamp}{\mathcal{D}, K, \tread}$
  \State $Q \gets \{(i,\tprepare) : (i, \ell, \tprepare, \elb, W) \in P \setminus B \}$
  \State \Send \Msg{ROFastReply}{$Q, V$} to $c$
  \While{$P \neq \emptyset$}
  \State \WaitUntil until some $p \in P$ commits or aborts
  \If{$p = (i, \ell, \tprepare, \elb, W)$ commits at $\tcommit$}
  \State $V \gets \Call{ReadAtTimestamp}{\mathcal{D}, K \cap W, \tcommit}$
  \State \Send \Msg{ROSlowReply}{$i, \text{COMMIT}, \tcommit, V$} to $c$
  \Else
  \State \Send \Msg{ROSlowReply}{$i, \text{ABORT}, 0, \emptyset$} to $c$
  \EndIf
  \State $P \gets P \setminus \{p\}$
  \EndWhile
  \EndFunction
\end{algorithmic}


\end{algorithm}

\spannerrss{} ensures each of these conditions are met. To ensure (1), RW
transactions include a client-side earliest end time $\elb$. To ensure (2), RO
transactions include a minimum read time $\tmin$.  Finally, to ensure (3),
before completing a RO transaction, clients ensure all returned values reflect
precisely the state of the database at $\tsnap$.
Algorithms~\ref{alg:spannerrss-client} and~\ref{alg:spannerrss-shard} show the
full protocol.

\textit{Estimating, including, and enforcing $\elb$ for RW transactions.} Each
RW transaction includes an earliest client-side end time $\elb$. The client
estimates $\elb$ and includes it when it initiates two-phase commit (not
shown). The shards then store $\elb$ while the transaction is prepared but not
yet committed or aborted (Alg.~\ref{alg:spannerrss-shard}, line 1). The client later ensures $\elb$ is
less than the actual client-side end time by waiting until
$\elb < \textit{TT.now.earliest}$.



\textit{Enforcing a minimum timestamp for RO transactions.} Each RO
transaction includes a minimum read timestamp $\tmin$ to ensure it obeys any necessary
causal constraints. Each client tracks this minimum timestamp and updates it
after every transaction to include new constraints. After a RW
transaction, it is set to the transaction's commit timestamp (not shown). After
a RO transaction, it is set to be at least the transaction's snapshot time
(Alg.~\ref{alg:spannerrss-client}, line 12). $\tmin$ thus captures the causal constraints on this RO
transaction; it must observe a state that is at least as recent as its last
write and \newtext{any writes the client previously observed.}

\textit{Avoiding blocking on shards with $\elb$ and $\tmin$.} Using $\elb$ and
$\tmin$, shards can infer when it is safe for a RO transaction to skip
observing a prepared-but-not-committed RW transaction (Alg.~\ref{alg:spannerrss-shard}, line 6). It
is safe unless the prepared transaction either must be observed due to a causal
constraint ($\tprepare \leq \tmin$) or could have ended before the RO
transaction began ($\elb \leq \tread$).

\textit{Reading at $\tsnap$.} Although each shard can now infer when a RO
transaction can safely skip a prepared RW transaction, the values returned by
multiple shards may not necessarily reflect a complete, consistent snapshot at
$\tsnap$. Thus, clients and shards take four additional steps to ensure a client
always returns a consistent snapshot.

First, as in \spanner{}, a shard waits to process a RO transaction until its
Paxos safe time is greater than $\tread$ (Alg.~\ref{alg:spannerrss-shard}, line
4)~\cite{corbett2013spanner}. As a result, all future Paxos writes, and thus all
future RW prepare timestamps, will be larger than $\tread$. Thus, the shard
ensures it is returning information that is valid until at least $\tread$. (The Paxos safe time at leaders can be advanced immediately if it is within the leader's lease.) Second, shards include the commit timestamps
$\tcommit$ for the returned values (Alg.~\ref{alg:spannerrss-shard}, lines 2, 8, and 10). Third, they return
the prepare timestamp $\tprepare$ for each skipped RW transaction with $t_p \leq
\tread$ (Alg.~\ref{alg:spannerrss-shard}, lines 9-10). (Because writes use locks, there
is at most one per key.) Fourth, when a skipped RW transaction commits, a shard
sends the commit timestamp and the written values in a slow path (Alg.~\ref{alg:spannerrss-shard},
lines 13-15).

Before returning, the client examines the commit and prepare timestamps to see
if the shards returned values that are all valid at some snapshot time.
Specifically, it sets $\tsnap$ to the earliest time for which it has a value for
all keys (Alg.~\ref{alg:spannerrss-client}, lines 15-20). Then, it sees if any prepared transactions have timestamps less than $\tsnap$ (Alg.~\ref{alg:spannerrss-client},
lines 22-23). If they all prepared after $\tsnap$, the RO transaction
returns immediately (Alg.~\ref{alg:spannerrss-client}, line 13).

If some transaction prepared before $\tsnap$, however, the client must wait for slow replies from
the shards (Alg.~\ref{alg:spannerrss-client}, lines 9-10). As the client learns of commits and aborts through
the slow replies, it moves transactions out of the prepared set (Alg.~\ref{alg:spannerrss-client}, line
11), updates the values it will return (if $\tcommit \leq \tsnap$), and
potentially advances the earliest prepared timestamp (Alg.~\ref{alg:spannerrss-client}, line 22). Note
that $\tsnap$ remains the same, so the latter continues until $\tsnap <
t^\prime_\textrm{p}$, which is guaranteed by the time the final slow reply is
received.

\textit{Performance discussion.} \spannerrss{}'s RO transaction latency is never
worse and often better than \spanner{}'s. When there are no conflicting RW
transactions, RO transactions in both will return consistent results at
$\tread$. When there are conflicting transactions, however, \spannerrss{} will
often send fast replies quickly while \spanner{} blocks. Further, the fast
replies let \spannerrss{} complete the RO transaction right away unless one
of the shards returns a value with a commit timestamp that is greater than the
prepared timestamp of a skipped RW transaction. Even then, the slow replies from \spannerrss{}'s shards will be sent at the same
time \spanner{} would unblock.

\subsection{\RTBarriers{}}
\label{sec:spannerrss:barrier}

\newtext{
As described above, to ensure the order of transactions reflects causality, a client tracks and enforces a minimum read timestamp $\tmin$. Using $\tmin$, a client ensures its next transaction will be ordered after any transaction that causally precedes it by ensuring the next transaction reflects a state of the database that is at least as recent as $\tmin$.
}

\newtext{
A \rtbarrier{} must provide a slightly stronger guarantee. It must ensure that all transactions that causally precede it are serialized before any transaction that follows it in real time, regardless of the latter transaction's originating client. While this is guaranteed for future RW transactions since they already respect their real-time order, the same is not true of future RO transactions. Thus, when executed at a client with a minimum read timestamp $\tmin$, a \barrier{} in \spannerrss{} must ensure that all future RO transactions reflect a state that is at least as recent $\tmin$.
}

\newtext{
To achieve this, \spannerrss{}'s \rtbarriers{} leverage the following observation: If $L$ is the maximum difference between $\tcommit$ and $\elb$ for any RW transaction, then a RO transaction that starts after $\tcommit + L$ will reflect all writes with timestamps less than or equal to $\tcommit$. After $\tcommit+L$, a RO transaction cannot skip reading a RW transaction with commit timestamp $\tcommit$ (since $\elb \leq \tcommit + L < \tread$).
}

\newtext{
As a result, \barriers{} in \spannerrss{} are simple. To ensure all future RO transactions reflect a state that is at least as recent as $\tmin$, a \barrier{} blocks until $\tmin + L < \textit{TT.now.earliest}$.}


\begin{figure*}[!t]
  \centering
  \begin{subfigure}[t]{0.3\linewidth}
    \includegraphics[height=1.5in]{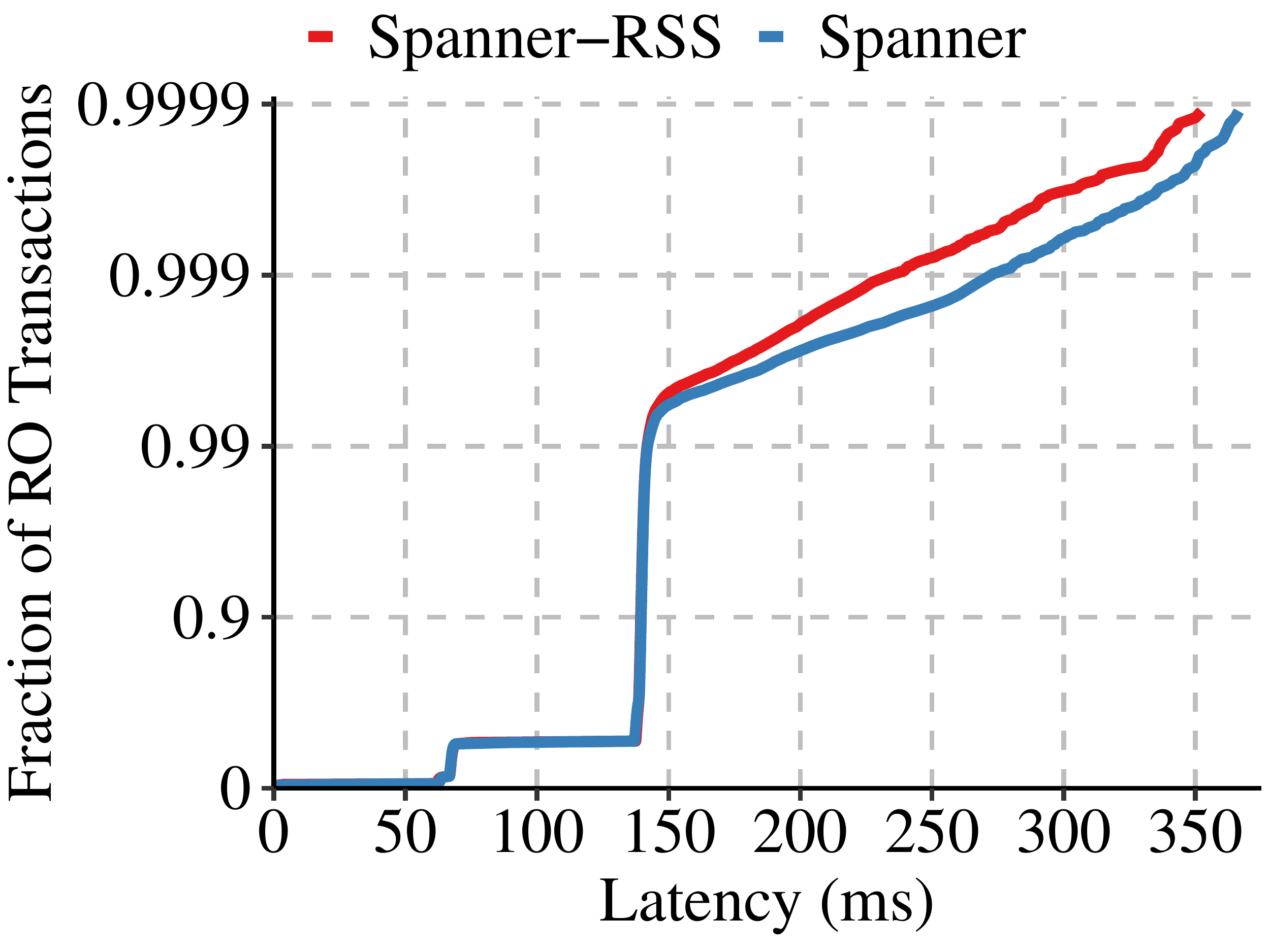}
    \caption{0.5 skew.}
    \label{fig:spanner-ro-log-lat-skew-0.5}
  \end{subfigure}
  \hspace{.2in}
  \begin{subfigure}[t]{0.3\linewidth}
    \centering
    \includegraphics[height=1.5in]{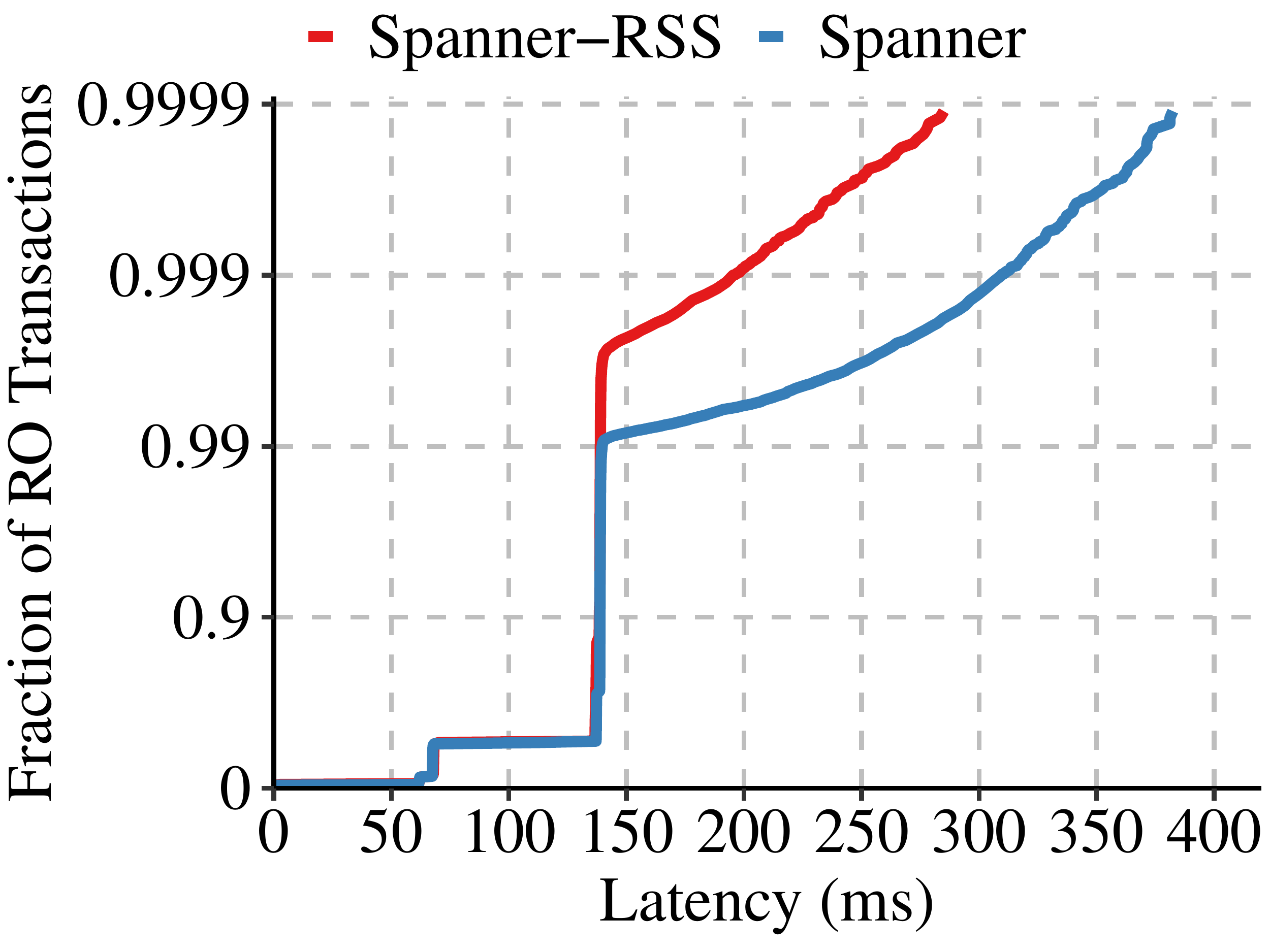}
    \caption{0.7 skew.}
    \label{fig:spanner-ro-log-lat-skew-0.7}
  \end{subfigure}
  \hspace{.2in}
  \begin{subfigure}[t]{0.3\linewidth}
    \centering
    \includegraphics[height=1.5in]{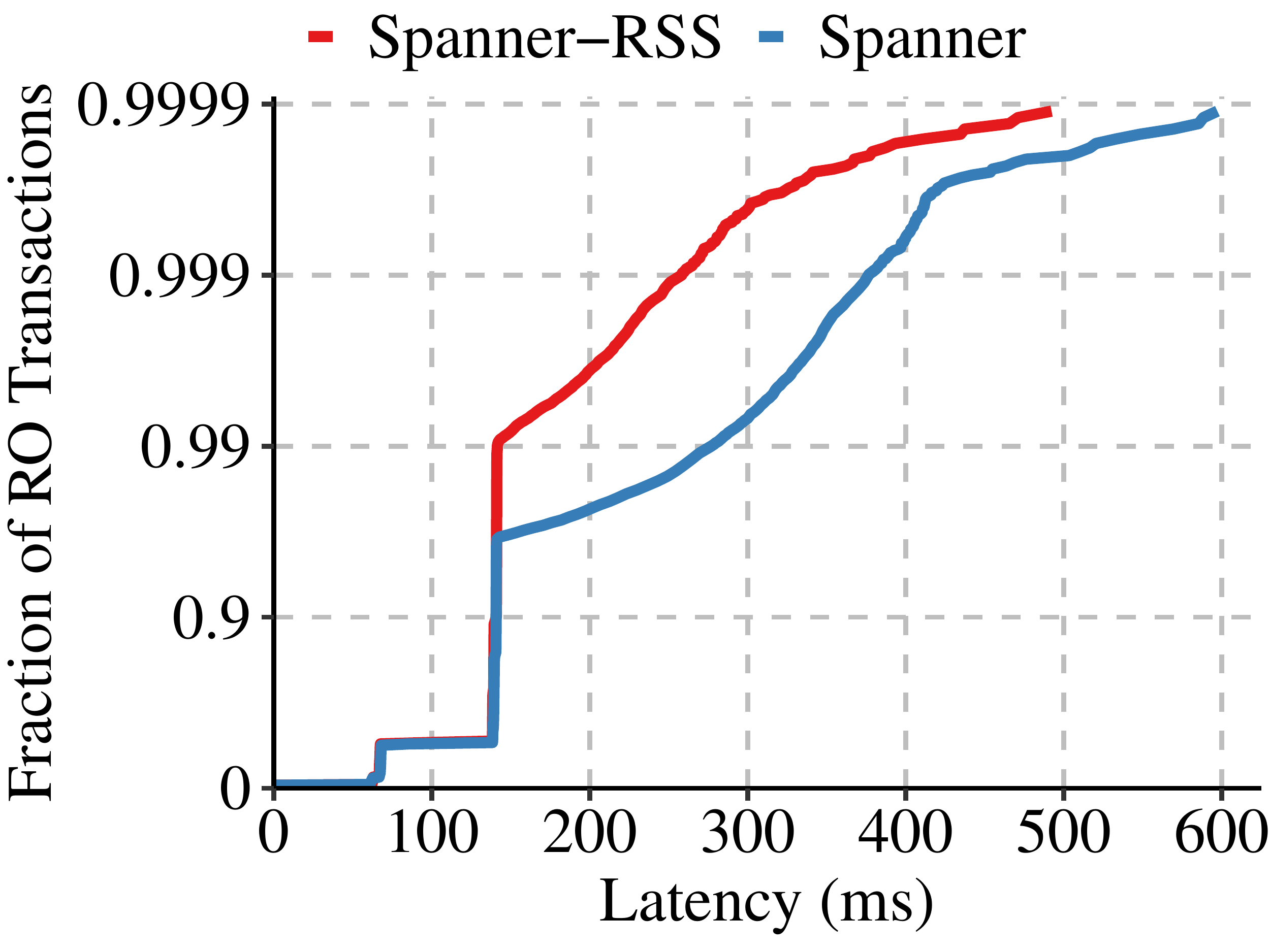}
    \caption{0.9 skew.}
    \label{fig:spanner-ro-log-lat-skew-0.9}
  \end{subfigure}
  \caption{\spannerrss{} offers better tail latency for RO transactions on
    Retwis. In contrast to \spanner{}'s, its RO transactions can often avoid
    blocking when there are concurrent, conflicting RW transactions.}
  \label{fig:spanner-ro-log-lat-skew}
\end{figure*}

\section{\spannerrss{} Evaluation}
\label{sec:eval:spanner}

Our evaluation of \spannerrss{} aims to answer two questions: Does \spannerrss{}
improve tail latency for read-only transactions
(\S\ref{sec:eval:spanner:latency}), and what performance overhead does
\spannerrss{}'s read-only transaction protocol impose
(\S\ref{sec:eval:spanner:throughput})?

We implement the \spanner{} and
\spannerrss{} protocols in C++ using TAPIR's experimental
framework~\cite{zhang2015tapir,zhang2018tapir}. Each shard is single-threaded.
The implementation reuses TAPIR's implementation of view-stamped
replication~\cite{oki1988vr} instead of Multi-Paxos~\cite{lamport1998paxos} but is otherwise faithful. Our code and experiment scripts are available online~\cite{spannerrssrepo}.

\newtext{The implementation includes two optimizations not presented in
Section~\ref{sec:spannerrss}: First, a skipped, prepared transaction's writes are
returned in the fast path instead of the slow path, allowing the client to
return faster in some cases, e.g., if it learns the transaction already committed at a different shard.}

\newtext{Second, when a transaction blocks as part of wound-wait~\cite{rosenkrantzSystemLevel1978}, it estimates how long it blocked and advances its local estimate of $\elb$ by that amount. The coordinator then aggregates the shards' $\elb$ values and returns the maximum to the client, which waits until it has passed. This reduces the chance a RO transaction will be blocked by a RW transaction whose $\elb$ has become inaccurate because of lock contention.}

\newtext{Unless otherwise specified, experiments ran on Amazon's EC2 platform~\cite{ec2}.
Each \texttt{t2.large} instance has 2 vCPUS and \SI{8}{\giga\byte} RAM.
We use three shards with three replicas each. One shard leader is in each of California, Virginia, and
Ireland, and the replicas are in the other two data centers. The round trip times are as
follows: CA-VA is \SI{62}{\milli\second}, CA-IR is \SI{136}{\milli\second}, and
VA-IR is \SI{68}{\milli\second}. Our emulated TrueTime error is
\SI{10}{\milli\second}, the p99.9 value observed in
practice~\cite{corbett2013spanner}.}

\newtext{To calculate $\elb$ for RW transactions, clients use the round-trip latencies above. In our implementation, clients use them to calculate, for each set of
participants, the coordinator choice that yields the minimum commit latency. It
stores these choices and the commit latencies, and the latter is used to
calculate $\elb$. To avoid increasing RW latency, the round-trip latencies above are the minimum observed, and clients calculate $\elb$
with respect to \textit{TT.now.earliest}.}

Each client executes the Retwis workload~\cite{retwis} over a database of ten
million keys and values. Retwis clients execute transactions in the following
proportions: 5\% add-user, 15\% follow/unfollow, 30\% post-tweet, and 50\%
load-timeline. The first three are RW transactions, and the last is RO. We
generate keys according to a Zipfian distribution~\cite{hormann1996rejinv} with skews ranging from 0.5 to 0.9. \newtext{Such read-write ratios and skews are representative of real workloads~\cite{chen2020hotring,yang2020twitter}.}

\newtext{Unless otherwise specified, we generate load with a fixed number of partly open clients~\cite{schroeder2006openclosed}. Partly open clients use three parameters to model user behavior: sessions arrive at rate $\lambda$ according to a Poisson process; after each transaction in a session, the client chooses to stay with probability $p$; and if it does, it waits for a think time $H$. The clients use a separate $\tmin$ for each session. We set $H=0$ since this yields the worst performance for \spannerrss{}. Further, we set $p=0.9$, so the average session length is ten transactions, matching measurements from real deployments~\cite{shute2013f1}.
Finally, for each workload, we set $\lambda$ such that the offered load is 70-80\% of the maximum throughput. Each data center contributes an equal fraction of the load.}

\subsection{\spannerrss{} Reduces RO Tail Latency}
\label{sec:eval:spanner:latency}

We first compare the latency distributions for RO and RW transactions with
\spanner{} and \spannerrss{} as skew varies. \spannerrss{}'s RO
transactions have lower latency than \spanner{}'s due to less blocking
during conflicting RW transactions. These improvements do not
harm RW transaction latency because \spannerrss{}'s protocol simply
requires passing around an extra timestamp with RW transactions.

Figure~\ref{fig:spanner-ro-log-lat-skew} compares the tail latency distributions
of RO transactions at three skews. (We omit the distributions for RW
transactions after verifying they are identical.) \spannerrss{} improves
RO tail latency in all cases. When contention is low
(Figure~\ref{fig:spanner-ro-log-lat-skew-0.5}), \spanner{}'s RO transactions
offer low tail latency; \newtext{up to p99, their latency is bounded only by wide-area
latency. Above this, however, it starts increasing. At p99.9,
\spannerrss{} offers a \SI{14}{\percent} (\SI{38}{\milli\second})
reduction in tail latency.}

\newtext{\spannerrss{} offers larger improvements at higher skews. In Figure~\ref{fig:spanner-ro-log-lat-skew-0.7}, latency consistently
decreases by at least \SI{76}{\milli\second} above p99.5. This is up to a
\SI{45}{\percent} reduction; at p99.9, it is a \SI{37}{\percent}
(\SI{114}{\milli\second}) reduction.}

\newtext{With a skew of 0.9, (Figure~\ref{fig:spanner-ro-log-lat-skew-0.9}), \spanner{}'s RO transaction latency starts increasing at lower percentiles. As a result, \spannerrss{} reduces p99 RO latency by \SI{49}{\percent} (\SI{135}{\milli\second}). With high contention, however, improvements farther out on the tail (e.g., above p99.95) are more inconsistent. Increased
waiting by RW transactions for wound-wait~\cite{rosenkrantzSystemLevel1978} make
the earliest end time estimates less accurate. Further, each session's $\tmin$ advances more rapidly and in turn, increases the
chance a RO transaction must block.}

\subsection{\spannerrss{} Imposes Little Overhead}
\label{sec:eval:spanner:throughput}

We now compare the two protocol under heavy load to quantify the overhead
\spannerrss{} incurs from its additional protocol complexity. Because the number
and size of its additional messages is small, \spannerrss{}'s performance should
be comparable to \spanner{}'s.

\newtext{To stress the implementations, we use a uniform
workload, set the TrueTime error to zero, and place all shards in one data center.
Since it does not depend on wide-area latencies, we ran this experiment on
CloudLab's Utah platform~\cite{duplyakin2019cloudlab}. Each \texttt{m510} machines has
8 physical cores, \SI{64}{\giga\byte} RAM, and a \SI{10}{\giga\bit} NIC.
Inter-data-center latency is less than \SI{200}{\micro\second}. We
use eight shards, so each leader has a dedicated physical CPU on one server.}

\begin{figure}[!t]
    \centering
    \includegraphics[height=1.5in]{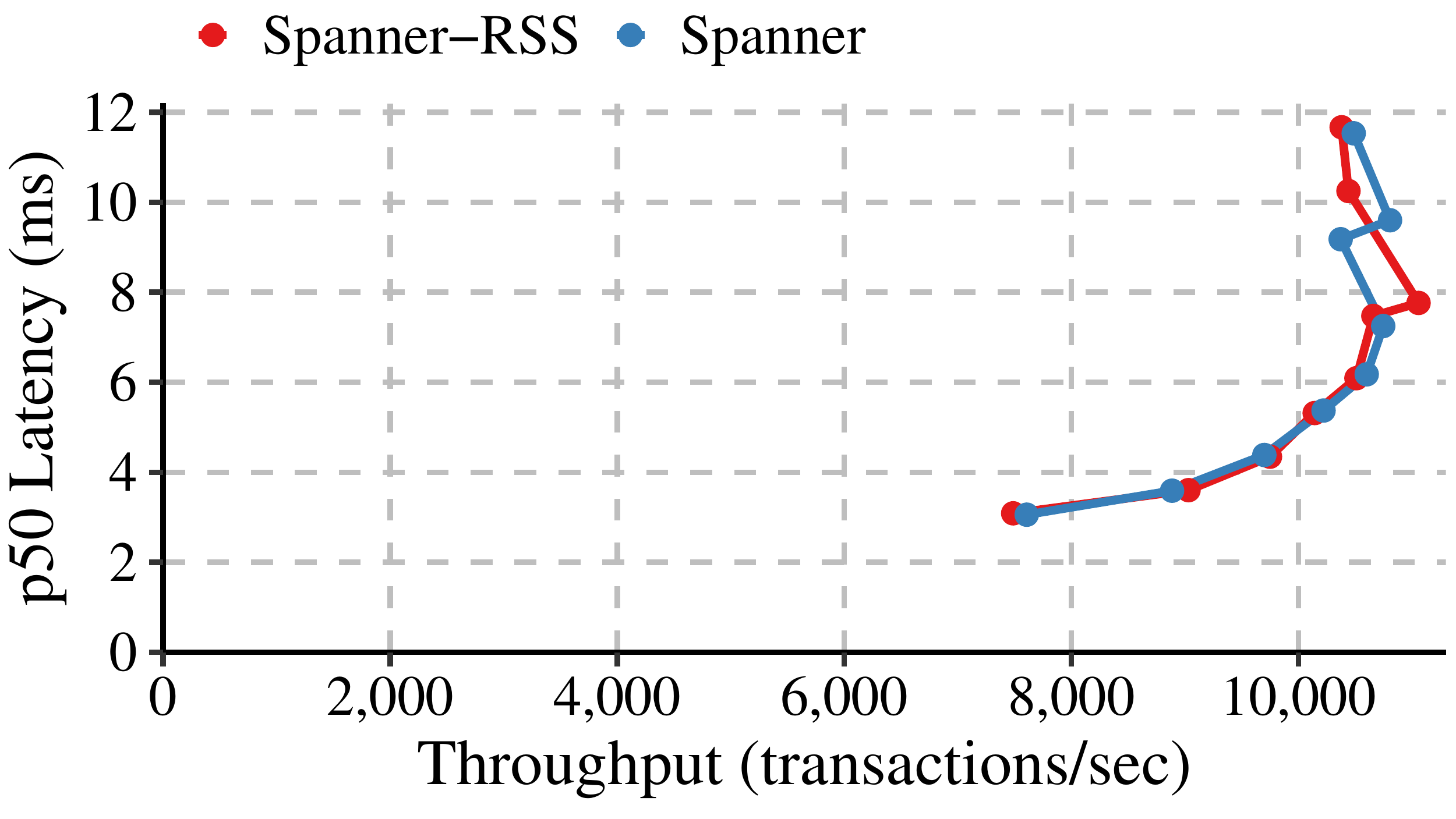}
    \caption{\spannerrss{} does not significantly impact \spanner{}'s
      performance at high load.}
    \label{fig:spanner-throughput}
\end{figure}

Figure~\ref{fig:spanner-throughput} compares the throughput and median latency
as we increase the number of \newtext{closed-loop} clients. As shown,
\spannerrss{} does not significantly impact the server's maximum throughput.
\spannerrss{}'s is within a few hundreds of transactions per second of
\spanner{}'s, and its latency is within a few milliseconds.


\begin{figure*}
    \centering
    \begin{subfigure}[b]{0.3\linewidth}
    \centering
    \includegraphics[height=1.5in]{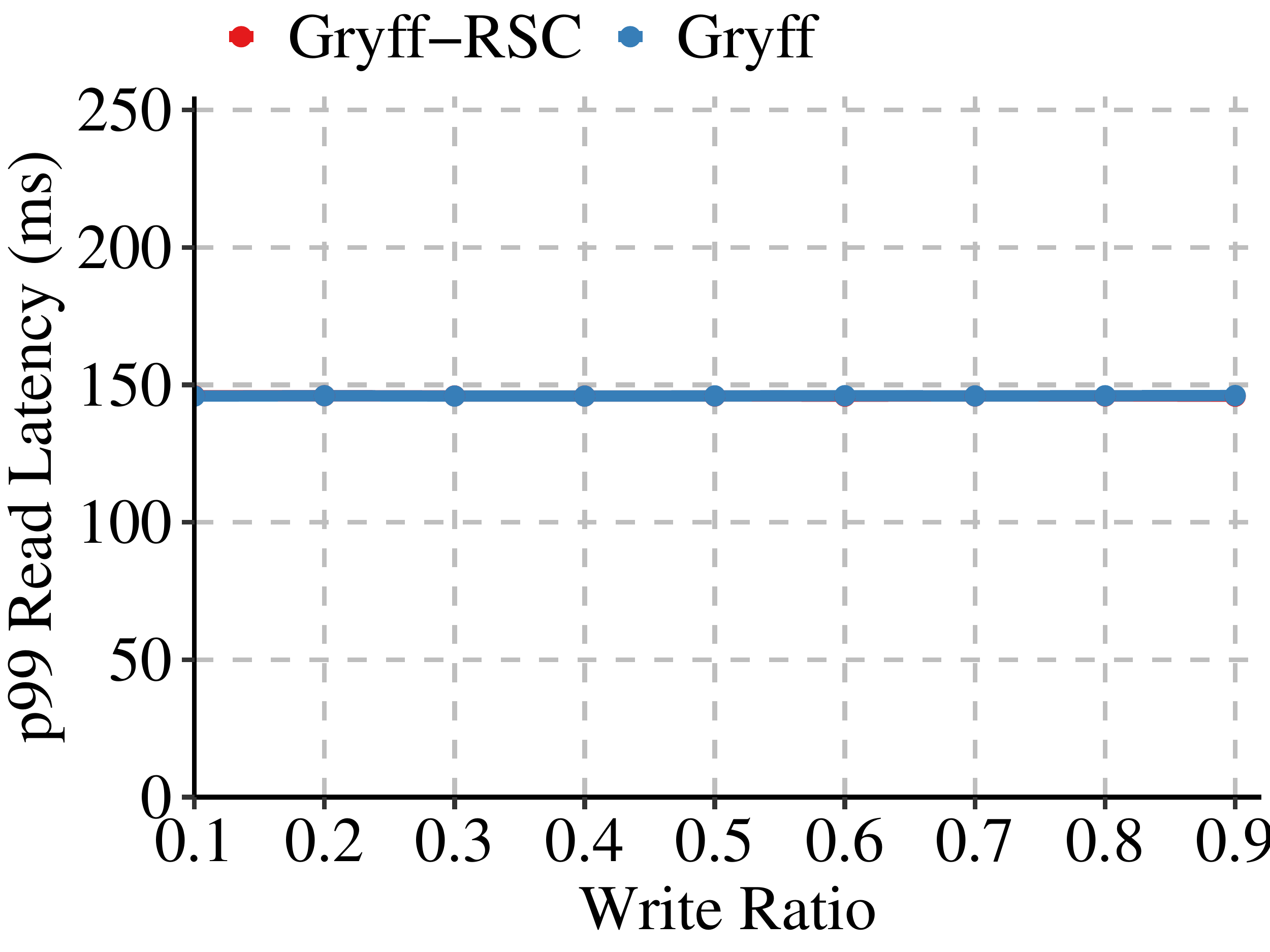}
    \caption{2\% conflicts.}
    \label{fig:gryff-p99-read-vs-write-ratio-confict-02}
    \end{subfigure}
    \hspace{.2in}
    \begin{subfigure}[b]{0.3\linewidth}
    \centering
    \includegraphics[height=1.5in]{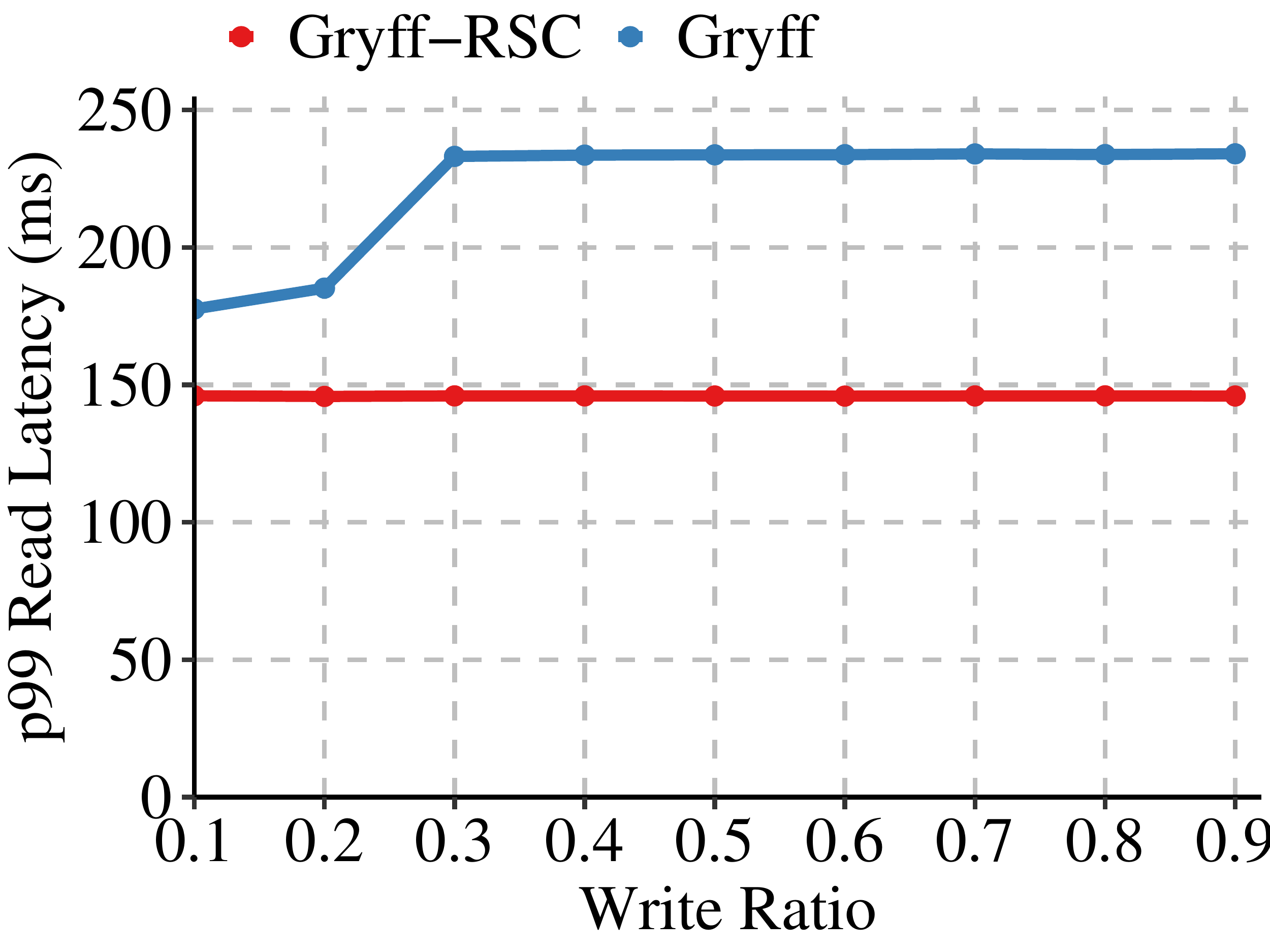}
    \caption{10\% conflicts.}
    \label{fig:gryff-p99-read-vs-write-ratio-confict-10}
    \end{subfigure}
    \hspace{.2in}
    \begin{subfigure}[b]{0.3\linewidth}
    \centering
    \includegraphics[height=1.5in]{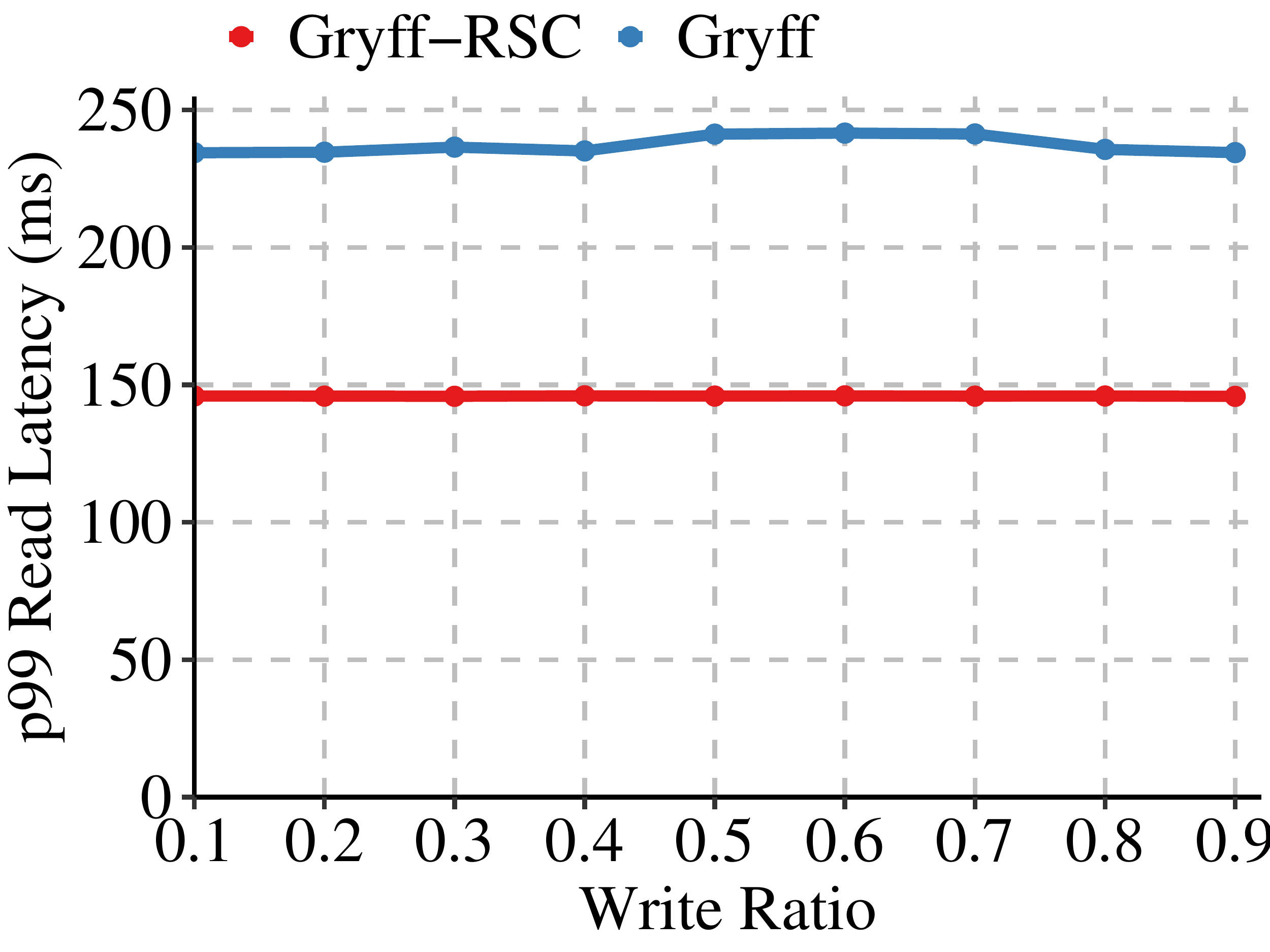}
    \caption{25\% conflicts.}
    \label{fig:gryff-p99-read-vs-write-ratio-confict-25}
    \end{subfigure}
    \caption{For moderate- and high-contention workloads, \gryffrs{} offers
      roughly a 40\% reduction in p99 read latency compared to \gryff{}. As the
      conflict ratio increases, \gryffrs{}'s benefits start at lower write
      ratios.}
    \label{fig:gryff-p99-read-vs-write-ratio-confict}
\end{figure*}

\section{\gryffrs{}}
\label{sec:gryffrs}

\gryff{} is a geo-replicated key-value store that
supports non-transactional reads, writes, and atomic
read-modify-writes (rmws)~\cite{burke2020gryff}. It provides linearizability using
a hybrid shared register and consensus protocol. Reads and writes are executed
using a shared register protocol to provide bounded tail latency whereas rmws
are executed using a consensus protocol, which is necessary for correctness.

We introduce \gryffrs{}, which provides \rslong{} and is able to reduce the
bound on tail latency from two round trips to a quorum of replicas to one
round trip. This section gives an overview of \gryffrs{}'s design and
evaluation.
The full design is described in
Appendix~\ref{sec:gryff_full_design}, and we prove it guarantees \rs{} in Appendix~\ref{sec:correctness:gryffrs}.

\subsection{\gryffrs{} Design Overview}

Read operations in \gryff{} consist of an initial read phase that contacts a
quorum of replicas to learn of the most-recent value they know of for a given
key. If the quorum returns the same values, then the read finishes.
If the quorum returns different values, however, the read continues to a second, write-back phase
that writes the most-recent value to a quorum before the read ends. This second
phase is necessitated by linearizability: once this read ends, any future reads
must observe this or a newer value.

\Rslong{} relaxes this constraint: before the write finishes, only
causally later reads must observe this or a newer value. This enables
\gryffrs{}'s reads to always complete in one phase. On a read, instead of
immediately writing the observed value back to a quorum, a \gryffrs{} client
piggybacks it onto the first phase of its next operation. Replicas write the
piggybacked value before processing the next operation. Causally later
operations by the same client are thus guaranteed to see this or a newer value.
By transitivity then, causally later operations at other clients, e.g., by the
reads-from relation, will also observe this or a newer value.

\newtext{
Piggybacking a read's second phase onto the next operation ensures a client's next operation can be serialized after all operations that causally precede it.
Similarly, a \rtbarrier{} must ensure all future operations, including those from other clients, are ordered after any operation that causally precedes it. By \rs{}, future writes and rmws are already required to respect their real-time order, but the same is not true of future reads. Thus, to execute a \rtbarrier{} in \gryffrs{}, a client writes back the key-value pair, if any, that would have been piggybacked onto its next operation. This guarantees future reads return values that are at least as recent as any operation that causally precedes the \barrier{}.
}



\subsection{\gryffrs{} Evaluation}
\label{sec:eval:gryff}

Our evaluation of \gryffrs{} aims to answer two questions: Does \gryffrs{} offer
better tail read latency on important workloads
(\S\ref{sec:eval:gryff:latency}), and what are the performance costs of
\gryffrs{}'s protocol (\S\ref{sec:eval:gryff:throughput})?

We implement \gryffrs{} in Go using the same
framework as \gryff{}~\cite{burke2020gryff}, and our code and experiment scripts are available online~\cite{gryffrsrepo}. \newtext{We keep all of \gryff{}'s optimizations enabled.}
\newtext{All experiments ran on the CloudLab~\cite{duplyakin2019cloudlab} machines described in Section~\ref{sec:eval:spanner:throughput},} and we emulate a wide-area environment. We use five replicas, one in
each emulated geographic region, \newtext{because with \gryff{}'s optimizations, reads already always finish in one round trip with three replicas.} An equal fraction of the clients are in each region. Table~\ref{tbl:wan-latency} shows the emulated round-trip times.
\begin{table}[h]
  {\small
  \begin{tabular}{c|r r r r r}
    & CA & VA & IR & OR & JP \\
    \hline
    CA & 0.2 & & & & \\
    VA & 72.0 & 0.2 & \\
    IR & 151.0 & 88.0 & 0.2 & & \\
    OR & 59.0 & 93.0 & 145.0 & 0.2 \\
    JP & 113.0 & 162.0 & 220.0 & 121.0 & 0.2 \\
  \end{tabular}
  }
  \vspace{1ex}
  \caption{Emulated round-trip latencies (in ms).}
  \vspace{-2ex}
  \label{tbl:wan-latency}
\end{table}

%
We generate load with 16 closed-loop clients. With this number, servers are
moderately loaded. Each client executes the YCSB workload~\cite{cooper2010ycsb}, which
includes just reads and writes. We vary the rate of conflicts and the read-write
ratio.

\subsection{\gryffrs{} Reduces Read Tail Latency}
\label{sec:eval:gryff:latency}

Figure~\ref{fig:gryff-p99-read-vs-write-ratio-confict} compares \gryff{} and
\gryffrs{}'s p99 read latency across a range of conflict percentages and
read-write ratios. We omit similar plots for writes because write performance
is identical in the two systems.

With few conflicts (Figure~\ref{fig:gryff-p99-read-vs-write-ratio-confict-02}), nearly
all of \gryff{}'s reads complete in one round, so \gryffrs{} cannot offer an improvement. p99 latency for both systems is \SI{145}{\milli\second}.

As Figures~\ref{fig:gryff-p99-read-vs-write-ratio-confict-10}
and~\ref{fig:gryff-p99-read-vs-write-ratio-confict-25} show, however, as the
rate of conflicts increases, more of \gryff{}'s reads must take its slow path,
incurring two wide-area round trips. This increases \gryff{}'s p99 latency by
\SI{61}{\percent} (from \SI{145}{\milli\second} to \SI{234}{\milli\second}). On
the other hand, \gryffrs{}'s reads only require one round trip, so p99 latency remains at \SI{145}{\milli\second}. At lower write ratios, the magnitude of \gryffrs{}'s improvement over \gryff{} increases with the rate of
conflicts.

Further, because reads always finish in one round, \gryffrs{} offers even
larger latency improvements farther out on the tail (not shown). For instance,
with \SI{10}{\percent} conflicts and a 0.3 write ratio, \gryffrs{} reduces p99.9
latency by \SI{49}{\percent} (from \SI{290}{\milli\second} to \SI{147}{\milli\second}).

\subsection{\gryffrs{} Imposes Negligible Overhead}
\label{sec:eval:gryff:throughput}

We also quantify the performance overhead of \gryffrs{}'s piggybacking
mechanism, but we omit the plots due to space constraints. We compare \gryff{}
and \gryffrs{}'s throughput and median latency as we increase the number of
clients. As in Section~\ref{sec:eval:spanner:throughput}, we disable wide-area emulation. With a \SI{10}{\percent}
conflict ratio, we run two workloads: 50\% reads-50\% writes and 95\% reads-5\%
writes (matching YCSB-A and YCSB-B~\cite{cooper2010ycsb}). In both cases,
\gryffrs{}'s throughput and latency are within \SI{1}{\percent} of
\gryff{}'s, suggesting the overhead from \gryffrs{}'s protocol changes are
negligible.


\section{Related Work}
\label{sec:related-work}

This section discusses related work on consistency models, explicitly reasoning
about invariants, equivalence results, and strictly serializable and linearizable services.

\noindentparagraph{Consistency models.} \newtext{Due to their implications for applications and services, consistency models have been studied extensively. In general, given an application, more invariants hold and fewer anomalies are possible with stronger models. But weaker models allow for better-performing services.}

\newtext{
\rss{} and \rs{} are distinct from prior works because they are the first models that are invariant-equivalent to strict serializability and linearizability, respectively. They achieve this by guaranteeing that transactions (operations) appear to execute sequentially, in an order consistent with a set of causal constraints. Prior works are not invariant-equivalent to strict serializability (linearizability) because either they do not guarantee equivalence to a sequential execution or do not capture all of the necessary causal constraints.
}

\newtext{
The discussion below generally proceeds from the strongest to the weakest consistency models. Since we have already discussed strict serializability~\cite{papadimitriou1979serializability}, process-ordered serializability~\cite{daudjee2004lazy,lu2016snow}, linearizability~\cite{herlihy1990linearizability}, and sequential consistency~\cite{lamport1979sequential} extensively, we focus here on other models.
(We also provide a technical comparison between \rss{}, \rs{}, and their proximal models in Appendix~\ref{sec:consistency-comparison}.)}

\newtext{
Like \rss{}, CockroachDB’s consistency model (CRDB)~\cite{taft2020crdb} lies between strict serializability and PO serializability. 
CRDB guarantees conflicting transactions respect their real-time order~\cite{taft2020crdb}, but it gives no such guarantee for non-conflicting transactions, which can lead to invariant violations. For instance, in a slight modification to our photo-sharing application, assume clients issue a single write to add a photo and included in this write is a logical timestamp comprising a user ID and a counter. Further, assume clients can issue a RO transaction for a user's photos. With CRDB, if Alice adds two photos and those transactions execute at different Web servers, a RO transaction that is concurrent with both may only return the second photo. If the application requires a user's photos to always appear in timestamp order, then it would be correct with a strictly serializable database but not with CRDB.
}

\newtext{
Similarly, like \rs{}, OSC(U)~\cite{levari2017osc} lies between linearizability and sequential consistency. OSC(U) guarantees writes respect their real-time order. Reads, however, may return stale values~\cite{levari2017osc}, so some pairs of reads (e.g., those invoked by different processes that also communicate via message passing) may return values inconsistent with their causal order. As a result, the non-transactional version of $\invariant{2}$ discussed in Section~\ref{sec:background:non} does not hold with OSC(U). On the other hand, OSC(U) allows services to achieve much lower read latency than what is currently achievable with \rs{}.
}

\newtext{
PO serializability and sequential consistency impose fundamental performance constraints on services~\cite{lipton1988pram}, so many weaker models, both transactional~\cite{papadimitriou1979serializability,akkoorath2016cure,
mehdi2017occult,sovran2011walter,adya1999weakcons,berenson1995ansi,
bailis2016ramp,terry1995bayou,elnikety2004gsi,pu1991esr} and
non-transactional~\cite{lipton1988pram,terry1994session,lloyd2011cops,ahamad1995causal,
cooper2008pnuts,shapiro2011crdt,balegas2015extending}, have been developed. These weaker models allow for services with much better performance than what is currently achievable with \rss{} or \rs{}. For example, a causal+
storage system can process all operations without synchronous, cross-data-center
communication~\cite{lloyd2011cops}. But application invariants break with these models because they do not guarantee equivalence to a sequential execution of either transactions or operations. Thus, they present developers with a harsh trade-off between service performance and application correctness.}

\newtext{
Based on the observation that some invariants hold with weaker consistency models, other work proposes combinations of weak and strong guarantees for different operations~\cite{ladin1992lazyrep,li2012redblue,li2018por,terry2013slas}. This allows these services to offer dramatically better performance for a subset of operations.
Maintaining application correctness while using these services, however, requires application programmers to choose the correct consistency for each operation.
}

\newtext{
Finally, three other works use causal or real-time constraints in innovative ways~\cite{baldoni1996deltacausal,yu2002tact,mahajan2011rtcausal}. First, $\Delta$-causal messaging applies real-time guarantees to a different domain where messages have limited, time-bounded relevance (e.g., video streaming)~\cite{baldoni1996deltacausal}. Second, real-time causal strengthens causal consistency by ensuring writes respect their real-time order~\cite{mahajan2011rtcausal}. But because real-time causal does not capture all necessary causal constraints, $\invariant{2}$ would not hold.
}

\newtext{
Third, TACT gives application developers fine-grained control over its consistency~\cite{yu2002tact}. For instance, an application can set a different staleness bound for each invocation to control how old (in real time) the values returned by the operation may be. (Setting zero for all operations provides strict serializability.) Compared to \rss{}, TACT's fine-grained control allows for services with better performance but requires developers to choose the correct bounds when ensuring their application's correctness.}






\noindentparagraph{Reasoning about explicit invariants.}
Several tools and techniques have been proposed for reasoning about the correctness of applications that run on services with weaker consistency~\cite{alglave2017ogre, raad2019library, najafzadeh2016cise, li2014automating, gotsman2016cause, brutschy2018static}.
For example, SIEVE~\cite{li2014automating} uses static and dynamic analysis of Java application code to determine the necessary consistency level for operations to maintain a set of explicitly written invariants.
Brutschy et al.\@~\cite{brutschy2018static} describe a static analysis tool for identifying non-serializable application behaviors that are possible when running on a causally consistent service.
Gotsman et al.\@~\cite{gotsman2016cause} introduce a proof rule (and accompanying static analysis tool~\cite{najafzadeh2016cise}) that enables
modular reasoning about the consistency level required to maintain explicit
invariants.

These tools and techniques help application programmers ensure explicit
invariants hold when using services with weaker consistency.
In contrast, \rss{} (\rs{}) services ensure the same application invariants
as strictly serializable (linearizable) services. This makes it easier to build
correct applications because programmers can write code without
stating invariants, running static analyses, or writing proofs.

\noindentparagraph{Equivalence results.} Other works have leveraged the notion
of equivalence, or indistinguishability, to prove interesting theoretical
results~\cite{goldman1993unifiedModel,lundelius1984clocksync,fischer1985flp,attiya1993seqlin}.
In fact, our results here are inspired by them. But while we leverage some of
their ideas and techniques, these works apply equivalence to different ends,
e.g., to prove bounds on clock
synchronization~\cite{lundelius1984clocksync} or show there are
fundamental differences in the performance permitted by different consistency
models~\cite{attiya1993seqlin}.

\noindentparagraph{Strictly serializable services.} \spanner{} is a globally
distributed, strictly serializable database~\cite{corbett2013spanner}. Since its
publication, other such services have been
developed~\cite{thomson2014calvin,mahmoud2013replicatedCommit,zhang2018tapir,zhang2015tapir,mu2014rococo, mu2016janus,kraska2013mdcc,ren2019slog,taft2020crdb,yan2018carousel}.
These services has largely focused on improving the
throughput~\cite{thomson2014calvin,ren2019slog} and
latency~\cite{mahmoud2013replicatedCommit,zhang2018tapir,zhang2015tapir,kraska2013mdcc,ren2019slog,yan2018carousel}
of read-write transactions, which can incur multiple inter-data-center round trips
in \spanner{}.

Because they only require one round trip between a client and the
participant shards, \spanner{}'s RO transactions continue to
perform as well or better than those of other services. These improvements
are thus orthogonal to those offered by \spannerrss{}, and in fact, weakening the
consistency of these other services to \rss{} may allow for designs that combine
their improved RW transaction performance with RO transactions
that are competitive with \spannerrss{}'s.


\noindentparagraph{Linearizable services.} \gryff{} is a recent geo-replicated
storage system that combines shared registers and
consensus~\cite{burke2020gryff}. Many other protocols have been developed to
provide replicated and linearizable storage~\cite{lamport1998paxos,
moraru2013epaxos,oki1988vr,ongaro2014raft,mao2008mencius,ailijiang2020wpaxos,
zhao2018sdpaxos,gavrielatos2020kite,katsarakis2020hermes,enes2020atlas,
enes2021efficient,ngo2020copilot}.
Weakening the consistency of these other services to \rs{} is likely to enable new variants of their designs that
improve their performance.


\section{Conclusion}
\label{sec:conclusion}

Existing consistency models offer a harsh trade-off to application programmers;
they often must choose between application correctness and performance. This
paper presents two new consistency models, \rsslong{} and \rslong{}, to ease
this trade-off. \rss{} and \rs{} maintain application invariants while
permitting new designs that achieve better performance than their strictly
serializable or linearizable counterparts. To this end, we design variants of two existing systems, \spannerrss{} and \gryffrs{}, that guarantee \rss{} and \rs{}, respectively. Our
evaluation demonstrates significant (\SI{40}{\percent} to \SI{50}{\percent})
reductions in tail latency for read-only transactions and reads.


\begin{acks}
    We thank the anonymous reviewers and our shepherd, Rodrigo Rodrigues, for their helpful comments and feedback.
    We are also grateful to Khiem Ngo for his comments on an earlier
    version of this paper. This work was supported by the National Science Foundation under grant CNS-1824130.
\end{acks}

\clearpage
\bibliographystyle{ACM-Reference-Format}
\bibliography{refs,paper,venues}

\clearpage
\appendix

\section{Comparing \rss{} and \rs{} To Their Proximal Consistency Models}
\label{sec:consistency-comparison}

As discussed extensively in the main body of the paper, \rsslong{} lies between strict serializability~\cite{papadimitriou1979serializability} and process-ordered serializability~\cite{daudjee2004lazy,lu2016snow}. While other consistency and isolation definitions lie between or near these points in the consistency spectrum, \rss{} is the first consistency model that is invariant-equivalent to strict serializability. Existing models either fail to reflect a sequential execution of transactions~\cite{daudjee2006lazysi} or to respect the broad set of causal constraints necessary to maintain application invariants~\cite{taft2020crdb}. As a result, invariants like $\invariant{1}$ or $\invariant{2}$, respectively, will not hold.

Similarly, \rslong{} lies betwen linearizability~\cite{herlihy1990linearizability} and sequential consistency~\cite{lamport1979sequential}. While many existing consistency models lie near \rs{}, \rs{} is the first to be invariant-equivalent to linearizability. Existing non-transactional models again fail to provide one or both of two guarantees: They either do not reflect a sequential execution of operations~\cite{mahajan2011rtcausal,shao2003mwregularity,shao2011mwregularity} or do not respect the necessary causal constraints~\cite{lamport1979sequential,levari2017osc,viotti2016conssurvey}.

In the remainder of this section, we compare \rss{} and \rs{} to their (respective) proximal consistency models. (For brevity, we focus only on proximal models and refer the reader to other works for more comprehensive surveys of existing transactional~\cite{adya1999weakcons,crooks2017seeing} and non-transactional~\cite{viotti2016conssurvey} consistency definitions.) These comparisons primarily serve to verify that our definitions of \rss{} and \rs{} are indeed novel. They also, however, help illuminate how existing models compare to \rss{} and \rs{} with regards to application invariants and user-visible anomalies. While we also compare these existing models to each other with informal arguments, we omit formal proofs of these comparisons as they are not the focus of this work. 

\subsection{\RSSlong{}}
\label{sec:consistency-comparison:rss}

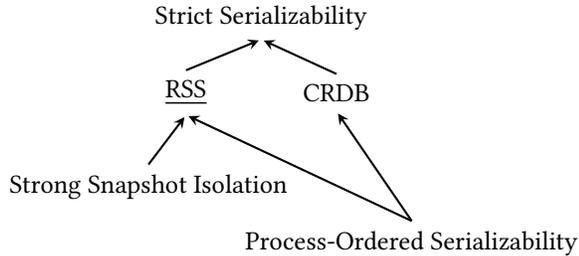
\begin{figure}[!t]
\centering
\begin{tikzpicture}[
            > = stealth, 
            shorten > = 1pt, 
            thick 
        ]

        \tikzstyle{every state}=[
            draw = none,
            rectangle,
            thick,
            fill = none,
            minimum size = 0
        ]

        \node[state] (ss) at (0,0) {Strict Serializability};
        \node[state] (crdb) at (1,-1) {CRDB};
        \node[state] (rss) at (-1,-1) {\underline{\rss{}}};
        \node[state] (ssi) at (-1.5,-2.25) {Strong Snapshot Isolation};
        \node[state] (pos) at (2,-3) {Process-Ordered Serializability};
        
        \draw[->] (crdb.north) -- (ss.south);
        \draw[->] (rss.north) -- (ss.south);
        \draw[->] (ssi.north) -- (rss.south);
        \draw[->] (pos.north) -- (crdb.south);
        \draw[->] (pos.north) -- (rss.south);
    \end{tikzpicture}
\caption{\rss{} compared to its proximal consistency models: strict serializability~\cite{papadimitriou1979serializability}, CRDB~\cite{taft2020crdb}, process-ordered serializability~\cite{daudjee2004lazy,lu2016snow}, and strong snapshot isolation~\cite{daudjee2006lazysi}.}
\label{fig:rss-comparison}
\end{figure}

Figure~\ref{fig:rss-comparison} compares \rss{} to its proximal consistency models.

\noindentparagraph{CRDB.} Like \rss{}, CRDB's consistency model~\cite{taft2020crdb} is stronger than process-ordered serializability; it ensures transactions appear to execute in some total order that is consistent with each client's process order. Further, it provides some real-time guarantees for writes, which prevent stale reads in many cases.

But CRDB's consistency model is incomparable to \rss{}. To show this, we exhibit two schedules: Figure~\ref{fig:rss-crdb-1} is allowed by CRDB but not by \rss{}, and the reverse is true of Figure~\ref{fig:rss-crdb-2}.

\begin{figure}[!h]
\centering
\includegraphics[width=0.7\linewidth,page=1]{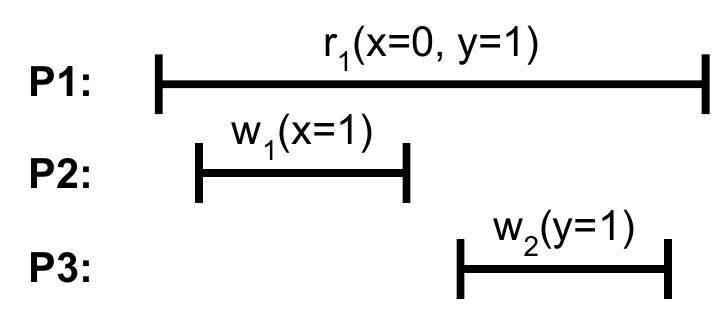}
\caption{Allowed by CRDB but disallowed by \rss{}.}
\label{fig:rss-crdb-1}
\end{figure}

\begin{figure}[!h]
\centering
\includegraphics[width=0.7\linewidth,page=2]{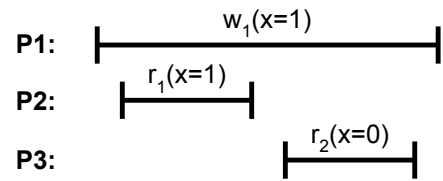}
\caption{Allowed by \rss{} but disallowed by CRDB.}
\label{fig:rss-crdb-2}
\end{figure}

Figure~\ref{fig:rss-crdb-1} shows a generic version of the example described in Section~\ref{sec:related-work}. As described, in a slightly modified version of our photo-sharing example application, the invariant that a user's photos always appear in order breaks with CRDB.

\noindentparagraph{Strong Snapshot Isolation.} Strong snapshot isolation~\cite{daudjee2006lazysi} is weaker than \rss{}. Briefly, strong snapshot isolation strengthens snapshot isolation~\cite{adya1999weakcons} by requiring that if a transaction $T_2$ follows $T_1$ in real time, then $T_2$'s start timestamp must be greater than $T_1$'s commit timestamp. This implies that $T_2$'s reads will observe $T_1$'s writes and $T_2$'s writes will be ordered after $T_1$'s. This guarantee is similar to the real-time guarantee provided by \rss{}.

Strong snapshot isolation, however, does not guarantee the database's state reflects a sequential execution of transactions. For instance, like with snapshot isolation, write skew is possible. Figure~\ref{fig:ssi-write-skew} shows an example. As a result, one can construct an invariant like $\invariant{1}$ that does not hold with strong snapshot isolation.

\begin{figure}[!h]
\centering
\includegraphics[width=0.7\linewidth,page=3]{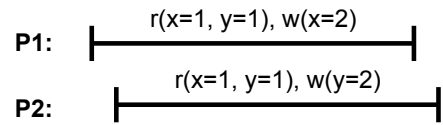}
\caption{Allowed by strong snapshot isolation.}
\label{fig:ssi-write-skew}
\end{figure}

As shown in Figure~\ref{fig:rss-comparison}, strong snapshot isolation is also incomparable to process-ordered serializability and CRDB. Write skew is impossible with process-ordered serializability and CRDB. But process-ordered serializability allows stale reads, which are not allowed by strong snapshot isolation. Further, the execution shown in Figure~\ref{fig:rss-crdb-1} is allowed by CRDB but violates strong snapshot isolation because $r_1$'s return values imply that $w_1$ is serialized after $w_2$ even though it precedes $w_2$ in real time.

\subsection{\RSlong{}}
\label{sec:consistency-comparison:rs}

\begin{figure}[!t]
\centering
\begin{tikzpicture}[
            > = stealth, 
            shorten > = 1pt, 
            thick 
        ]

        \tikzstyle{every state}=[
            draw = none,
            rectangle,
            thick,
            fill = none,
            minimum size = 0
        ]

        \node[state] (l) at (-3,0) {Linearizability};
        \node[state] (osc) at (0,-1) {OSC(U)};
        \node[state] (rs) at (-2,-1) {\underline{\rs{}}};
        \node[state,text width=3cm,align=center] (sc) at (1.5,-3) {Sequential Consistency};
        \node[state] (vvreg) at (-2.85,-2) {VV Regularity};
        \node[state] (rtc) at (0.5,-4) {Real-Time Causal};
        
        \node[state] (mwrwo) at (-1.5,-3) {MWR-WO};
        \node[state] (mwrni) at (-3.25,-3) {MWR-NI};
        \node[state] (mwrrf) at (-5,-3) {MWR-RF};
        \node[state] (mwrweak) at (-3.25,-4) {MWR-Weak};

        \draw[->] (osc.north) -- (l.south);
        \draw[->] (rs.north) -- (l.south);
        \draw[->] (vvreg.north) -- (rs.south);
        
        \draw[->] (sc.north) -- (osc.south);
        \draw[->] (sc.north) -- (rs.south);
        
        \draw[->] (rtc.north) -- (osc.south);
        \draw[->] (rtc.north) -- (rs.south);
        
        \draw[->] (mwrweak.north) -- (mwrwo.south);
        \draw[->] (mwrweak.north) -- (mwrni.south);
        \draw[->] (mwrweak.north) -- (mwrrf.south);
        
        \draw[->] (mwrwo.north) -- (vvreg.south);
        \draw[->] (mwrni.north) -- (vvreg.south);
        \draw[->] (mwrrf.north) -- (l.south);
\end{tikzpicture}
\caption{\rs{} compared to its proximal consistency models: linearizability~\cite{herlihy1990linearizability}, OSC(U)~\cite{levari2017osc}, sequential consistency~\cite{lamport1979sequential}, real-time causal consistency~\cite{mahajan2011rtcausal}, VV Regularity~\cite{viotti2016conssurvey}, and the four regularity definitions proposed by \citeauthor{shao2003mwregularity}~\cite{shao2003mwregularity,shao2011mwregularity}.}
\label{fig:rs-comparison}
\end{figure}
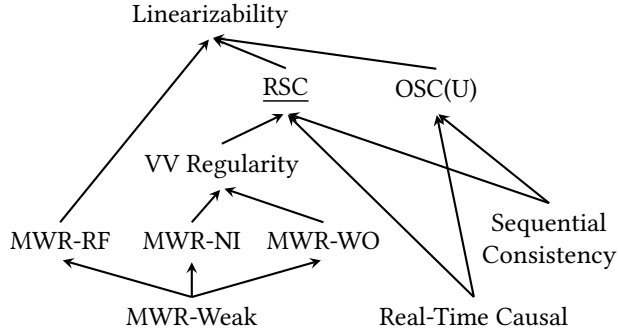

Figure~\ref{fig:rs-comparison} compares \rs{} to its proximal consistency models.

\noindentparagraph{OSC(U).} Like \rs{}, ordered sequential consistency~\cite{levari2017osc}, denoted OSC(U), is stronger than sequential consistency; it ensures operations appear to execute in some total order that is consistent with each client's process order. It also imposes some real-time constraints on writes.

But OSC(U) is incomparable to \rs{} because OSC(U) and \rs{}'s real-time guarantees differ. Unlike \rs{}, which requires that all operations \textit{following} a write are ordered \textit{after} it, OSC(U) requires that all operations \textit{preceding} a write are ordered \textit{before} it.

This difference has practical consequences. As shown in Figure~\ref{fig:rs-osc-1}, OSC(U) allows stale reads, which are not allowed by \rs{}. Thus, the non-transactional equivalent of invariant $\invariant{2}$ does not hold with OSC(U). 

\begin{figure}[!h]
\centering
\includegraphics[width=0.7\linewidth,page=1]{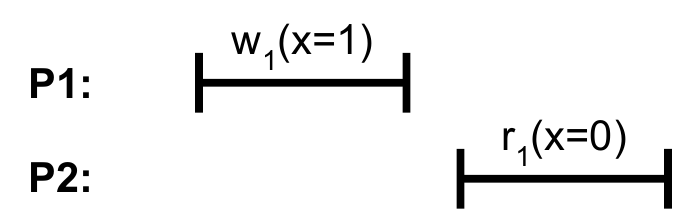}
\caption{Allowed by OSC(U) but disallowed by \rs{}.}
\label{fig:rs-osc-1}
\end{figure}

On the other hand, Figure~\ref{fig:rs-osc-2} shows an execution that \rs{} allows but OSC(U) does not. OSC(U) forbids it because $r_1$ precedes $w_1$ in real time, but P4's reads of $x=1$ and then $x=2$ imply $w_1$ precedes $w_2$ and thus $r_1$ in the total order.

\begin{figure}[!h]
\centering
\includegraphics[width=0.9\linewidth,page=2]{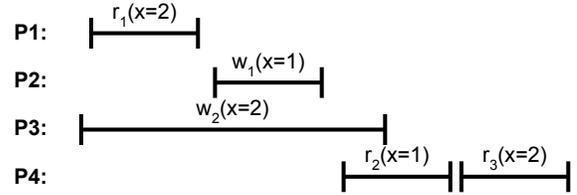}
\caption{Allowed by \rs{} but disallowed by OSC(U).}
\label{fig:rs-osc-2}
\end{figure}


\noindentparagraph{Real-time Causal.} Real-time causal consistency~\cite{mahajan2011rtcausal} strengthens causal consistency~\cite{ahamad1995causal} by also requiring that the order of causally unrelated writes respects their real-time order. Like other causal consistency models, real-time causal does not guarantee that operations appear to execute in some total order. Further, \rs{} gives the same real-time ordering guarantee for pairs of writes and captures a superset of the causal constraints required by real-time causal. \rs{} is thus stronger than real-time causal.

By similar reasoning, OSC(U) is stronger than real-time causal. They capture the same causal constraints, and OSC(U) also ensures pairs or writes are serialized in their real-time order. Further, as \citeauthor{viotti2016conssurvey} show, real-time causal is incomparable to sequential consistency and their definition of regularity (discussed below)~\cite{viotti2016conssurvey}.

\noindentparagraph{Viotti-Vukoli\'{c} Regularity.} \citeauthor{viotti2016conssurvey}~\cite{viotti2016conssurvey} give a definition of regularity that applies for multiple writers. (\citeauthor{lamport1986interprocess}'s original definition only applied to a single-writer register~\cite{lamport1986interprocess}.) Like \rs{}, their definition of regularity ensures operations appear to execute in some total order and operations that follow a write in real time must also follow it in the order.

But their consistency is weaker than \rs{} because the total order is not required to respect the causal constraints imposed by \rs{}. This can lead to violations of invariants. For instance, if in the execution shown in Figure~\ref{fig:rss-crdb-2}, there was a causal dependency between $r_1$ and $r_2$ (e.g., through message passing), $r_2$ could still return $x=0$ with VV regularity, which could lead to a violation of an invariant. The same is not true of \rs{}. 

As mentioned above, \citeauthor{viotti2016conssurvey} show their regularity defintion is incomparable to real-time causal and sequential consistency~\cite{viotti2016conssurvey}. Viotti-Vukoli\'{c} (VV) regularity is also incomparable to OSC(U): Figure~\ref{fig:rs-osc-1} is allowed by OSC(U) but disallowed by VV regularity, and Figure~\ref{fig:rs-osc-2} is allowed by VV regularity but disallowed by OSC(U).

\noindentparagraph{\citeauthor{shao2003mwregularity} Regularity} \citeauthor{shao2003mwregularity}~\cite{shao2003mwregularity,shao2011mwregularity} propose six new definitions of multi-writer regularity that form a lattice between their weakest consistency definition, MWR-Weak, and linearizability. Above MWR-Weak, MWR-Write-Order (MWR-WO), MWR-Reads-From (MWR-RF), and MWR-No-Inversion (MWR-NI) together form the next level of the lattice. The three intersections of pairs of MWR-WO, MWR-RF, and MWR-NI form the level above that. Finally, linearizability is at the top.

Briefly, MWR-Weak requires that each read returns the value of the most recently completed write or a value of some ongoing, concurrent write. Another way to state this guarantee is that an execution satisfies MWR-Weak if for each read $r$, there exists a serialization of $r$ and all writes in the execution that respects the real-time order of the read and writes. Note, however, that each read has its own serialization, so they may reflect different serializations of concurrent writes. Thus, MWR-Weak does not guarantee that operations appear to execute in some total order.

MWR-WO, MWR-RF, and MWR-NI each strengthen MWR-Weak by imposing additional, incomparable constraints on the serializations for each read~\cite{shao2003mwregularity,shao2011mwregularity}. Informally, MWR-WO requires that the serializations for each pair of reads agree on the order of the writes that are relevant to both, where a write is relevant to a read if it is concurrent or precedes the read in real time. MWR-RF requires the serialization of each read to respect a global reads-from relation in addition to the real-time order of read and writes. For instance, MWR-RF forbids the execution in Figure~\ref{fig:rs-osc-2} because the facts that $r_1$ reads from $w_2$ and $r_1$ precedes $w_1$ in real time imply $w_1$ must precede $w_2$ in $r_3$'s serialization. Thus, $r_3$ must return $x=2$. Finally, MWR-NI strengthens MWR-Weak by requiring that reads executed by the same process agree on the order of writes, although different processes may disagree on the order of concurrent writes.

Both MWR-WO and MWR-NI are weaker than VV regularity and thus \rs{}. VV regularity gives the same real-time guarantee for operations following writes. Further, the total order guaranteed by VV regularity implies that reads agree on the serialization of their mutually relevant writes (satisfying MWR-WO) and that reads from the same process agree on the order of writes (satisfying MWR-NI). 

But neither MWR-WO nor MWR-NI guarantee a total order. Figure~\ref{fig:rs-mwr-1} shows an example execution allowed by both. MWR-WO allows it because although both $w_1$ and $w_2$ are relevant to all four reads, MWR-WO does not require that reads respect the process orders. Thus, the serializations for $r_1$ and $r_2$ can be $w_1,w_2,r_1$ and $r_2,w_1,w_2$, respectively. MWR-NI allows it by similar reasoning. Thus, MWR-WO and MWR-NI are weaker than VV regularity and \rs{}.

\begin{figure}[!h]
\centering
\includegraphics[width=0.7\linewidth,page=3]{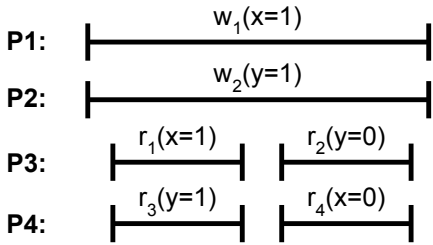}
\caption{Allowed by MWR-WO and MWR-NI but disallowed by \rs{}.}
\label{fig:rs-mwr-1}
\end{figure}

On the other hand, MWR-RF is incomparable to \rs{}. As mentioned above, MWR-RF forbids the execution in Figure~\ref{fig:rs-osc-2}, and Figure~\ref{fig:rs-mwr-2} shows an execution allowed by MWR-RF but not \rs{}. In Figure~\ref{fig:rs-mwr-2}, the reads-from relation does not introduce any additional constraints on the values returned by the reads. 

\begin{figure}[!h]
\centering
\includegraphics[width=0.7\linewidth,page=4]{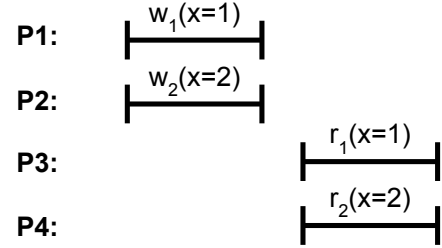}
\caption{Allowed by MWR-RF and MWR-NI but disallowed by \rs{}.}
\label{fig:rs-mwr-2}
\end{figure}

Finally, we consider the remaining three consistency models found by intersecting pairs of MWR-WO, MWR-RF, and MWR-NI. Since both MWR-WO and MWR-NI allow Figure~\ref{fig:rs-mwr-1}, their intersection does, too. Further, since VV regularity (and thus \rs{}) is stronger than both, it is also stronger than their intersection. Similarly, the execution shown in Figure~\ref{fig:rs-mwr-2} is also allowed by MWR-NI since each process has at most one read. Thus, the intersection of MWR-RF and MWR-NI is also incomparable to \rs{}. Finally, \citeauthor{shao2011mwregularity} show that the intersection of MWR-WO and MWR-RF is equivalent to linearizability~\cite{shao2011mwregularity}.

\citeauthor{shao2003mwregularity} show their regularity definitions are
incomparable to sequential consistency because they do not guarantee that
operations appear to execute in some total order consistent with each
client's process order~\cite{shao2011mwregularity}. By similar reasoning, their
definitions are incomparable to OSC(U). Finally, the authors also show
their consistency definitions are incomparable to causal consistency. By similar
reasoning, they are incomparable to real-time causal.


\section{Full \gryffrs{} Design}
\label{sec:gryff_full_design}

Gryff is a geo-replicated key-value store that
supports non-transactional operations, namely, reads, writes, and atomic
read-modify-writes (rmws) of single objects. It provides linearizability using
a hybrid shared register and consensus protocol. Reads and writes are executed
using a shared register protocol to provide bounded tail latency whereas rmws
are executed using a consensus protocol, which is necessary for correctness. We
show Gryff can be modified to provide \rslong{} to improve tail read latency.

\noindentparagraph{Gryff background.} Each replica in Gryff maintains a mapping from
keys to values and auxiliary state for its underlying consensus protocol. In
addition, the value of each of key is associated with a consensus-after-register
timestamp (carstamp) that denotes the position in the linearizable total order
of operations of the last write or rmw to the key.

The read, write, and rmw protocols each follow a common structure that ensures
carstamps are observed and updated in an order consistent with the real-time
order of operations. For a coordinator executing an operation $o$
that accesses key $k$, the structure is composed of a Read Phase and
a subsequent Write Phase:
\begin{enumerate}[leftmargin=*]
  \item \textbf{Read Phase.} The coordinator gathers the carstamp $\textit{cs}_r$
    associated with $k$ from each server $s$ in a quorum $Q$. It then chooses
    the \emph{read phase carstamp} $\textit{cs}_r$ to be the maximum over all $s \in Q$
    of $\textit{cs}_s$.
  \item \textbf{Write Phase.} The coordinator determines the operation's
    carstamp $\textit{cs}$ and write value $v$ based on $o$'s type. For
    reads, $\textit{cs} = \textit{cs}_r$ and $v$ is the value that is associated
    with $\textit{cs}$. For writes, $\textit{cs}$ is chosen to be
    larger than $\textit{cs}_r$ and $v$ is the new value being written. For rmws,
    $\textit{cs}$ is also chosen to be larger than $\textit{cs}_r$ and $v$ is
    some user-defined function of the value that is associated with $\textit{cs}$.
    The coordinator then propagates $\textit{cs}$ and $v$ to each server
    in a quorum $Q'$.
\end{enumerate}
The key to correctness is that quorums are required to have non-empty
intersection. This implies that once an operation $o_1$ completes
its Write Phase, any subsequent operation $o_2$ will observe the
carstamp for $o_1$ on at least one replica during its Read Phase.
Further, since the carstamp of $o_2$
is chosen to be at least as large as the largest observed carstamp during the
Read Phase, the carstamp for $o_2$ will be at least as large as the
carstamp for $o_1$. 

This manner of ordering operations enables a performance optimization for reads
because they never modify the value of the objects they
access. A coordinator of a read can omit the Write Phase entirely while still
maintaining linearizability if all of the carstamps it observes in the
Read Phase are the same. In this case, the carstamp of the read is
already propagated to a quorum, so the Write Phase is not needed to ensure
subsequent operations observe the read's carstamp.

\begin{algorithm}[!tb]
  \caption{\gryffrs{} Client}
  \label{alg:gryff-rsc-client}
  \begin{algorithmic}[1]
    \State \Global $c \gets \text{unique client ID}$
    \State \Global $d \gets \bot$ \Comment{Dependency}
    \Function{Client::Read}{$k$}
        \State \Send \Msg{Read}{$k, d$} to all $s \in S$
        \State \WaitUntil receive \Msg{ReadReply}{$v_s, \textit{cs}_s$} from all
            $s \in Q \in \mathcal{Q}$
        \State $\textit{cs} \gets \max_{s \in Q} \textit{cs}_s$
        \State $v \gets v_s : \textit{cs}_s = \textit{cs}$
        \If{$\exists s \in Q : \textit{cs}_s \neq \textit{cs}$}
            \State $d \gets (k, v, \textit{cs})$
        \EndIf
        \State \Return $v$
    \EndFunction
    \Statex
    \Function{Client::Write}{$k,v$}
        \State \Send \Msg{Write1}{$k, d$} to all $s \in S$
        \State \WaitUntil receive \Msg{Write1Reply}{$\textit{cs}_s$} from all $s
            \in Q \in \mathcal{Q}$
        \State $d \gets \bot$
        \State $\textit{cs} \gets \max_{s \in Q} \textit{cs}_s$
        \State \Send \Msg{Write2}{$k, v, (\pi_0(\textit{cs}) + 1, c)$} to all
            $s \in S$
        \State \WaitUntil receive \Msg{Write2Reply}{} from all $s \in Q^\prime \in \mathcal{Q}$
    \EndFunction
    \Statex
    \Function{Client::RMW}{$k,f(\cdot)$}
        \State \Send \Msg{RMW}{$k, f(\cdot), d$} to one $s \in S$
        \State \WaitUntil receive \Msg{RMWReply}{} from $s$
        \State $d \gets \bot$
    \EndFunction
\end{algorithmic}

\end{algorithm}

\begin{algorithm}[!tb]
  \caption{\gryffrs{} Server Read/Write}
  \label{alg:gryff-rsc-server}
  \begin{algorithmic}[1]
    \State \Global $V \gets [\bot,\ldots,\bot]$ \Comment{Values}
    \State \Global $\textit{CS} \gets [(0, 0, 0),\ldots,(0, 0, 0)]$ \Comment{Carstamps}
    \Function{Server::ReadRecv}{$c, k, d$}
        \If{$d \neq \bot$}
            \State \Call{Apply}{$d.k, d.v, d.\textit{cs}$}
        \EndIf
        \State \Send \Msg{ReadReply}{$V[k], \textit{CS}[k]$} to $c$
    \EndFunction
    \Statex
    \Function{Server::Write1Recv}{$c, k, d$}
        \If{$d \neq \bot$}
            \State \Call{Apply}{$d.k, d.v, d.\textit{cs}$}
        \EndIf
        \State \Send \Msg{Write1Reply}{$\textit{CS}[k]$} to $c$
    \EndFunction
    \Statex
    \Function{Server::Write2Recv}{$c, k, v, \textit{cs}$}
        \State \Call{Apply}{$k, v, \textit{cs}$}
        \State \Send \Msg{Write2Reply}{} to $c$
    \EndFunction
    \Statex
    \Function{Server::Apply}{$k, v, \textit{cs}$}
    \If{$\textit{cs} > \textit{CS}[k]$}
        \State $V[k] \gets v$
        \State $\textit{CS}[k] \gets \textit{cs}$
    \EndIf
    \EndFunction
\end{algorithmic}

\end{algorithm}

\begin{algorithm}[!tb]
  \caption{\gryffrs{} Server RMW}
  \label{alg:gryff-rsc-server-rmw}
  \begin{algorithmic}[1]
  \State \Global $s \gets \text{unique server ID}$
    \State \Global $\textit{prev} \gets [(\bot, (0, 0, 0)),\ldots]$ \Comment{Result of previous rmw for key}
    \State \Global $i \gets 0$ \Comment{Next unused instance number}
    \State \Global $\textit{cmds} \gets [[\bot,\ldots],\ldots]$ \Comment{Instances:}
      \State \hskip1.0em $\textit{cmd}$ - command to be executed
      \State \hskip1.0em $\textit{deps}$ - commands that must execute
        before this one
      \State \hskip1.0em $\textit{seq}$ - sequence \#, breaks cycles in dependency graph
      \State \hskip1.0em $\textit{base}$ - possible base update for rmw
      \State \hskip1.0em $\textit{status}$ - status of instance
    \Statex
    \Function{Server::RMWRecv}{$c, k, f(\cdot), d$}
      \State $i \gets i + 1$ \Comment{PreAccept Phase}
      \State $\textit{cmd} \gets (k,f(\cdot))$
      \State $\textit{seq} \gets 1 + \max(\{\textit{cmds}[j][\ell].\textit{seq} | (j,\ell) \in I_\textit{cmd}\} \cup \{0\})$
      \State $\textit{deps} \gets I_\textit{cmd}$
      \State $\textit{base} \gets (V[k],	\textit{CS}[k])$
      \State $\textit{cmds}[s][i] \gets (\textit{cmd},\textit{seq}, \textit{deps},
        \textit{base}, \textbf{pre-accepted})$
      \State \Send \Msg{PreAccept}{$\textit{cmd},\textit{seq},\textit{deps},
          \textit{base},s,i,d$} to all $s' \in F \setminus \{s\}$ where $F \in \mathcal{F}$
      \State \WaitUntil receive \Msg{PreAcceptOK}{$\textit{seq}_{s'}',\textit{deps}_{s'}',
        \textit{base}_{s'}'$} from all $s' \in F \setminus \{s\}$
      \Statex $\ldots$ \Comment{Rest of RMW coordinate unchanged}
    \EndFunction
    \Statex
    \Function{Server::PreAcceptRecv}{$\textit{cmd},\textit{seq},\textit{deps}, \textit{base},s',i,d$}
        \If{$d \neq \bot$}
            \State \Call{Apply}{$d.k, d.v, d.\textit{cs}$}
        \EndIf
        \State $\textit{seq}' \gets \max(\{\textit{seq}\} \cup \{1 +
            \textit{cmds}[j][\ell].\textit{seq} | (j,\ell) \in I_\textit{cmd}\}$
        \State $\textit{deps}' \gets \textit{deps} \cup I_\textit{cmd}$
        \If{$\textit{cs} > \textit{base}. \textit{cs}$}
          \State $\textit{base}' \gets (V[\textit{cmd}.k], \textit{CS}[\textit{cmd}.k])$
        \Else
          \State $\textit{base}' \gets  \textit{base}$
        \EndIf
        \State $\textit{cmds}[s'][i] \gets (\textit{cmd}, \textit{seq}',
            \textit{deps}', \textit{base}', \textbf{pre-accepted})$
        \State \Send \Msg{PreAcceptOK}{$\textit{seq}',\textit{deps}',
            \textit{base}'$} to $s'$\;
    \EndFunction
    \Statex $\ldots$ \Comment{Other message handlers unchanged}
\end{algorithmic}

\end{algorithm}

\noindentparagraph{\gryffrs{}.} Relaxing the consistency model from linearizability to
\rslong{} allows us to further optimize Gryff's read protocol. The Write Phase
of the read protocol is only necessary to ensure that subsequent reads observe
the same or newer values as previously completed reads, which is required for
linearizability. \Rslong{}, however, only requires this when the reads are causally related.

To take advantage of this weaker requirement, \gryffrs{} always omits the Write
Phase for reads and instead tracks a small amount of causal metadata to ensure
that causally related reads are ordered properly.
Algorithms~\ref{alg:gryff-rsc-client}, \ref{alg:gryff-rsc-server}, and
\ref{alg:gryff-rsc-server-rmw} show how this metadata is tracked. It is a
single tuple $d$ maintained by each client process. The tuple
comprises the key $d.k$, carstamp $d.\textit{cs}$, and value
$d.v$ of the most recent read the client completed that has not
yet been propagated to a quorum.

The metadata is populated with the carstamp and value of a read when the read
completes at the client and it does not have enough information to know
that the observed value already exists on a quorum. When the client executes
its next operation $o$, it piggybacks $d$ in the Read Phase of
$o$. For reads and writes, the client directly performs the Read Phase,
so $d$ is directly attached to \textit{Read1} and \textit{Write1} messages,
respectively. For rmws, the client forwards $d$ to the server that coordinates
the operation, and the server attachs $d$ to \textit{PreAccept} messages.

Replicas receiving the Read Phase messages first update their key-value stores
with the information contained in $d$, overwriting their local
carstamp and value for $d.k$ if $d.\textit{cs}$ is larger than
their current carstamp. Then the servers process the Read Phase messages as
normal in \gryff{}. The client clears $d$ as soon as it receives confirmation
that it has been propagated to a quorum, either at the end of the Read Phase for
reads and writes or when it receives notification that the operation is complete
for rmws.

In Algorithm~\ref{alg:gryff-rsc-server-rmw}, we omit the coordination of a rmw
beyond the PreAccept phase because the rest of the processing of \textit{PreAccept}
messages, the processing of \textit{Accept}
and \textit{Commit} messages, the recovery procedure, and the execution procedure are unchanged from \gryff{}. We refer the
reader to the complete description of \gryff{}~\cite{burke2020gryff} for more details.




\renewcommand{\invariant}[1]{\mathcal{I}_{#1}}

\section{Full Proof}
\label{sec:proof}

\subsection{Preliminaries}
\label{sec:proof:preliminaries}

\subsubsection{I/O Automata}
\label{sec:proof:preliminaries:ioa}

We model each component in our system as an I/O automaton
(IOA)~\cite{lynch1987ioa,lynch1996da}, a type of state machine. Each
transition of an IOA is labeled with an \textit{action}, which can be an
\textit{input}, \textit{output}, or \textit{internal} action. Input and output
actions allow the automaton to interact with other IOA and the environment. We
assume input actions are not controlled by an IOA---they may arrive at any time.
Conversely, output and internal actions are \textit{locally controlled}---an IOA defines
when they can be performed.

To specify an IOA, we must first specify its \textit{signature}. A signature $S$ is
a tuple comprising three disjoint sets of actions: input actions $\actin(S)$,
output actions $\actout(S)$, and internal actions $\actint(S)$. We also define
$\localacts(S) = \actout(S) \cup \actint(S)$ as the set of locally controlled actions, $\extacts(S) = \actin(S) \cup \actout(S)$ as the set of external actions, and $\acts(S) = \actin(S) \cup \actout(S) \cup \actint(S)$ as the set of all actions.

Formally, an I/O automaton $A$ comprises four items:
\begin{enumerate}
\item a signature $\sig(A)$,
\item a (possibly infinite) set of states $\states(A)$,
\item a set of start states $\start(A) \subseteq \states(A)$, and
\item a transition relation $\trans(A) \subseteq \states(A) \times
  \acts(\sig(A)) \times \states(A)$.
\end{enumerate}
Since inputs may arrive at any time, we assume that for every state $s$ and
input action $\pi$, there is some $(s, \pi, s^\prime) \in \trans(A)$.

An \textit{execution} of an I/O automaton $A$ is a finite or infinite
sequence of alternating states and actions $s_0,\pi_1,s_1,\ldots$ such that for
each $i \geq 0$, $(s_i, \pi_{i+1}, s_{i+1}) \in \trans(A)$ and $s_0 \in \start(A)$. Finite executions always end with a state.

Given an execution $\alpha$, we can also define
its \textit{schedule} $\sched(\alpha)$, which is the subsequence of just
the actions in $\alpha$. Similarly, an execution's \textit{trace}
$\trace(\alpha)$ is the subsequence of just the external actions $\pi \in
\extacts(A)$.

\subsubsection{Composition and Projection}
\label{sec:proof:preliminaries:composition}

To compose
two IOA, they must be \textit{compatible}. Formally, a finite set of
signatures $\{S_i\}_{i \in I}$ is compatible if for all $i,j \in I$ such that $i \neq j$:
\begin{enumerate}
\item $\actint(S_i) \cap \acts(S_j) = \emptyset$, and
\item $\actout(S_i) \cap \actout(S_j) = \emptyset$.
\end{enumerate}
A finite set of automata are compatible if their signatures are compatible.

Given a set of compatible signatures, we define their \textit{composition} $S =
\prod_{i \in I} S_i$ as the signature with $\actin(S) = \bigcup_{i \in I}
\actin(S_i) - \bigcup_{i \in I} \actout(S_i)$, $\actout(S) = \bigcup_{i \in I}
\actout(S_i)$, and $\actint(S) = \bigcup_{i \in I} \actint(S_i)$.

The composition of a set of compatible IOA yields the automaton $A =
\prod_{i \in I} A_i$ defined as follows:
\begin{enumerate}
\item $\sig(A) = \prod_{i \in I} \sig(A_i)$,
\item $\states(A) = \prod_{i \in I} \states(A_i)$,
\item $\start(A) = \prod_{i \in I} \start(A_i)$, and
\item $\trans(A)$ contains all $(s,\pi,s^\prime)$ such that for all $i \in I$, if $\pi \in \acts(\sig(A_i))$, then $(s_i,\pi,s_i^\prime) \in \trans(A_i)$ and otherwise, $s_i = s_i^\prime$.
\end{enumerate}
The states of the composite automaton $A$ are vectors of the states of the composed
automata. When an action occurs in
$A$, all of the component automata with that action each take a step simultaneously, as defined
by their individual transition relations. The resulting state differs in each of
the components corresponding to these automata, and the other components are
unchanged. We denote the composition of a small number of
automata using an infix operator, e.g., $A \times B$.

Given the execution $\alpha$ of a composed automaton $A = \prod_{i \in I} A_i$,
we can \textit{project} the execution onto one of the component automata $A_i$.
The execution $\alpha|A_i$ is found by removing all actions from
$\alpha$ that are not actions of $A_i$. The states of $\alpha|A_i$ are
given by the $i$th component of the corresponding state in $\alpha$. The
projection of a trace is defined similarly.

Further, we can write the projection of a state $s$ of $A$ on $A_i$ as $s|A_i$.  Finally, we can also project $\trace(\alpha)$ onto a set of actions $\Pi$ where $\trace(\alpha)|\Pi$ yields the subsequence of $\trace(\alpha)$ containing only actions in $\Pi$.

\subsubsection{Invariants}
\label{sec:proof:preliminaries:invariants}

Application programmers reason about their applications by reasoning about the
invariants that hold during all executions of their application. To formalize
this notion in the IOA model, we say a state is \textit{reachable} in automaton
$A$ if it is the final state of some finite execution of $A$. An
\textit{invariant} $\invariant{A}$ is a predicate on the states of $A$ that is
true for all reachable states of $A$~\cite{lynch1996da}. Similarly, if $A_i$ is
a component of some automaton $A$, then $\invariant{A_i}$ is an invariant of
$A$ if $\invariant{A_i}$ is true of $s|A_i$ for all reachable states $s$ of
$A$.

\subsubsection{Channels}
\label{sec:proof:preliminaries:channels}

\begin{table}[!t]
  \centering
  \begin{tabular}{l l l}
    \multicolumn{3}{l}{\textbf{Signature:}} \\ \hline
    Inputs: & & Outputs: \\
    $\sendto(m)_{ij}, m \in M$ & & $\sent_{ij}$ \\
    $\recvfrom_{ij}$ & & $\receive(m)_{ij}, m \in M$ \\ \hline
    \\
    \multicolumn{3}{l}{\textbf{States:}} \\ \hline
    \multicolumn{3}{l}{$Q$, a FIFO queue of elements of $M$, initially empty} \\
    \multicolumn{3}{l}{$e$, a Boolean, initially \textit{false}} \\
    \multicolumn{3}{l}{$r$, a Boolean, initially \textit{false}} \\ \hline
    \\
    \multicolumn{3}{l}{\textbf{Actions:}} \\ \hline
    \multicolumn{2}{l|}{$\sendto_{ij}(m)$:} & $\sent_{ij}$: \\
    \multicolumn{2}{l|}{Precondition: \textit{true}} & Precondition: $e$ \\
    \multicolumn{2}{l|}{Effect: $\textsc{Push}(Q,m); e \gets \textit{true}$} & Effect: $e \gets \textit{false}$ \\
    \hline \hline
    $\recvfrom_{ij}$: & \multicolumn{2}{|l}{$\receive_{ij}(m)$:} \\
    Precondition: \textit{true} & \multicolumn{2}{|l}{Precondition: $r \land m = \text{Head}(Q)$} \\
    Effect: $r \gets \textit{true}$ & \multicolumn{2}{|l}{Effect: $\textsc{Pop}(Q); r \gets \textit{false}$} \\ \hline
  \end{tabular}
  \captionof{figure}{Buffered Channel I/O Automaton}
  \label{fig:buffered-channel-ioa}
\end{table}

Each pair of processes in our system communicates via a pair of FIFO
channels that are asynchronous, reliable, and buffered. Each
channel's signature, states, and actions are specified in
Figure~\ref{fig:buffered-channel-ioa}.

We denote the channel that process $i$ uses to send messages to process $j$ as
$C_{ij}$. $C_{ij}$ has two sets of input actions, $\sendto_{ij}(m)$ and
$\recvfrom_{ij}$, and two sets of output actions, $\sent_{ij}$ and
$\receive_{ij}(m)$, for all $m$ in some space of messages $M$. Process $i$ has
corresponding output actions, $\sendto_{ij}(m)$ and $\recvfrom_{ji}$, and input
actions, $\sent_{ij}$ and $\receive_{ji}(m)$, for all other processes $j$ and
messages $m$. To send a message to process $j$, process $i$ takes a
$\sendto_{ij}$ step, and $C_{ij}$ subsequently takes a $\sent_{ij}$ step.
Similarly, to receive a message from $C_{ij}$, process $j$ takes a
$\recvfrom_{ij}$ step, and $C_{ij}$ subsequently takes a $\receive_{ij}(m)$ step.

The modeling of buffering in the channels differs from past
work~\cite{lynch1996da}. There, receive-from actions are omitted, and received
actions are modeled as output actions of channels and corresponding input
actions of processes. This implies processes cannot control when they
change their state in response to a message.

But in real applications, this is unrealistic. Although the network stack of a
machine may accept and process a packet at any time, application code controls
when it processes the contained message. For instance, the packet's contents remain in a kernel buffer until the application performs a read system call
on a network socket. This control is essential to our proof as it ensures
application processes do not receive messages while waiting for responses from
services.

We say an execution $\alpha$ of a channel $C_{ij}$ is \textit{well-formed} if
(1) $\trace(\alpha) | \{\sendto_{ij}(m)\}_{m \in M} \cup \{\sent_{ij}\}$ is a
sequence of alternating send-to and sent actions, starting with a send-to; and
similarly (2) $\trace(\alpha) | \{\recvfrom_{ij}\} \cup \{\receive_{ij}(m)\}_{m
  \in M}$ is a sequence of alternating receive-from and received actions,
starting with a receive-from. The following four lemmas show that adjacent pairs
of actions are commutative---reordering them in an execution always yields
another execution. Lemmas~\ref{lem:channels-commute1} shows this for adjacent
send-to and receive-from actions, Lemma~\ref{lem:channels-commute2} for adjacent
send-to and received actions, Lemma~\ref{lem:channels-commute3} for adjacent
sent and receive-from action, and finally, Lemma~\ref{lem:channels-commute4} for
adjacent sent and received actions.

\begin{lemma}
  \label{lem:channels-commute1}
  Let $\alpha$ be a well-formed, finite execution of $C_{ij}$, and let
  $\alpha^\prime = \trace(\alpha)$. If there exists some $\pi_s = \sendto_{ij}(m)$
  and $\pi_r = \recvfrom_{ij}$ that are adjacent in $\alpha^\prime$, then there
  exists a well-formed, finite execution $\beta$ of $C_{ij}$ with trace
  $\beta^\prime$ such that $\beta^\prime$ is identical to $\alpha^\prime$ but
  with the order of $\pi_s$ and $\pi_r$ reversed.
\end{lemma}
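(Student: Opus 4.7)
The plan is to exploit the fact that $\sendto_{ij}(m)$ and $\recvfrom_{ij}$ act on disjoint parts of the channel's state: the former mutates only $Q$ and $e$, while the latter mutates only $r$. Both have precondition \textit{true}, so they are enabled in every state, and executing them in either order from the same starting state yields the same final state. I therefore expect a routine disjoint-commutativity argument, essentially a commuting-diamond on the state space of $C_{ij}$.

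Concretely, since $C_{ij}$ has no internal actions, $\sched(\alpha)=\trace(\alpha)=\alpha^\prime$, so adjacency of $\pi_s$ and $\pi_r$ in $\alpha^\prime$ means they occupy consecutive positions $k$ and $k{+}1$ in $\alpha$ with intervening state $s_k$ (bracketed by $s_{k-1}$ and $s_{k+1}$). Assuming without loss of generality that $\pi_s$ precedes $\pi_r$ (the other case is symmetric), I would construct $\beta$ by keeping every other state and action identical to $\alpha$ and replacing the segment $s_{k-1},\pi_s,s_k,\pi_r,s_{k+1}$ with $s_{k-1},\pi_r,s_k^\ast,\pi_s,s_{k+1}$, where $s_k^\ast$ equals $s_{k-1}$ except that $r$ is set to \textit{true}. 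Since all other actions and states in $\beta$ coincide with those in $\alpha$, it is immediate that $\trace(\beta)$ differs from $\alpha^\prime$ exactly by the swap of $\pi_s$ and $\pi_r$.

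To verify $\beta$ is a valid, well-formed execution, I would check two things. First, the two new triples lie in $\trans(C_{ij})$: the first since $\recvfrom_{ij}$ is always enabled with effect $r\gets\textit{true}$, and the second since applying $\sendto_{ij}(m)$ to $s_k^\ast$ pushes $m$ onto $Q$ and sets $e\gets\textit{true}$, producing a state that agrees componentwise with $s_{k+1}$ (because $s_k^\ast$ matches $s_{k-1}$ on $Q$ and $e$, which is precisely where $\pi_s$ acts, and matches $s_{k+1}$ on $r$). Second, well-formedness reduces to the two independent projections in the definition; swapping a $\sendto_{ij}(m)$ past a $\recvfrom_{ij}$ leaves each projection of $\trace(\beta)$ unchanged from $\alpha^\prime$, so the required alternation of send-to/sent and receive-from/received actions is inherited. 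The only care needed is bookkeeping of which state components each action touches, and there is no real conceptual obstacle, since the commutativity follows immediately from the disjoint supports of the two actions.
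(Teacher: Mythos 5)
Your proposal is correct and follows essentially the same route as the paper's proof: both arguments are a disjoint-support commuting-diamond, swapping the two adjacent actions, inserting the intermediate state obtained by applying $\recvfrom_{ij}$ first (valid since both actions have vacuously true preconditions), observing that the resulting state after both steps coincides with the original $s_{k+1}$ because $\pi_s$ touches only $Q$ and $e$ while $\pi_r$ touches only $r$, and checking well-formedness via the two independent alternation projections, with the reversed-order case handled symmetrically. No gaps.
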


\begin{proof}
  Let $\alpha$ be a well-formed, finite execution of $C_{ij}$ and let
  $\alpha^\prime = \trace(\alpha)$. Assume that $\pi_s = \sendto_{ij}(m)$ and
  $\pi_r = \recvfrom_{ij}$ are adjacent in $\alpha^\prime$. We proceed by cases, so to start,
  assume $\pi_s$ is before $\pi_r$ in $\alpha^\prime$.
  
  We construct a sequence of alternating states and actions $\beta$ and show
  that $\beta$ is a well-formed, finite execution of $C_{ij}$.

  Define $k \geq 1$ such that $\pi_s$ is the $k$th action in $\alpha$. The
  sequence $\beta$ is identical to $\alpha$ in all states and actions except for
  the $k$th action, the $k$th state, the $(k+1)$th action, and the $(k+1)$th
  state. The $k$th action is $\pi_r$ and the $(k+1)$th action is $\pi_s$.
  Let $s_k$ be the $k$th state in $\beta$. The $k$th state $s_{k}$
  is defined such that $(s_{k-1},\pi_r,s_{k}) \in \trans(C_{ij})$.
  Similarly, the $(k+1)$th state $s_{k+1}$ is defined such that
  $(s_{k},\pi_s,s_{k+1}) \in \trans(C_{ij})$. By the definition of the actions
  of $C_{ij}$ in Figure~\ref{fig:buffered-channel-ioa}, these transitions must exist
  because the preconditions of $\pi_r$ and $\pi_s$ are vacuously true.

  We claim that $\beta$ is an execution of $C_{ij}$. Let $s_i$ be the $i$th
  state and $\pi_i$ be the $i$th action of $\beta$. Clearly the
  zeroth state $s_0$ in $\beta$ is in $\start(C_{ij})$ because $s_0$ is
  identical to the zeroth state of $\alpha$ and $\alpha$ is an execution of
  $C_{ij}$. Moreover, for $0 \leq i \leq k$, $(s_i,\pi_{i+1},s_{i+1}) \in
  \trans(C_{ij})$ because the first $k$ states and $k-1$ actions of $\beta$ are
  identical to the corresponding states and actions in $\alpha$ and $\alpha$ is
  an execution of $C_{ij}$. By the definition of $\pi_{k}$, $s_{k}$,
  $\pi_{k+1}$, and $s_{k+1}$, both $(s_{k-1},\pi_{k},s_{k})$ and
  $(s_{k},\pi_{k+1},s_{k+1})$ are in $\trans(C_{ij})$.

  Now consider the state $s_{k+1}$. Let $s_i^\alpha$ be the $i$th state and
  $\pi_i^\alpha$ be the $i$th action of $\alpha$. Since $\pi_s$ only modifies
  the $Q$ and $e$ variables of $C_{ij}$ and $\pi_r$ only modifies the $r$
  variable of $C_{ij}$, as defined in Figure~\ref{fig:buffered-channel-ioa}, the
  state after executing $\pi_s$ then $\pi_r$ is the same as the state after
  executing $\pi_r$ then $\pi_s$ when starting from the same state. By the
  previous fact, the fact that $s_{k+1}^\alpha$ is the state after executing
  $\pi_s$ then $\pi_r$ from $s_k^\alpha$, the fact that $s_{k+1}$ is the state
  after executing $\pi_r$ then $\pi_s$ from $s_k$, and the definition of $s_k =
  s_k^\alpha$, $s_{k+1}$ is identical to $s_{k+1}^\alpha$.

  By the previous fact and the definition of $\beta$, $\beta$ is identical to
  $\alpha$ for all states and actions after and including $s_{k+1}$. Hence, for
  all $i \geq 0$, $(s_i,\pi_i,s_{i+1}) \in \trans(C_{ij})$. This implies that
  $\beta$ is an execution of $C_{ij}$.

  Because $\beta$ is identical to $\alpha$ except for the order of $\pi_s$ and
  $\pi_r$ and the intervening states and because $\alpha$ is finite, $\beta$ is
  also finite. Furthermore, $\beta^\prime = \trace(\beta)$ is identical to
  $\alpha^\prime$ but with the order of $\pi_s$ and $\pi_r$ reversed.

  Lastly, $\alpha^\prime|\{\sendto_{ij}(m)\}_{m \in M} \cup \{\sent_{ij}\}$ is a
  sequence of alternating send-to and sent actions and similarly
  $\alpha^\prime|\{\recvfrom_{ij}\} \cup \{\receive_{ij}(m)\}_{m \in M}$ is a
  sequence of alternating receive-from and received actions because $\alpha$ is
  well-formed. Since $\beta$ is identical to $\alpha$ except for the order of a
  single pair of receive-from and send actions, $\beta$ is thus also
  well-formed.
  
  The case where $\pi_r$ precedes $\pi_s$ in $\alpha^\prime$
  can be shown using nearly identical reasoning.
\end{proof}

The remaining three proofs employ similar logic as above. For each, we
define $\alpha$, $\alpha^\prime$, $\beta$, and $\beta^\prime$ as above and to
start, assume $\pi_s$ precedes $\pi_r$ in $\alpha^\prime$.

Similarly, we define $k$ as above, so $\pi_k$ is $\pi_s$ in $\alpha$ but
$\pi_r$ in $\beta$. To conclude each proof, we then simply show that
$(s_{k-1},\pi_r,s_k) \in \trans(C_{ij})$, $(s_{k},\pi_s,s_{k+1}) \in
\trans(C_{ij})$, and $s_{k+1}$ is identical in $\alpha$ and $\beta$. The
remaining reasoning is identical to that above.

\begin{lemma}
  \label{lem:channels-commute2}
  Let $\alpha$ be a well-formed, finite execution of $C_{ij}$, and let
  $\alpha^\prime = \trace(\alpha)$. If there exists some $\pi_s = \sendto_{ij}(m)$
  and $\pi_r = \receive_{ij}(m^\prime)$ that are adjacent in $\alpha^\prime$
  with $m \neq m^\prime$, then there exists a well-formed, finite execution of
  $\beta$ of $C_{ij}$ with trace $\beta^\prime$ such that $\beta^\prime$ is
  identical to $\alpha^\prime$ but with the order of $\pi_s$ and $\pi_r$
  reversed.
\end{lemma}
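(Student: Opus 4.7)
The plan is to follow the structural template of the proof of Lemma~\ref{lem:channels-commute1}: construct $\beta$ by swapping $\pi_s$ and $\pi_r$ together with the intervening state $s_k$, then verify that $\beta$ is a well-formed, finite execution of $C_{ij}$ whose trace is $\alpha^\prime$ with the two actions transposed. The bulk of the work is a precondition check and a state-equality check at position $k+1$; the rest of the argument (constructing the later states, concluding that $\beta$ is an execution, and transferring well-formedness) carries over verbatim from the previous lemma.

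First I would treat the case where $\pi_s$ precedes $\pi_r$ in $\alpha^\prime$, letting $s_k^\alpha$ denote the state in $\alpha$ just before $\pi_s$. The precondition of $\pi_r$ after $\pi_s$ requires $m^\prime = \text{Head}(Q^\prime)$, where $Q^\prime$ is obtained from the queue in $s_k^\alpha$ by pushing $m$. Because $m \neq m^\prime$, the queue in $s_k^\alpha$ cannot have been empty, so $\text{Head}(Q)$ in $s_k^\alpha$ equals $\text{Head}(Q^\prime) = m^\prime$; moreover $\pi_s$ does not touch $r$. Hence $\pi_r$ is already enabled at $s_{k-1}$, and $\pi_s$ is vacuously enabled after $\pi_r$, so both $(s_{k-1},\pi_r,s_k)$ and $(s_k,\pi_s,s_{k+1})$ are transitions of $C_{ij}$ for a suitably defined $s_k$. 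Equality of $s_{k+1}$ with its counterpart in $\alpha$ then follows because $\pi_s$ modifies $(Q,e)$ while $\pi_r$ modifies $(Q,r)$: the effects on $e$ and $r$ are independent, and pushing $m$ onto the tail commutes with popping $m^\prime$ from the head of a queue with at least two elements.

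The case where $\pi_r$ precedes $\pi_s$ in $\alpha^\prime$ is symmetric and slightly easier: $\pi_r$'s precondition already holds at the state before it in $\alpha$, and after executing $\pi_s$ first the queue's head is still $m^\prime$ (since it was non-empty), so $\pi_r$ remains enabled. The state-equality argument is identical. The well-formedness of $\beta$ follows because a transposition of one send-to with one received action leaves both sub-sequences $\trace(\beta)|\{\sendto_{ij}(m)\}_{m\in M}\cup\{\sent_{ij}\}$ and $\trace(\beta)|\{\recvfrom_{ij}\}\cup\{\receive_{ij}(m)\}_{m\in M}$ pointwise equal to their counterparts in $\alpha$. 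The main obstacle, and the only place the hypothesis $m \neq m^\prime$ is used, is the precondition check above: without it, the queue in $s_k^\alpha$ could have been empty, in which case $\pi_r$ would not have been enabled before $\pi_s$ pushed $m$, and the swap would fail.
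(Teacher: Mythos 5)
Your proposal is correct and follows essentially the same route as the paper's proof: check that $\pi_r$'s precondition already holds before the swap (using $m \neq m'$ to conclude the queue was nonempty and its head unchanged by the push), note $\pi_s$ is always enabled, and argue the resulting state agrees because the push/pop on a nonempty queue commute and $e$, $r$ are modified independently, with the remaining bookkeeping inherited from the preceding lemma. No gaps.
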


\begin{proof}
  First, by the definitions of $C_{ij}$'s actions shown in
  Figure~\ref{fig:buffered-channel-ioa}, $\pi_s$ does not modify $r$. Combining
  this fact with the assumption that $m \neq m^\prime$, $\pi_r$'s precondition
  must hold in $s_{k-1}$. Thus, $(s_{k-1},\pi_r,s_k) \in \trans(C_{ij})$.
  Further, $(s_{k},\pi_s,s_{k+1}) \in \trans(C_{ij})$ because $\pi_s$'s
  precondition is vacuously true.
  
  Now consider the state $s_{k+1}$. Because $m \neq m^\prime$, $Q$ must not have
  been empty in $s_{k-1}$. Thus, the value of $Q$ resulting from the enqueue of
  $m$ by $\pi_s$ and the dequeue of $m^\prime$ by $\pi_r$ is identical to the
  value resulting from performing the two operations in the reverse order. Using
  this fact, since only $\pi_s$ sets $e$ and only $\pi_r$ sets $r$, $\pi_{k+1}$
  thus must be identical in both $\alpha$ and $\beta$.
  
  As above, the case where $\pi_r$ precedes $\pi_s$ can be shown using nearly
  identical reasoning.
\end{proof}

\begin{lemma}
  \label{lem:channels-commute3}
  Let $\alpha$ be a well-formed, finite execution of $C_{ij}$, and let
  $\alpha^\prime = \trace(\alpha)$. If there exists some $\pi_s = \sent_{ij}$
  and $\pi_r = \recvfrom_{ij}$ that are adjacent in $\alpha^\prime$, then there
  exists a well-formed, finite execution $\beta$ of $C_{ij}$ with trace
  $\beta^\prime$ such that $\beta^\prime$ is identical to $\alpha^\prime$ but
  with the order of $\pi_s$ and $\pi_r$ reversed.
\end{lemma}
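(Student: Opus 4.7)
The plan is to follow the template established by Lemmas~\ref{lem:channels-commute1} and~\ref{lem:channels-commute2}. Assuming first that $\pi_s = \sent_{ij}$ precedes $\pi_r = \recvfrom_{ij}$ at position $k$ in $\alpha$, I would construct $\beta$ to be identical to $\alpha$ except at positions $k$ and $k+1$: the $k$th action becomes $\pi_r$, the $(k+1)$th action becomes $\pi_s$, and the intervening state $s_k$ in $\beta$ is defined to be the unique state reached from $s_{k-1}$ via $\pi_r$.

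The only nontrivial verifications are the two transition memberships and the equality of $s_{k+1}$ across $\alpha$ and $\beta$. The transition $(s_{k-1},\pi_r,s_k) \in \trans(C_{ij})$ is immediate because $\recvfrom_{ij}$ has vacuously true precondition. For $(s_k,\pi_s,s_{k+1}) \in \trans(C_{ij})$, I must check that the precondition $e$ holds in $s_k$: it holds in $s_{k-1}$ because $\pi_s$ fires from $s_{k-1}$ in $\alpha$, and $\pi_r$ only modifies $r$ (per Figure~\ref{fig:buffered-channel-ioa}), so $e$ is unchanged in $s_k$. Equality of the post-state $s_{k+1}$ then follows because $\pi_s$ touches only $e$ (setting it to \textit{false}), $\pi_r$ touches only $r$ (setting it to \textit{true}), and neither touches $Q$; so the two orderings yield identical values for $Q$, $e$, and $r$.

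The remainder of the argument would be a routine repetition of the wrap-up in Lemma~\ref{lem:channels-commute1}: since $\beta$ agrees with $\alpha$ everywhere else, it is a finite execution of $C_{ij}$; $\trace(\beta)$ equals $\trace(\alpha)$ with $\pi_s$ and $\pi_r$ reversed; and well-formedness is preserved because $\pi_s$ lives in the send-to/sent subsequence while $\pi_r$ lives in the receive-from/received subsequence, so swapping them leaves both alternating subsequences intact. The symmetric case where $\pi_r$ precedes $\pi_s$ follows by the same reasoning, again using that $\pi_r$'s precondition is vacuous and $\pi_s$'s depends only on $e$, which $\pi_r$ leaves untouched.

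I do not expect a real obstacle, as the argument is essentially a mechanical instance of the pattern set by the previous two lemmas. The one point warranting care is the precondition check for $\pi_s$ in the swapped order, but it reduces to the observation that $\pi_r$ does not modify $e$, so $e$ propagates unchanged from $s_{k-1}$ to $s_k$.
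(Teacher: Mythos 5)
Your proposal is correct and follows essentially the same argument as the paper's proof: the vacuous precondition of $\recvfrom_{ij}$, the observation that $\recvfrom_{ij}$ leaves $e$ untouched so $\sent_{ij}$'s precondition survives the swap, and the disjointness of the modified state components ($e$ versus $r$) giving equality of $s_{k+1}$. The wrap-up and the symmetric case are handled exactly as the paper does, by deferring to the template of Lemma~\ref{lem:channels-commute1}.
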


\begin{proof}
  First, by the definitions of $C_{ij}$'s actions shown in
  Figure~\ref{fig:buffered-channel-ioa}, $\pi_r$'s precondition is vacuously
  true. Thus, $(s_{k-1},\pi_r,s_k) \in \trans(C_{ij})$. Further, because $\pi_r$
  does not modify $e$ and $s_{k-1}$ is identical in $\alpha$ and $\beta$,
  $\pi_s$'s precondition must hold in $s_{k}$. Thus, $(s_{k},\pi_s,s_{k+1}) \in
  \trans(C_{ij})$.

  Finally, consider the state $s_{k+1}$. Because $\pi_s$ only modifies $e$ and
  $\pi_r$ only modifies $r$, $\pi_{k+1}$ thus must be identical in both $\alpha$
  and $\beta$.

  As above, the case where $\pi_r$ precedes $\pi_s$ can be shown using nearly
  identical reasoning.
\end{proof}

\begin{lemma}
  \label{lem:channels-commute4}
  Let $\alpha$ be a well-formed, finite execution of $C_{ij}$, and let
  $\alpha^\prime = \trace(\alpha)$. If there exists some $\pi_s = \sent_{ij}$
  and $\pi_r = \receive_{ij}(m)$ that are adjacent in $\alpha^\prime$, then
  there exists a well-formed, finite execution $\beta$ of $C_{ij}$ with trace
  $\beta^\prime$ such that $\beta^\prime$ is identical to $\alpha^\prime$ but
  with the order of $\pi_s$ and $\pi_r$ reversed.
\end{lemma}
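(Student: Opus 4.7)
The plan is to follow the same template used for Lemmas~\ref{lem:channels-commute1}--\ref{lem:channels-commute3}, since the setup (defining $\beta$, $\beta'$, and the index $k$ at which the swap occurs) is identical and the author has already factored out the shared reasoning at the top of the Lemma~\ref{lem:channels-commute2} proof. What remains specific to this lemma is (a) verifying that both transitions $(s_{k-1}, \pi_r, s_k)$ and $(s_k, \pi_s, s_{k+1})$ lie in $\trans(C_{ij})$ after the swap, and (b) verifying that the post-swap state $s_{k+1}$ coincides with the $s_{k+1}$ produced in $\alpha$.

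First I would handle the precondition case $\pi_s$ precedes $\pi_r$ in $\alpha'$. Inspecting Figure~\ref{fig:buffered-channel-ioa}, $\pi_s = \sent_{ij}$ only inspects and modifies $e$, while $\pi_r = \receive_{ij}(m)$ only inspects and modifies $r$ and $Q$. Since the precondition of $\pi_r$ ($r \land m = \text{Head}(Q)$) held at $s_k^\alpha$ in $\alpha$ and $\pi_s$ does not touch $r$ or $Q$, the precondition of $\pi_r$ already held at $s_{k-1} = s_{k-1}^\alpha$. Thus $(s_{k-1}, \pi_r, s_k) \in \trans(C_{ij})$. Similarly, the precondition of $\pi_s$ is $e$, which held at $s_{k-1}$ in $\alpha$, and $\pi_r$ does not touch $e$, so the precondition of $\pi_s$ still holds at $s_k$ in $\beta$; hence $(s_k, \pi_s, s_{k+1}) \in \trans(C_{ij})$.

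Next I would argue state equivalence: because the effects of $\pi_s$ (setting $e$ to \textit{false}) and of $\pi_r$ (popping $Q$ and setting $r$ to \textit{false}) target disjoint state components, applying them in either order from $s_{k-1}$ yields the same resulting valuation of $(Q, e, r)$. Therefore $s_{k+1}$ in $\beta$ is identical to $s_{k+1}^\alpha$ in $\alpha$, and the suffix of $\beta$ can be taken to agree with $\alpha$ after this point, giving a valid execution. Finiteness, well-formedness, and the trace condition then follow exactly as in the earlier lemmas: $\beta$ differs from $\alpha$ only by the swap of one adjacent sent/received pair, which preserves both the alternation of $\sendto_{ij}/\sent_{ij}$ and the alternation of $\recvfrom_{ij}/\receive_{ij}$. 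The symmetric case where $\pi_r$ precedes $\pi_s$ in $\alpha'$ is handled by the same argument with the roles reversed.

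I do not expect any serious obstacle here: this is in fact the easiest of the four commutativity lemmas because $\pi_s$ and $\pi_r$ touch entirely disjoint state variables, so neither precondition reestablishment nor the queue-nonempty argument required in Lemma~\ref{lem:channels-commute2} is needed. The only minor care point is making sure the symmetric (reverse-order) subcase is mentioned, matching the structure of the previous three proofs.
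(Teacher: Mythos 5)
Your proposal is correct and matches the paper's proof essentially verbatim: both establish that $\sent_{ij}$ and $\receive_{ij}(m)$ read and write disjoint state components ($e$ versus $r$ and $Q$), so each action's precondition survives the swap, the resulting state $s_{k+1}$ is unchanged, and the suffix of $\alpha$ carries over, with the reverse-order case handled symmetrically. No gaps.
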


\begin{proof}
  First, by the definitions of $C_{ij}$'s actions shown in
  Figure~\ref{fig:buffered-channel-ioa}, $\pi_s$ does not modify $r$ or $Q$.
  Thus, since $s_{k-1}$ is identical in $\alpha$ and $\beta$, $\pi_r$'s
  precondition must hold in $s_{k-1}$. Thus, $(s_{k-1},\pi_r,s_k) \in
  \trans(C_{ij})$. Further, because $\pi_r$ does not modify $e$, $\pi_s$'s
  precondition must hold in $s_{k}$, so $(s_{k},\pi_s,s_{k+1}) \in
  \trans(C_{ij})$.

  Finally, consider the state $s_{k+1}$. Because $\pi_s$ only modifies $e$ and
  $\pi_r$ only modifies $r$ and $Q$, $\pi_{k+1}$ thus must be identical in both
  $\alpha$ and $\beta$.

  As above, the case where $\pi_r$ precedes $\pi_s$ can be shown using nearly
  identical reasoning.
\end{proof}

\subsubsection{Types and Services}
\label{sec:proof:preliminaries:services}

Processes in our system interact with \textit{services}, each with a specified
\textit{type}~\cite{herlihy1990linearizability,lynch1996da}. A service's type
$\type$ defines its set of possible \textit{values} $\vals(\type)$, an
\textit{initial value} $v_0 \in \vals(\type)$, and the \textit{operations}
$\ops(\type)$ that can be invoked on the service. Each operation $o \in
\ops(\type)$ is defined by a pair of sets of actions: \textit{invocations} $\invs(o)$ and
\textit{responses} $\resps(o)$. Each contains subscripts denoting a unique service name and a process index. An invocation and response \textit{match} if their subscripts are equal.
Finally, each service has a \textit{sequential specification} $\spec$, a
prefix-closed set of sequences of matching invocation-response pairs.

For example, consider a read/write register $x$ that supports a set of $n$
processes and whose values is the set of integers. The read operation would then
be defined with invocations $\{\textsc{read}_{i,x}\}$ and responses
$\{\textsc{ret}_{i,x}(j)\}$ for all $0 \leq i \leq n$ and $j \in \mathcal{N}$.
Similarly, the write operation would have invocations
$\{\textsc{write}_{i,x}(j)\}$ and responses $\{\textsc{ack}_{i,x}\}$. Finally,
its sequential specification would be the set of all sequences of reads and
writes such that reads return the value written by the most recent write or the
initial value if none exists.

We can also compose types $\{\type_x\}_{x \in X}$ and sequential
specifications $\{\spec_x\}_{x \in X}$. Formally, the values of $\type = \prod_{x
  \in X} \type_x$ are vectors of the values of the composed types. $\ops(\type)$
is the union of those of the composed types. Finally, the composite sequential
specification $\spec = \prod_{x \in X} \spec_x$ is the set of all interleavings of the the invocation-response pairs of the component
specifications $\spec_x$.

\subsubsection{System Model}
\label{sec:proof:preliminaries:model}

We model a distributed application as the composition of two finite sets of I/O
automata: processes and channels. $n$ denotes the number of processes, so there are $n^2$ channels. The processes execute the application's code by performing local computation, exchanging messages via channels, and performing invocations on and receiving responses from services.

In the results below, we are interested in reasoning about which process invariants hold while assuming various correctness conditions of the services they interact with. Thus, we do not model services as IOA. Instead, we assume the processes interact with a (possibly composite) service with an arbitrary type $\type$ and $\spec$ defined for $n$ processes.

For each operation $o \in \ops(\type)$, process $P_i$ is then assumed to have an output action for every invocation action in $\invs(o)$ with process index $i$. Similarly, $P_i$ has an input action for every response action in $\resps(o)$ with process index $i$. We refer to these input and output actions as a process's \textit{system-facing} actions $\system(P_i)$.

To model stop failures, we assume each process $P_i$ has an input
action $\textit{stop}_i$ such that after receiving it, $P_i$ ceases taking
steps. If $\textit{stop}_i$ occurs while $P_i$ is waiting for a response from a service, then we assume the service does not return a response. But the operation may still cause a service's state to change, and this change may be visible to operations by other processes.

Finally, to allow the distributed application to receive input from and return values to its environment (e.g., users), we assume each $P_i$ has a set of \textit{user-facing} actions $\user(P_i)$. Similar to a process's interactions with services, a user's interaction with a process is modeled through input-output pairs of user-facing actions.

We make two final assumptions about the processes: First, we assume that while
each process has access to a local clock, which is part of its state, and may
set local timers, which are internal actions, the process makes no assumptions
about the drift or skew of its clock relative to others. Second, processes only
invoke an operation on a service when they have no outstanding send-to or
receive-from actions at any channels.

Let $P = \prod_{i \in I} P_i$ be the composition of the $n$ processes and $C = \prod_{1 \leq i \leq n} \prod_{1 \leq j \leq n} C_{ij}$ be the composition of $n^2$ channels. Let $\alpha$ be an execution of the distributed application $P \times C$. $\alpha$ is \textit{well-formed} if it satisfies three criteria:
First, for all $P_i$, $\trace(\alpha) | \system(P_i)$ must be a sequence of alternating invocation and matching response actions, starting with an invocation. Second, for all $C_{ij}$, $\alpha | C_{ij}$ is well-formed. Third, for all $P_i$, $P_i$ does not take an output step while waiting for a response from some service. 

\subsubsection{Real-Time Precedence}
\label{sec:proof:preliminaries:rt-precedence}

Given an execution $\alpha$, let $\complete(\alpha)$ be the maximal subsequence of
$\alpha$ comprising only matching system-facing invocations and responses~\cite{herlihy1990linearizability}. Further, let $S$ be a sequence of invocation and response actions. We define an irreflexive partial order $\rt_S$ as the real-time order induced by $S$; $\pi_1 \rt_S \pi_2$ if and only if $\pi_1$ is a response action, $\pi_2$ is an invocation action, and $\pi_1 <_S \pi_2$ where $<_S$ is the total order defined by $S$.

Execution $\alpha$ of $P \times C$ thus induces the irreflexive partial order over the invocation and response actions it contains defined by $\rt_{\trace(\alpha)|\system(P)}$. For simplicity, we denote this as simply $\rt_\alpha$. A well-formed execution $\alpha_1$ of $P \times C$ satisfies \textit{real-time precedence} if $\alpha_1$ can be extended to $\alpha_2$ adding zero or more response actions such that there exists a sequence $S \in \spec$ where (1)
for all processes $P_i$, $\complete(\alpha_2) | P_i = S | P_i$, and
(2) $\rt_{\alpha_1} \subseteq <_S$.

\subsubsection{Potential Causality}
\label{sec:proof:preliminaries:causality}

To define the next notion of precedence, we must first define causality. An execution $\alpha$ induces an irreflexive partial order $\caused_\alpha$ on its actions,
reflecting the notion of \textit{potential causality}
\cite{lamport1978clocks,ahamad1995causal,lloyd2011cops}. $\pi_1 \caused_\alpha
\pi_2$ if one of the following is true:
\begin{enumerate}
\item $\pi_1$ precedes $\pi_2$ in some process's local execution $\alpha|P_i$;
\item $\pi_1$ is a $\sendto_{ij}(m)$ action and $\pi_2$ is its corresponding $\receive_{ij}(m)$ action;
\item $\pi_1$ is a response action of operation $o_1$ and $\pi_2$ is an invocation action of operation $o_2$ such that $o_2$'s return value includes the effect of $o_1$; or
\item there exists some action $\pi_3$ such that $\pi_1 \caused_\alpha \pi_3$ and $\pi_3 \caused_\alpha \pi_2$.
\end{enumerate}

The meaning of item three depends on the type and specification of the service that the processes interact with through their system-facing actions. For example, for a shared register, $\pi_1 \caused_\alpha \pi_2$ if $\pi_1$ is the response of a write and $\pi_2$ is the invocation of a read that returns the written value. Similarly, for a FIFO queue, $\pi_1 \caused_\alpha \pi_2$ if $\pi_1$ is the response of an enqueue operation and $\pi_2$ is the invocation of a dequeue operation that returns the enqueued value.

This definition of
potential causality subsumes prior definitions, which either only consider
messages passed between processes~\cite{lamport1978clocks} or only consider
causality between operations on a shared data
store~\cite{ahamad1995causal,lloyd2011cops}.

\subsubsection{Causal Precedence}
\label{sec:proof:preliminaries:causal-precedence}

A well-formed execution $\alpha_1$ of $P \times C$ satisfies \textit{causal precedence} if $\alpha_1$ can be extended to $\alpha_2$ adding zero or more
response actions such that there exists a sequence $S \in \spec$ where (1)
for all processes $P_i$, $\complete(\alpha_2) | P_i = S | P_i$, and
(2) for all pairs of system-facing actions $\pi_1$ and $\pi_2$, $\pi_1 \caused_{\alpha_1} \pi_2 \implies \pi_1 <_S \pi_2$.

\subsection{Causal Precedence Maintains Invariants}
\label{sec:proof:causal-precedence}

We now show that for every well-formed, finite execution $\alpha$ of $P \times
C$ that satisfies causal precedence, there is a corresponding well-formed,
finite execution $\beta$ of $P \times C$ that satisfies real-time precedence
such that each process proceeds through the same sequence of actions and states.

\begin{lemma}
  \label{theorem:executions}
  Suppose $\alpha$ is a finite execution of $P \times C$ that satisfies causal precedence. Then there exists a finite execution $\beta$ of $P \times C$ such that $\beta$ satisfies real-time precedence and for all processes $P_i$, $\alpha|P_i = \beta|P_i$.
\end{lemma}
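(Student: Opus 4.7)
The plan is to lift the reordering argument of Lemma~\ref{thm:executions:rss} to the full IOA model, combining it with the channel commutativity lemmas (Lemmas~\ref{lem:channels-commute1}--\ref{lem:channels-commute4}). Since $\alpha$ satisfies causal precedence, there is an extension $\alpha_2$ of $\alpha$ and a sequence $S \in \spec$ such that $<_S$ extends the restriction of $\caused_{\alpha}$ to the system-facing actions of $\complete(\alpha_2)$. First I would define a target schedule $\beta^\prime$ from $\sched(\alpha)$ using the same priority relations $\prec$ and $\equiv$ as in Lemma~\ref{thm:executions:rss}: order each action after the maximal (by $<_S$) system-facing action that causally precedes it, breaking ties by the original order in $\alpha$.

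Next I would verify that no two actions at the same process are swapped. The argument mirrors Step~1 of Lemma~\ref{thm:executions:rss}: if $\pi_1, \pi_2$ are both actions of $P_i$ with $\pi_2$ preceding $\pi_1$ in $\sched(\alpha)$ but $\pi_1 \prec \pi_2$ in $\beta^\prime$, then $\pi_2 \caused_\alpha \pi_1$ by process order, so transitivity gives $\pi_3 \caused_\alpha \pi_1$ for the witness $\pi_3$, contradicting its strict maximality over the system-facing causal predecessors of $\pi_1$.

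Then I would realize the reordering as a finite sequence of adjacent transpositions applied to $\sched(\alpha)$. By the previous step, every such transposition swaps actions at two distinct processes. Either the two actions involve entirely disjoint components, in which case they commute trivially by the definition of IOA composition, or they are two actions at a common channel $C_{ij}$ whose process endpoints lie at different processes, in which case Lemmas~\ref{lem:channels-commute1}--\ref{lem:channels-commute4} give a valid channel execution after the swap. Iteratively applying these swaps yields $\beta^\prime$, and inserting the induced composite states produces an execution $\beta$ of $P \times C$. Well-formedness transfers from $\alpha$: each $\beta|P_i$ equals $\alpha|P_i$ by the second step; each $\beta|C_{ij}$ remains well-formed because the commutativity lemmas preserve channel well-formedness; and the ``no output while awaiting a response'' condition carries over with $\beta|P_i$.

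Finally I would verify real-time precedence. Applying the same reordering to $\alpha_2$ produces an extension $\beta_2$ of $\beta$ whose schedule, restricted to system-facing actions of any single $P_i$, agrees with $S | P_i$ by construction, and for which $\rt_{\beta} \subseteq\, <_S$ since every response--invocation pair in $\beta$ has its response ordered after its causal predecessors and hence no later than the corresponding position in $S$. The main obstacle is the third step: turning the abstract priority ordering of step one into a concrete sequence of adjacent swaps compatible with the composite transition relation of $P \times C$, and checking that channel states remain consistent throughout. The assumption that processes have no outstanding channel actions when they invoke a service, together with the four channel commutativity lemmas, is precisely what lets the decomposition go through without crossing an invocation--response barrier or violating a channel precondition.
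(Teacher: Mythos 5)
Your proposal is correct and follows essentially the same route as the paper's proof: the same $\prec$/$\equiv$ reordering keyed to the maximal $<_S$-causal predecessor, the same per-process non-reordering contradiction, the same reliance on Lemmas~\ref{lem:channels-commute1}--\ref{lem:channels-commute4} to repair channel states, and the same extension of $\beta$ by the missing responses to exhibit real-time precedence. The only cosmetic difference is that you realize the reordering as a global sequence of adjacent transpositions, whereas the paper copies the process states directly and applies the commutativity lemmas per channel projection; the content is identical.
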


\begin{proof}
  We show how to construct $\beta$ from an arbitrary, finite
  $\alpha$. We start by focusing on the actions in $\alpha$, so let $\alpha^\prime =
  \sched(\alpha)$. We first construct a schedule $\beta^\prime$
  from $\alpha^\prime$ and then replace the states to get $\beta$.

  Since $\alpha$ satisfies causal precedence, there exists a sequence $S \in
  \spec$ such that $<_S$ respects $\caused_{\alpha}$ and thus
  $\caused_{\alpha^\prime}$. As defined in
  Sections~\ref{sec:proof:preliminaries:rt-precedence}
  and~\ref{sec:proof:preliminaries:causal-precedence}, $S$ includes zero or more
  response actions that are not necessarily in $\alpha^\prime$. Let $R$ denote
  this set of responses. $<_S$ is thus defined over the set of complete
  operations in $\alpha^\prime$ as well as the additional set of invocations
  completed by responses in $R$.

  As allowed by the definition in
  Section~\ref{sec:proof:preliminaries:causal-precedence}, however, there may be
  some invocation actions in $\alpha^\prime$ that do not have matching responses
  in $R$, so $<_S$ does not order them. Thus, we extend $<_S$ to $<_S^\prime$ by
  placing these removed invocation actions at the end of $<_S$ in an arbitrary
  order.

  Let $O$ be the totally ordered set defined by the set of all invocation and
  response actions in $\alpha^\prime$ and $<_S^\prime$. To get $\beta^\prime$,
  we reorder all of the actions in $\alpha^\prime$ by ordering each action
  after the maximal element of $O$ that causally precedes it.
  
  To do so, we first define two relations: Given two actions $\pi_1$ and
  $\pi_2$, let $\pi_1 \prec \pi_2 \iff \exists \pi_3 \in O, \forall \pi_4 \in O
  : (\pi_3 \caused_{\alpha^\prime} \pi_2 \land \pi_4 \caused_{\alpha^\prime}
  \pi_1) \implies \pi_4 <_S^\prime \pi_3$. Finally, let $\pi_1 \equiv \pi_2$ if
  $\pi_1 \not\prec \pi_2$ and $\pi_2 \not\prec \pi_1$. It is clear that $\prec$
  is an irreflexive partial order over the actions in $\alpha^\prime$.
  
  Let $<_{\alpha^\prime}$ be the total order of actions given by their order in $\alpha^\prime$.
  To define $<_{\beta^\prime}$, we use $<_{\alpha^\prime}$ to extend $\prec$ to a total order. In particular,
  let $\pi_1 <_{\beta^\prime} \pi_2$ if $\pi_1 \prec \pi_2$ or $\pi_1 \equiv
  \pi_1$ and $\pi_1 <_{\alpha^\prime} \pi_2$. Thus,
  $<_{\beta^\prime}$ is a total order over the actions in $\alpha^\prime$. Let $\beta^\prime$ be the schedule defined by $<_{\beta^\prime}$.

  We show that this does not reorder any actions at any of the processes, and
  thus $\alpha^\prime|P_i = \beta^\prime| P_i$ for all $P_i$. Assume to
  contradict that there exists some pair of actions $\pi_1,\pi_2$ from the same
  $P_i$ that have been reordered in $\beta^\prime$. Without loss of generality,
  assume $\pi_2 <_{\alpha^\prime} \pi_1$ but $\pi_1 <_{\beta^\prime} \pi_2$.

  It is clear that $\pi_1 \not\equiv \pi_2$ because otherwise by the definition
  of $<_{\beta^\prime}$, $\pi_1$ and $\pi_2$ would be ordered identically in
  $\alpha^\prime$ and $\beta^\prime$. Thus, by the definition of
  $<_{\beta^\prime}$ and the assumption that $\pi_1 <_{\beta^\prime} \pi_2$, it
  must be that $\pi_1 \prec \pi_2$.

  Let $\pi_3 \in O$ be as defined by $\pi_1 \prec \pi_2$, so by definition,
  $\pi_3 \caused_{\alpha^\prime} \pi_2$. Observe that since $\pi_2
  <_{\alpha^\prime} \pi_1$ and $\pi_1$ and $\pi_2$ are from the same process,
  $\pi_2 \caused_{\alpha^\prime} \pi_1$. Thus, $\pi_3 \caused_{\alpha^\prime}
  \pi_1$ by the transitivity of $\caused_{\alpha^\prime}$. But since
  $<_S^\prime$ is an irreflexive total order, $\pi_3 \not<_S^\prime \pi_3$,
  contradicting the definition of $\pi_3$. Thus, $\beta^\prime$ is a schedule
  such that $\alpha^\prime|P_i = \beta^\prime|P_i$ for all $P_i$.
  
  To get $\beta$ from $\beta^\prime$, we now just need to replace
  the states of $P \times C$. Since we did not reorder any of the actions at any
  of the processes, we can simply use the same process states from $\alpha$ for
  the process components of the states in $\beta$.

  We just need to fill in the states of each channel $C_{ij}$. Since
  $\alpha^\prime|P_i = \beta^\prime| P_i$ for all $P_i$ and by the definitions
  of $\caused_{\alpha}$ and causal precedence, only some pairs of actions may
  have been reordered when transforming $\alpha^\prime$ to $\beta^\prime$.
  Specifically, the order of a $\sendto_{ij}(m)$ action may be reordered with
  respect to a $\recvfrom_{ij}$ action or a $\receive_{ij}(m^\prime)$ action for
  $m \neq m^\prime$. Similarly, the order of a $\sent_{ij}$ action may be
  reordered with a $\recvfrom_{ij}$ or a $\receive_{ij}(m)$ action.

  Let $\kappa_1 = \alpha|C_{ij}$, $\kappa_1^\prime = \trace(\kappa_1)$, and
  $\kappa_2^\prime = \beta^\prime|C_{ij}$. Since the only possible differences
  between $\kappa_1^\prime$ and $\kappa_2^\prime$ are the re-orderings described
  above, it is possible, using repeated applications of
  Lemmas~\ref{lem:channels-commute1} through~\ref{lem:channels-commute4} if
  necessary, to find an execution $\kappa_2$ from $\kappa_1$ such that
  $\trace(\kappa_2) = \kappa_2^\prime$. Thus, to fill in the states of each
  $C_{ij}$, we first find $\kappa_2$ and then use the states from this execution
  to fill in $C_{ij}$'s states in $\beta$.
  
  Since we previously showed that $\alpha^\prime|P_i = \beta^\prime| P_i$, it is
  clear that $\alpha|P_i = \beta|P_i$ for all $P_i$ and $\beta$ is well-formed.
  Further, because we did not add any states or actions at any of the processes
  or channels, it is clear that $\beta$ is finite.
  
  Since $S \in \spec$ was a sequence of matching invocation-response pairs, by
  the definition of $<_{\beta^\prime}$, $\complete(\beta)$ is sequential. But
  $\complete(\beta) | P_i$ may not equal $S|P_i$ if $\textit{stop}_i$ occurred
  while $P_i$ was waiting for a service response and the operation's effects
  affected the responses of operations by other processes. To show that $\beta$
  satisfies real-time precedence, we thus extend $\beta$ to $\gamma$ by adding
  those response actions that are in $S$ but are not in $\beta$. These are
  exactly the response actions in $R$ (defined above) that were originally added
  to satisfy the definition of causal precedence for $\alpha$. After this
  addition, it is clear that $\gamma|P_i = S|P_i$ for all $P_i$ and since
  $\complete(\beta)$ is sequential, it respects the real-time precedence of the
  operations. Thus, $\beta$ thus satisfies real-time precedence.
\end{proof}

\begin{theorem}
  \label{theorem:invariants}
  Suppose $\invariant{P}$ is an invariant that holds for any execution $\beta$ of $P \times C$ that satisfies real-time precedence. Then $\invariant{P}$ also holds for any execution $\alpha$ of $P \times C$ that satisfies causal precedence.
\end{theorem}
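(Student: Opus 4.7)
The plan is to reduce Theorem~\ref{theorem:invariants} to Lemma~\ref{theorem:executions} in essentially the same way that Theorem~\ref{thm:invariants:rss} reduces to Lemma~\ref{thm:executions:rss}, but being careful about the richer model here (stop failures, channels as separate automata, and invariants defined as predicates on reachable states of a composed IOA).

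First, I would fix an arbitrary reachable state $s$ of $P \times C$ that is the final state of some finite execution $\alpha$ satisfying causal precedence, and let $s' = s|P$ be its projection onto the process component. The goal is to show $\invariant{P}(s')$. I would apply Lemma~\ref{theorem:executions} to obtain a finite execution $\beta$ of $P \times C$ that satisfies real-time precedence and has the property that $\alpha|P_i = \beta|P_i$ for every process $P_i$. By hypothesis, the final state $t$ of $\beta$ is reachable in $P \times C$ and $\beta$ satisfies real-time precedence, so $\invariant{P}(t|P)$ holds.

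Next, I would argue $s|P = t|P$. Since $\alpha|P_i$ and $\beta|P_i$ are identical sequences of alternating process states and actions for each $i$, the final $P_i$-state in $\alpha$ coincides with the final $P_i$-state in $\beta$; taking the product over all $i$ gives $s|P = t|P$. Hence $\invariant{P}(s') = \invariant{P}(t|P)$, which we just established. Since $s$ was an arbitrary reachable state, $\invariant{P}$ holds on every reachable state of $P \times C$ that is reached by a causal-precedence execution.

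The only real obstacle is keeping the quantification straight: $\invariant{P}$ is a predicate on $P$-states only, so I must be careful that Lemma~\ref{theorem:executions} guarantees per-process agreement and not merely trace-level equivalence. The lemma as stated gives exactly $\alpha|P_i = \beta|P_i$ for each $P_i$, which is strictly stronger than equivalence of external traces and is precisely what is needed to conclude $s|P = t|P$; the channel components $s|C_{ij}$ and $t|C_{ij}$ may legitimately differ, but those do not enter $\invariant{P}$. A minor subtlety worth noting in the write-up is that invariants are defined over \emph{finite} executions (reachable states), which is why I restrict attention to finite $\alpha$ and rely on the fact that Lemma~\ref{theorem:executions} preserves finiteness; no induction on execution length is needed beyond what is already embedded in the lemma.
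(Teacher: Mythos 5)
Your proposal is correct and follows essentially the same route as the paper's proof: apply Lemma~\ref{theorem:executions} to obtain a real-time-precedence execution $\beta$ with $\alpha|P_i = \beta|P_i$ for all $P_i$, observe that the final process states therefore coincide, and transfer the invariant. Your added remark that the channel components may differ but do not enter $\invariant{P}$ is a correct refinement the paper leaves implicit.
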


\begin{proof}
  Let $\alpha$ be an arbitrary, finite execution of $P \times C$ that satisfies causal precedence. We must show that $\invariant{P}$ is true for the final state $s$ of $\alpha|P$.

  By Lemma~\ref{theorem:executions}, there exists a finite execution $\beta$ of $P
  \times C$ that satisfies real-time precedence and for all processes $P_i$, $\alpha|P_i = \beta|P_i$.
  
  Let $s^\prime$ be the final state of $\beta|P$. Because $\alpha|P_i =
  \beta|P_i$ for all $P_i$, it is easy to see that $s^\prime = s$. By assumption, $\invariant{P}$ is true of $s^\prime$, so
  $\invariant{P}$ is also true of $s$.
\end{proof}

\subsection{\rs{} and \rss{} Maintain Application Invariants}
\label{sec:proof:rs}

In this section, we leverage the results above to show that our new consistency models, \rslong{} and \rsslong{}, maintain application invariants that hold with linearizability~\cite{herlihy1990linearizability} and strict serializability~\cite{papadimitriou1979serializability}, respectively. We start by defining the two new consistency models and then present the results.

\subsubsection{\RSlong{}}
\label{sec:proof:rs:rs}

\textit{\Rslong{}} (\rs{}) guarantees casual precedence and also requires that writes respect their real-time order. \newtext{In particular, any conflicting operation that follows a write in real-time must reflect the state change of that write in its result.}

Let $\mathcal{W} \subseteq \mathcal{O}$ be the subset of the (possibly composite) service's operations that mutate its value.
\newtext{Further, given an execution $\alpha$ and write $w \in \mathcal{W}$, define
$\conflicts_\alpha(w)$ as the set of non-mutating operations in $\alpha$ that
conflict with $w$.}
A well-formed execution $\alpha_1$ of $P \times C$ satisfies \rs{} if $\alpha_1$
can be extended to $\alpha_2$ by adding zero or more response actions such that
there exists a sequence $S \in \spec$ where (1) for all processes $P_i$,
$\complete(\alpha_2) | P_i = S | P_i$; (2) for all pairs of invocation and
response actions $\pi_1$ and $\pi_2$, $\pi_1 \caused_{\alpha_1} \pi_2 \implies
\pi_1 <_S \pi_2$; and (3)
\newtext{for all response actions $\pi_1$ of $w \in \mathcal{W}$ and invocation actions $\pi_2$
of $o \in \conflicts_{\alpha_1}(w) \cup \mathcal{W}$ , $\pi_1
\rt_{\alpha_1} \pi_2 \implies \pi_1 <_S \pi_2$.}

\subsubsection{\RSSlong{}}
\label{sec:proof:rs:transactions}

Before defining \rsslong{}, we must first discuss how transactions can be defined within the formal framework presented above. Fortunately, the formalism of types, services, and sequential specifications presented in Section~\ref{sec:proof:preliminaries:services} is sufficiently general and can be easily adapted to transactions. 

\noindentparagraph{Transactional services.} We refer to services that support transactions as \textit{transactional services}. For example, consider a transactional key-value store $\mathcal{D}$ that stores a mapping from a set of keys $\mathcal{K}$ and values $\mathcal{V}$ including some initial value $\bot$. The read-only transaction operation is defined with invocations $\{\textsc{ro}_{i,\mathcal{D}}(K)\}$ and responses
$\{\textsc{ret}_{i,\mathcal{D}}(V)\}$ for all $0 \leq i \leq n$, $K \in 2^\mathcal{K}$, and $V \in 2^\mathcal{V}$. Let $f : 2^\mathcal{K} \times 2^\mathcal{V} \times K \to V$ be a function that takes as input a set of keys, their corresponding values, and a single key (that may or may not be in first set) and returns a value. The read-write transaction operation then has invocations $\{\textsc{rw}_{i,\mathcal{D}}(R, W, f)\}$ and responses $\{\textsc{ack}_{i,\mathcal{D}}\}$ for all $0 \leq i \leq n$ and $R, W \in 2^\mathcal{K}$.

The transactional key-value store's sequential specification is the set of all sequences of read-only and read-write transactions satisfying the following: (1) Reads of a key $k$ in both read-only and read-write transactions return the most recently written value for $k$ or $\bot$ if none exists. (2) For each $k \in W$ of a read-write transaction, the transaction writes value $f(R,V,k)$ where $V$ is the set of read values.

\noindentparagraph{Composition.} Unlike some prior work~\cite{herlihy1990linearizability}, when composing the types and sequential specifications of transactional services, we do not assume that transactions are extended across multiple services in the composition. For instance, the composition of two transactional key-value stores does not yield a single transactional key-value store whose operations are transactions that possibly span both key sets. As a result, the prior definitions can be used without modification for transactional services.

\noindentparagraph{\RSSlong{}.} Thanks to the generality of our definition, \textit{\rsslong{}} (\rss{}) is simply \rslong{} applied to a transactional service, such as the transactional key-value store described above. The set of writes $\mathcal{W} \subseteq \mathcal{O}$ is simply the set of read-write transactions, \newtext{and given a read-write transaction $w \in \mathcal{W}$, the set of non-mutating conflicts $\conflicts_\alpha(w)$ in an execution $\alpha$ is simply the set of read-only transactions that read a key written by $w$.}

\subsubsection{Proof Results}

Given the definitions above, we are now ready to show that \rs{} and \rss{} maintain application invariants. In fact, these results follow as corollaries of Theorem~\ref{theorem:invariants}.

\begin{corollary}
  Suppose $\invariant{P}$ is an invariant that holds for any execution $\beta$ of $P \times C$ that satisfies linearizability. Then $\invariant{P}$ also holds for any execution $\alpha$ of $P \times C$ that satisfies \rs{}.
\end{corollary}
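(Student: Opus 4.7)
The plan is to derive this corollary directly from Theorem~\ref{theorem:invariants}, which already does the heavy lifting of showing that any invariant preserved under real-time precedence is also preserved under causal precedence. The two remaining observations I need are: (i) \emph{real-time precedence} as defined in Section~\ref{sec:proof:preliminaries:rt-precedence} coincides with the standard notion of linearizability in this formal framework (the existence of an $S \in \spec$ matching each process's sub-execution with $\rt_{\alpha_1} \subseteq <_S$ is exactly Herlihy--Wing linearizability), and (ii) any execution satisfying \rs{} trivially satisfies causal precedence, since \rs{}'s definition strictly strengthens causal precedence by adding the third real-time condition on writes.

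Concretely, I would proceed in three short steps. First, fix an arbitrary execution $\alpha$ of $P \times C$ satisfying \rs{}. Let $\alpha_2$ and $S \in \spec$ be the extension and sequence witnessing \rs{} for $\alpha$. I would then observe that $\alpha_2$ and $S$ also witness that $\alpha$ satisfies causal precedence: conditions (1) and (2) in the definitions of \rs{} and causal precedence are identical, and the extra condition (3) in \rs{} is simply dropped. Second, I would note that the hypothesis---$\invariant{P}$ holds for every linearizable execution---is equivalent to saying $\invariant{P}$ holds for every execution satisfying real-time precedence, so the hypothesis of Theorem~\ref{theorem:invariants} is met. Third, I would invoke Theorem~\ref{theorem:invariants} to conclude that $\invariant{P}$ holds for $\alpha$.

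I do not expect any substantive obstacle, since all of the work of transforming an execution and preserving final states is already encapsulated in Lemma~\ref{theorem:executions} and Theorem~\ref{theorem:invariants}. The only point requiring care is making the identification between linearizability and real-time precedence explicit (or, alternatively, re-stating Theorem~\ref{theorem:invariants} with ``linearizability'' in place of ``real-time precedence''), so that the chain \textit{linearizability-invariant} $\Rightarrow$ \textit{causal-precedence-invariant} $\Rightarrow$ \textit{\rs{}-invariant} is transparent. The analogous corollary for \rss{} and strict serializability would be established in exactly the same way, using that \rss{} is \rs{} instantiated on a transactional service and that strict serializability coincides with real-time precedence in that setting.
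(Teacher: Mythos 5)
Your proposal is correct and follows essentially the same route as the paper: the paper's proof likewise observes that linearizability corresponds to real-time precedence and that \rs{} implies causal precedence, and then invokes Theorem~\ref{theorem:invariants} directly. Your explicit check that the \rs{} witness $(\alpha_2, S)$ also witnesses causal precedence (conditions (1) and (2) coincide, condition (3) is merely dropped) is exactly the right justification, stated slightly more carefully than the paper's one-line argument.
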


\begin{proof}
  By their definitions, linearizability~\cite{herlihy1990linearizability} guarantees real-time precedence and \rs{} guarantees causal precedence for a set of non-transactional services, respectively. Thus, the corollary follows immediately from Theorem~\ref{theorem:invariants}.
\end{proof}

\begin{corollary}
  Suppose $\invariant{P}$ is an invariant that holds for any execution $\beta$ of $P \times C$ that satisfies strict serializability. Then $\invariant{P}$ also holds for any execution $\alpha$ of $P \times C$ that satisfies \rss{}.
\end{corollary}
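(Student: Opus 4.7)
The plan is to mirror the proof of the previous corollary by reducing directly to Theorem~\ref{theorem:invariants}. To do so, I need to verify two correspondences in the transactional setting: that \rss{} implies causal precedence as defined in Section~\ref{sec:proof:preliminaries:causal-precedence}, and that strict serializability implies real-time precedence as defined in Section~\ref{sec:proof:preliminaries:rt-precedence}.

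For the first correspondence, I would observe that \rss{} is, by the definition in Section~\ref{sec:proof:rs:transactions}, just \rs{} instantiated at a transactional service. Clauses (1) and (2) of the \rss{} definition match the two clauses of causal precedence verbatim, and clause (3) (the write-ordering clause) only further constrains $<_S$. Hence every execution satisfying \rss{} also satisfies causal precedence. For the second correspondence, strict serializability demands that the complete operations be equivalent to some $S \in \spec$ whose total order $<_S$ contains the response-before-invocation order of the execution; this is exactly the requirement $\rt_{\alpha} \subseteq <_S$ in the definition of real-time precedence.

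With both correspondences in hand, the corollary drops out of Theorem~\ref{theorem:invariants}: if $\invariant{P}$ holds on every execution of $P \times C$ satisfying real-time precedence, then in particular it holds on every strictly serializable execution; and by Theorem~\ref{theorem:invariants} it therefore holds on every execution satisfying causal precedence, which includes every \rss{} execution.

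The main obstacle, and essentially the only place one could slip, is confirming that the composition conventions for transactional services in Section~\ref{sec:proof:rs:transactions} do not disturb either correspondence. In particular, since our composition rule does not extend a transaction across constituent services, the invocation/response actions on the composite transactional service are still ordinary matching invocation/response pairs in the sense required by Sections~\ref{sec:proof:preliminaries:rt-precedence} and~\ref{sec:proof:preliminaries:causal-precedence}, so Theorem~\ref{theorem:invariants} applies without modification.
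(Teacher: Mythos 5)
Your argument matches the paper's: both establish that \rss{} implies causal precedence and strict serializability implies real-time precedence for a transactional service, and then invoke Theorem~\ref{theorem:invariants}. Your additional check that the composition convention for transactional services preserves the matching invocation-response structure is a reasonable elaboration of the same reduction, not a different route.
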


\begin{proof}
  By their definitions, strict serializability~\cite{papadimitriou1979serializability} guarantees real-time precedence and \rss{} guarantees causal precedence for a set of transactional services, respectively. Thus, the corollary follows immediately from Theorem~\ref{theorem:invariants}.
\end{proof}

\subsection{\rs{} Composition Using \RTBarriers{}}
\label{sec:proof:composition}

The definitions and proofs in this section mirror very similar results proved for ordered sequential consistency (OSC)~\cite{levari2017osc}. The differences in the proofs primarily result from differences in the definitions of OSC and \rs{} and differences in our notation. The techniques and proof steps are nearly identical, but we include them for completeness.

The main result in this section shows that a special mechanism, a \textit{\rtbarrier{}}, can be used to compose a set of \rs{} services and ensure their composition satisfies \rs{}. As a result, the results in the previous section regarding application invariants will hold. 

\subsubsection{Definitions and Assumptions}

We focus here on composition, so we need to distinguish between a sequence $S \in \spec$ of a composite \rs{} service, as used above, and the corresponding serializations of each service $x \in X$. We denote such serializations as $S_x \in \spec_x$.

In some of the definitions and results below, instead of assuming that an execution $\alpha$ satisfies \rs{}, we assume each service $x \in X$ individually satisfies \rs{}. In other words, we do not assume there is a sequence $S \in \spec$ of the composite service that satisfies the definition in Section~\ref{sec:proof:rs:rs}, and instead assume each $S_x \in \spec_x$ satisfies the definition. In the remainder of the section, we make clear which we assume.

\subsubsection{\RTBarriers{}}

\textit{\Rtbarriers{}} are special operations, one per service $x \in X$, that help compose a set of \rs{} services. Each \barrier{} $f_x$ has exactly one invocation and response action, which we denote $i_{f_x}$ and $r_{f_x}$, and gives the following guarantees: Let $\alpha$ be a well-formed execution and $f_x$ be a real-time \barrier{} on service $x$. Then for all system-facing actions $\pi \in S_x$, (1) if $\pi \caused_\alpha i_{f_x}$, then $\pi <_{S_x} i_{f_x}$; and (2) if $r_{f_x} \rt_\alpha \pi$, then $r_{f_x} <_{S_x} \pi$. As a result, any system-facing actions that causally precede the \barrier{} are serialized in $S$ before any that follow it in real time.
 
Given an execution $\alpha$ of services $X$ that individually satisfy \rs{}, we define a \barrier{} $f_x$'s \textit{past set}, denoted $\pastset_\alpha(f_x)$, as the set of actions $\pi \in S_x$ such that $\pi \leq_{S_x} i_{f_x}$ where $\leq_{S_x}$ extends $<_{S_x}$ in the natural way. Further, define a \barrier{}'s \textit{last invocation} $\lastinv_\alpha(f_x)$ as the latest invocation in $\pastset_\alpha(f_x)$. Note $\lastinv_\alpha(f_x)$ can be $i_{f_x}$.

\subsubsection{Proof}

Before we can prove our main result, we first introduce several lemmas. First, we show that we can define a total order over the set of \barriers{} in an execution, even if those \barriers{} were issued at different \rs{} services. Next, we lift this to a total order over all system-facing actions at all services. Finally, we leverage this total order to show that if processes follow a simple protocol, then the composition of a set of \rs{} services also satisfies \rs{}. 

\begin{lemma}
\label{theorem:composition:lem-1}
Given an execution $\alpha$ of services $X$ that individually satisfy \rs{}, for all \barriers{} $f_x$, $\lastinv_\alpha(f_x) <_\alpha r_{f_x}$, where $<_\alpha$ is the strict total order of actions defined by $\alpha$.
\end{lemma}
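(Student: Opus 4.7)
My plan is to prove the lemma by contradiction, using the second barrier guarantee together with the fact that a sequential specification interleaves matching invocation-response pairs atomically. First I would let $\pi = \lastinv_\alpha(f_x)$ and dispatch the easy case: if $\pi = i_{f_x}$, then $\pi <_\alpha r_{f_x}$ holds immediately because $\alpha$ is well-formed, so every invocation precedes its matching response.

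In the nontrivial case, $\pi$ is a distinct invocation in $\pastset_\alpha(f_x)$, so by the definition of the past set $\pi <_{S_x} i_{f_x}$. Note first that $\pi$ really does appear in $\alpha$: invocations are never added when extending $\alpha_1$ to $\alpha_2$ in the \rs{} definition, so every invocation in $S_x$ already appears in $\alpha = \alpha_1$, making $<_\alpha$ well-defined on $\pi$. Now assume for contradiction that $\pi$ does not precede $r_{f_x}$ in $\alpha$. Since $\pi$ and $r_{f_x}$ are distinct actions (one is an invocation, the other a response) this forces $r_{f_x} <_\alpha \pi$, and because $r_{f_x}$ is a response and $\pi$ is an invocation, by definition $r_{f_x} \rt_\alpha \pi$.

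Applying barrier guarantee (2) to $f_x$ then yields $r_{f_x} <_{S_x} \pi$. Because $S_x \in \spec_x$ is a sequence of matching invocation-response pairs, $r_{f_x}$ is the response immediately following $i_{f_x}$ in $S_x$, so $i_{f_x} <_{S_x} r_{f_x}$. Combined with $r_{f_x} <_{S_x} \pi$ and transitivity of $<_{S_x}$, we get $i_{f_x} <_{S_x} \pi$, contradicting $\pi <_{S_x} i_{f_x}$. Hence $\pi <_\alpha r_{f_x}$, as required.

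The only subtle point, and the main thing to be careful about, is justifying that $i_{f_x}$ is immediately followed by $r_{f_x}$ in $S_x$ (so invoking the barrier's real-time guarantee on $\pi$ actually produces a statement incompatible with $\pi$'s membership in the past set). This relies on the structural assumption baked into the sequential specification: a service's specification comprises sequences of matching invocation-response pairs with no intervening actions from the same operation, which the paper states explicitly when defining $\spec$. Everything else is a short contradiction chain built on the barrier's defining properties and the well-formedness of $\alpha$.
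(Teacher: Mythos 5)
Your proof is correct and follows essentially the same route as the paper's: assume the ordering fails, conclude $r_{f_x} \rt_\alpha \lastinv_\alpha(f_x)$, apply the \barrier{}'s second guarantee to get $r_{f_x} <_{S_x} \lastinv_\alpha(f_x)$, and contradict the definition of $\lastinv_\alpha(f_x)$. The only differences are cosmetic: you dispatch the $\lastinv_\alpha(f_x) = i_{f_x}$ case explicitly via well-formedness and spell out the final $<_{S_x}$ chain, both of which the paper leaves implicit.
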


\begin{proof}
We prove by contradiction, so suppose $r_{f_x} <_\alpha \lastinv_\alpha(f_x)$. Then clearly, $\lastinv_\alpha(f_x) \neq i_{f_x}$. Since $r_{f_x} <_\alpha \lastinv_\alpha(f_x)$, $r_{f_x} \rt_\alpha \lastinv_\alpha(f_x)$ by the definition of $\rt_\alpha$. But then by the definition of $f_x$, $r_{f_x} <_{S_x} \lastinv_\alpha(f_x)$, contradicting the definition of $\lastinv_\alpha(f_x)$.
\end{proof}

\begin{lemma}
\label{theorem:composition:lem-2}
Given an execution $\alpha$ of services $X$ that individually satisfy \rs{}, let $f_x$ and $f_x^\prime$ be two \barriers{} in $S_x$. If $r_{f_x} <_{S_x} i_{f_x^\prime}$, then $\lastinv_\alpha(f_x) \leq_\alpha \lastinv_\alpha(f_x^\prime)$, where $\leq_\alpha$ extends $<_\alpha$ in the natural way.
\end{lemma}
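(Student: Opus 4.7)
The plan is to argue directly that $\lastinv_\alpha(f_x)$ is itself an invocation that belongs to $\pastset_\alpha(f_x^\prime)$, and then appeal to the definition of $\lastinv_\alpha(f_x^\prime)$ as the $<_\alpha$-latest invocation of that past set. Writing $i_1$ for $\lastinv_\alpha(f_x)$, I would first chain three facts in $<_{S_x}$: (i) $i_1 \leq_{S_x} i_{f_x}$, because $i_1 \in \pastset_\alpha(f_x)$ by definition; (ii) $i_{f_x} <_{S_x} r_{f_x}$, because $S_x \in \spec_x$ is a sequence of matching invocation--response pairs, so each invocation is immediately followed in $<_{S_x}$ by its own response; and (iii) $r_{f_x} <_{S_x} i_{f_x^\prime}$, which is the hypothesis of the lemma. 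Transitivity of $<_{S_x}$ then gives $i_1 <_{S_x} i_{f_x^\prime}$, hence $i_1 \leq_{S_x} i_{f_x^\prime}$, i.e., $i_1 \in \pastset_\alpha(f_x^\prime)$.

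Next, $\lastinv_\alpha(f_x)$ is by construction an invocation action, and the argument above places it in $\pastset_\alpha(f_x^\prime)$. Since $\lastinv_\alpha(f_x^\prime)$ is defined as the latest invocation in $\pastset_\alpha(f_x^\prime)$ with respect to $<_\alpha$ (the interpretation that makes Lemma~\ref{theorem:composition:lem-1} non-trivial, since otherwise $\lastinv_\alpha(f_x)$ would always equal $i_{f_x}$), maximality immediately yields $i_1 \leq_\alpha \lastinv_\alpha(f_x^\prime)$, which is the desired conclusion.

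I do not expect a substantive obstacle: the argument is essentially a transitivity chain followed by invoking a definition. The only subtlety is justifying step (ii), $i_{f_x} <_{S_x} r_{f_x}$; this is where one relies on the fact, established in Section~\ref{sec:proof:preliminaries:services}, that elements of $\spec_x$ are sequences of \emph{matching} invocation--response pairs, so within any such sequence an invocation is immediately succeeded by its response. A minor secondary point worth noting explicitly is that every invocation appearing in $S_x$ also appears in $\alpha$ (extending $\alpha$ to $\alpha_2$ under \rs{} only adjoins responses, never invocations), so $<_\alpha$ is indeed defined on $i_1$ and $\lastinv_\alpha(f_x^\prime)$ and the final comparison makes sense.
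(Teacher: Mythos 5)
Your proof is correct and follows essentially the same route as the paper's: the paper establishes $\pastset_\alpha(f_x) \subset \pastset_\alpha(f_x^\prime)$ and concludes $\lastinv_\alpha(f_x) \in \pastset_\alpha(f_x^\prime)$, whereas you prove that same membership directly via the transitivity chain $\lastinv_\alpha(f_x) \leq_{S_x} i_{f_x} <_{S_x} r_{f_x} <_{S_x} i_{f_x^\prime}$, then both arguments finish by maximality of $\lastinv_\alpha(f_x^\prime)$ under $<_\alpha$. Your explicit justification of the ``latest with respect to $<_\alpha$'' reading and of why every invocation in $S_x$ appears in $\alpha$ is a welcome clarification but not a different approach.
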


\begin{proof}
Since $r_{f_x} <_{S_x} i_{f_x^\prime}$, $\pastset_\alpha(f_x) \subset \pastset_\alpha(f_x^\prime)$ by the definition of $\pastset_\alpha$. Then since $\lastinv_\alpha(f_x) \in \pastset_\alpha(f_x)$ and $\pastset_\alpha(f_x) \subset \pastset_\alpha(f_x^\prime)$, $\lastinv_\alpha(f_x) \in \pastset_\alpha(f_x^\prime)$. Thus, either $\lastinv_\alpha(f_x) = \lastinv_\alpha(f_x^\prime)$ or there is some later invocation action in $\pastset(f_x^\prime)$ with a later invocation. As a result, $\lastinv_\alpha(f_x) \leq_\alpha \lastinv_\alpha(f_x^\prime)$.
\end{proof}

Given an execution $\alpha$ of services $X$ that individually satisfy \rs{}, we use these lemmas to define a strict total order over all of the \barriers{} in $\alpha$. For each pair of \barriers{} $f_x$ and $f_y$, we define $f_x \triangleleft f_y$ as follows: If $x = y$, then $f_x \triangleleft f_y$ if and only if $r_{f_x} <_{S_x} i_{f_y}$; otherwise, $x \neq y$, and $f_x \triangleleft f_y$ if and only if $\lastinv_\alpha(f_x) <_\alpha \lastinv_\alpha(f_y)$.

\begin{lemma}
\label{theorem:composition:lem-3}
$\triangleleft$ is a strict total order.
\end{lemma}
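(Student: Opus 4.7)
The plan is to verify the four properties that a strict total order must satisfy (irreflexivity, asymmetry, trichotomy, transitivity), handling the same-service and cross-service branches of the definition separately.

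Irreflexivity and asymmetry fall out immediately: in the same-service case they are inherited from the corresponding properties of $<_{S_x}$ applied to the non-interleaving pair $(i_{f_x}, r_{f_x})$, and in the cross-service case from the same properties of $<_\alpha$ applied to the two distinct last invocations. For trichotomy, given distinct $f_x \neq f_y$: if $x = y$ the actions $i_{f_x}, r_{f_x}, i_{f_y}, r_{f_y}$ all lie in $S_x$, and since matching invocation–response pairs do not interleave in any sequential specification, one of $r_{f_x} <_{S_x} i_{f_y}$ or $r_{f_y} <_{S_x} i_{f_x}$ holds. If $x \neq y$, then $\lastinv_\alpha(f_x) \in S_x$ and $\lastinv_\alpha(f_y) \in S_y$ are distinct actions (the services have disjoint action sets), so totality of $<_\alpha$ gives one ordering.

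The substantive work is transitivity. Given $f_x \triangleleft f_y$ and $f_y \triangleleft f_z$, I would split into cases based on the equality pattern of the service indices $x, y, z$. The case $x = y = z$ follows from transitivity of $<_{S_x}$, and the case where $x, y, z$ are pairwise distinct follows from transitivity of $<_\alpha$. In the mixed cases $x = y \neq z$ and $x \neq y = z$, I would apply Lemma~\ref{theorem:composition:lem-2} to convert the one same-service step into the inequality $\lastinv_\alpha(f_x) \leq_\alpha \lastinv_\alpha(f_y)$ (or $\lastinv_\alpha(f_y) \leq_\alpha \lastinv_\alpha(f_z)$), then chain with the other cross-service step using transitivity of $<_\alpha$.

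The main obstacle is the remaining case $x = z \neq y$: the hypotheses yield only $\lastinv_\alpha(f_x) <_\alpha \lastinv_\alpha(f_y) <_\alpha \lastinv_\alpha(f_z)$, whereas the desired conclusion $f_x \triangleleft f_z$ is a statement about $<_{S_x}$, namely $r_{f_x} <_{S_x} i_{f_z}$. My plan is to argue by contradiction: suppose $r_{f_x} \not<_{S_x} i_{f_z}$. Since $f_x \neq f_z$ and the matching pairs $(i_{f_x}, r_{f_x})$ and $(i_{f_z}, r_{f_z})$ do not interleave in $S_x$, the only remaining possibility is $r_{f_z} <_{S_x} i_{f_x}$, i.e., $f_z \triangleleft f_x$ in the same-service sense. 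Applying Lemma~\ref{theorem:composition:lem-2} then yields $\lastinv_\alpha(f_z) \leq_\alpha \lastinv_\alpha(f_x)$, which directly contradicts the chain $\lastinv_\alpha(f_x) <_\alpha \lastinv_\alpha(f_z)$ obtained from the hypotheses. Lemma~\ref{theorem:composition:lem-2} is therefore the essential bridge that lets ordering information move back and forth between $<_{S_x}$ and $<_\alpha$, and every nontrivial case of transitivity reduces to an invocation of it.
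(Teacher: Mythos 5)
Your proof is correct and follows essentially the same route as the paper's: both reduce everything to a case analysis on the equality pattern of the service indices, with Lemma~\ref{theorem:composition:lem-2} serving as the bridge between $<_{S_x}$ and $<_\alpha$ in every mixed case. Your contradiction argument for the $x = z \neq y$ case is just the contrapositive form of the step the paper uses, so the content is identical.
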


\begin{proof}
We must prove $\triangleleft$ is irreflexive, transitive, and total. Irreflexivity and totality follow from the definitions of $\triangleleft$ and $<_{S_x}$. We now show $\triangleleft$ is transitive.

Let $f_x$, $f_y$, and $f_z$ be \barriers{} such that $f_x \triangleleft f_y$ and $f_y \triangleleft f_z$. We must show $f_x \triangleleft f_z$. There are four cases:

If $x = y = z$, then the transitivity of $<_{S_x}$ implies $r_{f_x} <_{S_x} i_{f_z}$, so $f_x \triangleleft f_z$. If $x = y \neq z$, then since $x=y$ and $f_x \triangleleft f_y$, $\lastinv_\alpha(f_x) \leq_\alpha \lastinv_\alpha(f_y)$ by the Lemma~\ref{theorem:composition:lem-2}. By the definition of $\triangleleft$, since $y \neq z$, $\lastinv_\alpha(f_y) <_\alpha \lastinv_\alpha(f_z)$, so $\lastinv_\alpha(f_x) <_\alpha \lastinv_\alpha(f_z)$ and $f_x \triangleleft f_z$. Similar reasoning applies to the case where $x \neq y = z$. Finally, if $x \neq y$ and $y \neq z$, then by the definition of $\triangleleft$ and the transitivity of $<_\alpha$, $\lastinv_\alpha(f_x) <_\alpha \lastinv(f_z)$. Clearly if $x \neq z$, then $f_x \triangleleft f_z$. Further, if $x = z$, then by the contrapositive of Lemma~\ref{theorem:composition:lem-2}, $i_{f_x} <_{S_x} r_{f_z}$. Then since $S_x$ is a sequence of invocation-response pairs, $r_{f_x} <_{S_x} i_{f_z}$, so $f_x \triangleleft f_z$.
\end{proof}

$\triangleleft$ defines a strict total order over the \barriers{} in executions involving multiple \rs{} services. To extend this to a total order over all system-facing actions, we first define a system-facing action's \textit{next \barrier{}}, denoted $\nextbar_\alpha(\pi)$. Specifically, given an execution $\alpha$ of services $X$ that individually satisfy \rs{}, a \rs{} service $x \in X$, and a system-facing action $\pi_x \in S_x$, define $\nextbar_\alpha(\pi_x)$ as the earliest \barrier{} $f_x$ such that $\pi_x \leq_{S_x} r_{f_x}$. To ensure $\nextbar_\alpha(\pi)$ is defined for all $\pi$, we assume $\alpha$ is augmented with a sequence of \barrier{} invocation-response pairs $i_{\top_x},r_{\top_x}$, one for each $x \in X$, that are added to the end of $\alpha$.

We use next \barriers{} to lift $\triangleleft$ to all system-facing actions. Let $\alpha$ be an execution of services $X$ that satisfy \rs{} individually; $x,y \in X$ be two \rs{} services; and $\pi_x,\pi_y$ be two system-facing actions on $x$ and $y$, respectively. Define $\pi_x \prec \pi_y$ as follows: If $\nextbar_\alpha(\pi_x) \neq \nextbar_\alpha(\pi_y)$, then $\pi_x \prec \pi_y$ if and only if $\nextbar_\alpha(\pi_x) \triangleleft \nextbar_\alpha(\pi_y)$; otherwise, $\nextbar_\alpha(\pi_x) = \nextbar_\alpha(\pi_y)$, so $x = y$ and $\pi_x \prec \pi_y$ if and only if $\pi_x <_{S_x} \pi_y$.

We prove two facts about $\prec$: First, it is a strict total order. Second, it generalizes the $S_x$ of each service $x \in X$.

\begin{lemma}
\label{theorem:composition:lem-4}
$\prec$ is a strict total order.
\end{lemma}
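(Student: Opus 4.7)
The plan is to verify the three defining properties of a strict total order: irreflexivity, totality (every pair is comparable), and transitivity. The first two follow almost directly from the corresponding properties of $\triangleleft$ established in Lemma~\ref{theorem:composition:lem-3} and the per-service strict total orders $<_{S_x}$. For irreflexivity, note that if $\nextbar_\alpha(\pi_x) \neq \nextbar_\alpha(\pi_y)$ then $\pi_x \prec \pi_x$ would require $\nextbar_\alpha(\pi_x) \triangleleft \nextbar_\alpha(\pi_x)$, contradicting irreflexivity of $\triangleleft$; otherwise it would require $\pi_x <_{S_x} \pi_x$, contradicting irreflexivity of $<_{S_x}$. For totality, I would observe that the subscript of a \barrier{} identifies its service, so $\nextbar_\alpha(\pi_x) = \nextbar_\alpha(\pi_y)$ forces $x = y$, and then totality falls out of whichever underlying order applies.

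The bulk of the work is transitivity. I would take $\pi_x$, $\pi_y$, $\pi_z$ with $\pi_x \prec \pi_y$ and $\pi_y \prec \pi_z$, and case-split on which of the next \barriers{} $\nextbar_\alpha(\pi_x)$, $\nextbar_\alpha(\pi_y)$, $\nextbar_\alpha(\pi_z)$ coincide. When all three are distinct, both hypotheses are $\triangleleft$-comparisons and the conclusion follows from transitivity of $\triangleleft$. When $\nextbar_\alpha(\pi_x) = \nextbar_\alpha(\pi_y) = \nextbar_\alpha(\pi_z)$, all three actions live on the same service $x = y = z$ and the conclusion follows from transitivity of $<_{S_x}$.

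The main obstacle, and the only non-routine work, is the mixed cases, e.g., $\nextbar_\alpha(\pi_x) = \nextbar_\alpha(\pi_y) \neq \nextbar_\alpha(\pi_z)$ or $\nextbar_\alpha(\pi_x) \neq \nextbar_\alpha(\pi_y) = \nextbar_\alpha(\pi_z)$. In the first, $x = y$, so the first hypothesis gives $\pi_x <_{S_x} \pi_y$ while the second gives $\nextbar_\alpha(\pi_y) \triangleleft \nextbar_\alpha(\pi_z)$; substituting $\nextbar_\alpha(\pi_x)$ for $\nextbar_\alpha(\pi_y)$ yields $\nextbar_\alpha(\pi_x) \triangleleft \nextbar_\alpha(\pi_z)$, which by definition gives $\pi_x \prec \pi_z$. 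The symmetric mixed case is analogous. The only subtlety is handling the possibility $x = z$ with $\nextbar_\alpha(\pi_x) \neq \nextbar_\alpha(\pi_z)$: here the $\triangleleft$-comparison still determines $\prec$ by definition, so no additional reasoning about $<_{S_x}$ is required. Once these cases are checked, the lemma is complete.
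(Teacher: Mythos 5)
Your proof is correct and follows essentially the same route as the paper's: both reduce irreflexivity and totality to the corresponding properties of $\triangleleft$ and $<_{S_x}$, and both prove transitivity by combining the transitivity of $\triangleleft$ (Lemma~\ref{theorem:composition:lem-3}) with that of the per-service orders --- the paper merely packages your case split more compactly by first deriving $\nextbar_\alpha(\pi_x) \trianglelefteq \nextbar_\alpha(\pi_y) \trianglelefteq \nextbar_\alpha(\pi_z)$ and noting that equality of the endpoints forces all three next \barriers{} (hence all three services) to coincide. One small bookkeeping point: your enumeration omits the case $\nextbar_\alpha(\pi_x) = \nextbar_\alpha(\pi_z) \neq \nextbar_\alpha(\pi_y)$, but it is vacuous, since the hypotheses would then yield $\nextbar_\alpha(\pi_x) \triangleleft \nextbar_\alpha(\pi_y) \triangleleft \nextbar_\alpha(\pi_z) = \nextbar_\alpha(\pi_x)$, contradicting the irreflexivity of $\triangleleft$.
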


\begin{proof}
We must prove $\prec$ is irreflexive, transitive, and total. Irreflexivity and totality follow from the definitions of $\triangleleft$ and $<_{S_x}$. We now show $\prec$ is transitive.

Let $\pi_x$, $\pi_y$, and $\pi_z$ be system-facing actions on services $x$, $y$, and $z$, respectively, such that $\pi_x \prec \pi_y$ and $\pi_y \prec \pi_z$. We must show $\pi_x \prec \pi_z$.

By the definition of $\prec$, since $\pi_x \prec \pi_y$, $\nextbar_\alpha(\pi_x) \trianglelefteq \nextbar_\alpha(\pi_y)$, where $\trianglelefteq$ extends $\triangleleft$ in the natural way. By similar reasoning, $\nextbar_\alpha(\pi_y) \trianglelefteq \nextbar_\alpha(\pi_z)$, so $\nextbar_\alpha(\pi_x) \trianglelefteq \nextbar_\alpha(\pi_z)$ by the transitivity of $\triangleleft$.

If $\nextbar_\alpha(\pi_x) \triangleleft \nextbar_\alpha(\pi_z)$, then by the definition of $\prec$, $\pi_x \prec \pi_z$, so suppose $\nextbar_\alpha(\pi_x) = \nextbar_\alpha(\pi_z)$. Then since $\nextbar_\alpha(\pi_x) \trianglelefteq \nextbar_\alpha(\pi_y) \trianglelefteq \nextbar_\alpha(\pi_z)$, it must be the case that $x = y = z$, so by the definition of $\prec$, $\pi_x <_{S_x} \pi_y$ and $\pi_y <_{S_x} \pi_z$. Thus, $\pi_x <_{S_x} \pi_z$, and by the definition of $\prec$, $\pi_x \prec \pi_z$.
\end{proof}

\begin{lemma}
\label{theorem:composition:lem-5}
Given an execution $\alpha$ of services $x \in X$ that individually satisfy \rs{}, a \rs{} service $x \in X$, and two system-facing actions $\pi_1,\pi_2 \in S_x$, if $\pi_1 <_{S_x} \pi_2$, then $\pi_1 \prec \pi_2$.
\end{lemma}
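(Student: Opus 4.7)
The plan is to split on the two cases in the definition of $\prec$, using the fact that $\pi_1$ and $\pi_2$ both live in $S_x$, so their next-barriers $\nextbar_\alpha(\pi_1)$ and $\nextbar_\alpha(\pi_2)$ are both barriers on the same service $x$. First I would dispatch the easy case: if $\nextbar_\alpha(\pi_1) = \nextbar_\alpha(\pi_2)$, then the definition of $\prec$ directly reduces to $<_{S_x}$, and $\pi_1 \prec \pi_2$ follows immediately from the hypothesis $\pi_1 <_{S_x} \pi_2$.

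The interesting case is $f_1 := \nextbar_\alpha(\pi_1) \neq \nextbar_\alpha(\pi_2) =: f_2$. By the definition of $\prec$, it suffices to show $f_1 \triangleleft f_2$, and because both barriers lie on the same service $x$, this amounts to showing $r_{f_1} <_{S_x} i_{f_2}$. The natural route is to first establish $r_{f_1} <_{S_x} r_{f_2}$ by contradiction: suppose instead $r_{f_2} <_{S_x} r_{f_1}$ (they are distinct since $f_1 \neq f_2$). Then by the defining property of $\nextbar_\alpha(\pi_2)$, we have $\pi_2 \leq_{S_x} r_{f_2}$, and combining with the hypothesis $\pi_1 <_{S_x} \pi_2$ gives $\pi_1 <_{S_x} r_{f_2} <_{S_x} r_{f_1}$. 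But then $f_2$ is a barrier earlier than $f_1$ (in $<_{S_x}$) with $\pi_1 \leq_{S_x} r_{f_2}$, contradicting the minimality in the definition of $\nextbar_\alpha(\pi_1) = f_1$. Hence $r_{f_1} <_{S_x} r_{f_2}$.

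To finish, I would convert $r_{f_1} <_{S_x} r_{f_2}$ into $r_{f_1} <_{S_x} i_{f_2}$ using the fact that $S_x \in \spec_x$ is a sequence of matching invocation-response pairs, so distinct operations do not overlap; since $f_1 \neq f_2$, the invocation $i_{f_2}$ immediately precedes $r_{f_2}$ in $S_x$, forcing $r_{f_1} <_{S_x} i_{f_2}$. This yields $f_1 \triangleleft f_2$ by the definition of $\triangleleft$ for barriers on the same service, and thus $\pi_1 \prec \pi_2$.

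The main obstacle is the case analysis itself — in particular, correctly exploiting the minimality built into $\nextbar_\alpha$ to derive the contradiction in the $r_{f_2} <_{S_x} r_{f_1}$ subcase, and being careful to distinguish invocations from responses when invoking the definition of $\triangleleft$ (which is phrased in terms of $r_{f_x} <_{S_x} i_{f_y}$, not the responses alone). Everything else is bookkeeping with the definitions of $\prec$, $\triangleleft$, and $\nextbar_\alpha$.
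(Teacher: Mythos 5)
Your proof is correct and follows essentially the same route as the paper's: reduce to the case of distinct next-fences, establish $r_{f_1} <_{S_x} r_{f_2}$, and use the alternating invocation--response structure of $S_x$ to conclude $f_1 \triangleleft f_2$. The only difference is that you explicitly justify $r_{\nextbar_\alpha(\pi_1)} \leq_{S_x} r_{\nextbar_\alpha(\pi_2)}$ via the minimality in the definition of $\nextbar_\alpha$, a step the paper asserts without elaboration.
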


\begin{proof}
By the definition of $\nextbar_\alpha$, since $\pi_1 <_{S_x} \pi_2$, $r_{\nextbar_\alpha(\pi_1)} \leq_{S_x} r_{\nextbar_\alpha(\pi_2)}$. If $r_{\nextbar_\alpha(\pi_1)} = r_{\nextbar_\alpha(\pi_2)}$, then clearly $\pi_1 \prec \pi_2$, so assume $r_{\nextbar_\alpha(\pi_1)} <_{S_x} r_{\nextbar_\alpha(\pi_2)}$. Since $S_x$ is a sequence of alternating invocation-response pairs, $\nextbar_\alpha(\pi_1) \triangleleft \nextbar_\alpha(\pi_2)$ by the definition of $\triangleleft$, so $\pi_1 \prec \pi_2$.
\end{proof}

We are now ready to prove our main result of the section. We show that if processes follow a simple protocol, then the composition of a set of \rs{} services satisfies \rs{}. More specifically, a process interacting with service $x$ must issue a \barrier{} to $x$ before interacting with another service $y$. This applies to sets of processes interacting through message passing, too. If $P_1$ issues an operation at $x$, sends a message to $P_2$, and $P_2$ receives it, then $P_2$ must issue a \barrier{} to $x$ before interacting with a different service $y$.

\begin{theorem}
\label{theorem:composition:main-theorem}
  Let $\alpha$ be an execution of service $X$ that individually satisfy \rs{}, and assume processes issue real-time \barriers{} between interactions with different services, as described above. Then $\alpha$ satisfies \rs{}.
\end{theorem}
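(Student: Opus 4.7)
The plan is to exhibit an explicit serialization $S$ of the composite specification $\spec$ and verify the three clauses of the \rs{} definition against it. Take $S$ to be the sequence of all system-facing actions of $\alpha$, together with any additional responses supplied by the per-service \rs{} witnesses, arranged according to the strict total order $\prec$ constructed in Section~\ref{sec:proof:composition}. Lemma~\ref{theorem:composition:lem-4} makes $S$ well-defined, and Lemma~\ref{theorem:composition:lem-5} yields $S|x = S_x$ for each $x \in X$; since each $S_x \in \spec_x$ and $\spec$ is defined as all interleavings of the $\spec_x$, we have $S \in \spec$, and clause (1) of the definition (agreement with each $\complete(\alpha_2)|P_i$) inherits directly from the per-service witnesses. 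Clause (3), real-time precedence for writes, is routine as well: since the conflict relation and the write/write case are both confined to a single service, any $\rt_\alpha$-ordered write-versus-conflicting-operation pair lives in a single $S_x$, and Lemma~\ref{theorem:composition:lem-5} lifts $<_{S_x}$ to $\prec$.

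The bulk of the work is clause (2), causal precedence: if $\pi_1 \caused_\alpha \pi_2$ then $\pi_1 \prec \pi_2$. Intra-service pairs are immediate from the \rs{} guarantee of the common service and Lemma~\ref{theorem:composition:lem-5}. For cross-service pairs with $\pi_1$ on $x$ and $\pi_2$ on $y \neq x$, I would exploit the protocol assumption: along any causal chain that terminates at $\pi_2$, the final transition into service $y$ is preceded by a \rtbarrier{} on the service visited just before $y$, and every earlier transition between distinct services carries its own fence. Iterating this observation along the chain, with the successive fences strung together by the causal links that already exist in the chain, yields a single fence $f_x$ on $x$ with $\pi_1 \caused_\alpha i_{f_x}$ and $r_{f_x} \caused_\alpha \pi_2$. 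Applying \rs{} of $x$ once more gives $\pi_1 \leq_{S_x} r_{f_x}$, so $\nextbar_\alpha(\pi_1) \trianglelefteq f_x$, and it remains to establish $f_x \triangleleft \nextbar_\alpha(\pi_2)$.

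The main obstacle will be this last comparison. Since $f_x$ is on $x$ and $\nextbar_\alpha(\pi_2)$ is on $y$, the relation $f_x \triangleleft \nextbar_\alpha(\pi_2)$ unfolds by definition to $\lastinv_\alpha(f_x) <_\alpha \lastinv_\alpha(\nextbar_\alpha(\pi_2))$ in the global execution order. Lemma~\ref{theorem:composition:lem-1} supplies $\lastinv_\alpha(f_x) <_\alpha r_{f_x}$, and $r_{f_x} \caused_\alpha \pi_2$ implies $r_{f_x} <_\alpha \pi_2$, so the chain reduces to showing $\pi_2 \leq_\alpha \lastinv_\alpha(\nextbar_\alpha(\pi_2))$. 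I would handle this by a case split on whether $\pi_2$ is an invocation already lying in $\pastset_\alpha(\nextbar_\alpha(\pi_2))$ (in which case it is either itself the latest invocation of that past set or is followed in $\alpha$ by a strictly later invocation on $y$ that serves as $\lastinv_\alpha(\nextbar_\alpha(\pi_2))$) or a response, in which case its matching invocation plays that role and the residual gap is absorbed by Lemma~\ref{theorem:composition:lem-1}. Once this final inequality is in hand, transitivity in $<_\alpha$ closes the cross-service case, completing the verification of all three clauses and hence the theorem.
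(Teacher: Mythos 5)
Your overall scaffolding matches the paper's: the serialization $S$ is the order $\prec$, Lemmas~\ref{theorem:composition:lem-4} and~\ref{theorem:composition:lem-5} give $S \in \spec$ and clauses (1) and (3), and the work is concentrated in showing $\pi_1 \caused_\alpha \pi_2 \implies \pi_1 \prec \pi_2$ across services. However, your cross-service argument contains a genuine gap: the step ``$r_{f_x} \caused_\alpha \pi_2$ implies $r_{f_x} <_\alpha \pi_2$'' is false in general. Potential causality includes the reads-from edge, and an operation's invocation can occur in $\alpha$ \emph{before} the response of the operation it ends up reading from (the two may be concurrent). So a causal chain from $r_{f_x}$ to $\pi_2$ that passes through a reads-from link can terminate at an invocation $\pi_2$ that precedes $r_{f_x}$ in $<_\alpha$. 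Concretely: after $f_x$, a process writes $w$ on $y$, and a long-running read $r$ on $y$, invoked much earlier, ends up returning $w$'s value; then $r_{f_x} \caused_\alpha \inv(r)$ but $\inv(r) <_\alpha r_{f_x}$. Your reduction of $f_x \triangleleft \nextbar_\alpha(\pi_2)$ to ``$\pi_2 \leq_\alpha \lastinv_\alpha(\nextbar_\alpha(\pi_2))$'' therefore does not close the chain of $<_\alpha$ inequalities: the needed witness in $\pastset_\alpha(\nextbar_\alpha(\pi_2))$ is generally not $\pi_2$ itself but some other invocation (e.g., $\inv(w)$ above), and your single-fence argument never produces it.

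The paper avoids this by decomposing causal chains differently: it first handles only \emph{direct} pairs $\pi_1 \caused_\alpha \pi_2$ with no intervening system-facing action, and then closes under transitivity of $\prec$ (Lemma~\ref{theorem:composition:lem-4}). For a direct cross-service pair, the causal link cannot be reads-from (the operations are at different services), so it must arise from process order or message passing, both of which \emph{do} imply $\pi_1 <_\alpha \pi_2$; moreover the fence discipline forces $\pi_1 = r_{f_x}$ for that direct pair, so $\nextbar_\alpha(\pi_1) = f_x$ and Lemma~\ref{theorem:composition:lem-1} plus $\pi_2 \leq_\alpha \lastinv_\alpha(\nextbar_\alpha(\pi_2))$ finish that link. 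Reads-from edges are then confined to intra-service links, where the per-service \rs{} guarantee and Lemma~\ref{theorem:composition:lem-5} already handle them. To repair your proof you would need to adopt this link-by-link decomposition (or otherwise restrict the $\caused_\alpha \implies <_\alpha$ inference to the process-order/message-passing fragment), rather than collapsing the whole chain into a single fence comparison.
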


\begin{proof}
We prove $\alpha$ satisfies \rs{} by construction. Specifically, let $S$ be the sequence of system-facing actions in $\alpha$ defined by $\prec$. We must prove $S$ satisfies \rs{}.

By Lemma~\ref{theorem:composition:lem-5}, since $\prec$ and thus $S$ generalizes the sequences of each $S_x \in \spec_x$, $S \in \spec$. Similarly, Lemma~\ref{theorem:composition:lem-5} implies the third requirement of \rs{} is satisfied.

We now prove that $S$ respects causality, so let $\pi_1$ and $\pi_2$ be system-facing actions such that $\pi_1 \caused_\alpha \pi_2$. To start, we only consider the case where there is no system-facing action $\pi_3$ such that $\pi_1 \caused_\alpha \pi_3$ and $\pi_3 \caused_\alpha \pi_2$.

If both actions are on the same service $x$, then since $\pi_1 \caused_\alpha \pi_2$ and $S_x$ satisfies \rs{}, $\pi_1 <_{S_x} \pi_2$. By Lemma~\ref{theorem:composition:lem-5}, $\pi_1 <_S \pi_2$.

Now suppose $\pi_1$ is on service $x$ and $\pi_2$ is on service $y$ with $x \neq y$. Since processes issue \barriers{} between interactions with different services, it must be the case that $\pi_1 = r_{f_x}$ for some \barrier{} $f_x$ and $\pi_2$ is an invocation action on $y$. By the definition of $\nextbar_\alpha$, $\nextbar_\alpha(\pi_1) = \pi_1$. Further, by Lemma~\ref{theorem:composition:lem-1}, $\lastinv_\alpha(f_x) <_\alpha r_{f_x} = \pi_1$.

Since $\pi_1$ and $\pi_2$ are on different services, $\pi_2$ cannot be part of an operation whose return value includes the effects of $\pi_1$'s operation (i.e., $f_x$). Combining this with the assumption that $\pi_1 \caused_\alpha \pi_2$, we get $\pi_1 <_\alpha \pi_2$.

Let $f_y = \nextbar_\alpha(\pi_2)$. By the definitions of $\lastinv_\alpha$ and $\nextbar_\alpha$, it must be the case that $\pi_2 \leq_\alpha \lastinv_\alpha(f_y)$; either $\pi_2 = \lastinv_\alpha(f_y)$, or there is some later last invocation. Combining this with the facts that $\lastinv_\alpha(f_x) <_\alpha \pi_1$ and $\pi_1 <_\alpha \pi_2$, we see that $\lastinv_\alpha(f_x) <_\alpha \lastinv_\alpha(f_y)$. By the definitions of $\prec$ and $\triangleleft$, $\pi_1 \prec \pi_2$.

We now consider transitivity, so suppose there is some system-facing action $\pi_3$ such that $\pi_1 \caused_\alpha \pi_3$ and $\pi_3 \caused_\alpha \pi_2$. By the reasoning above, $\pi_1 \prec \pi_3$ and $\pi_3 \prec \pi_2$. Then by Lemma~\ref{theorem:composition:lem-4}, which shows $\prec$ is transitive, we conclude that $\pi_1 \prec \pi_2$. Thus, $S$ satisfies the second requirement of \rs{}.

Finally, since $S$ respects causality and by Lemma~\ref{theorem:composition:lem-5}, generalizes each $S_x$, it is clear that $S$ respects the order of system-facing invocations and responses at each process. Thus, $S$ satisfies the first requirement of $\rs{}$.
\end{proof}


\section{Proofs of Correctness}
\label{sec:correctness}

In Sections~\ref{sec:spannerrss} and~\ref{sec:gryffrs}, we presented two new
protocols: a variant of \spanner{} that relaxes its consistency from strict
serializability~\cite{papadimitriou1979serializability} to \rss{} and a variant
of \gryff{} that relaxes its consistency from
linearizability~\cite{herlihy1990linearizability} to \rs{}. The designs are
agnostic to the structure of the applications using them, but we make some basic
assumptions about the application's runtime depending on the structure of the
application.

If a set of clients (e.g., mobile phones) use the services directly and do not
communicate outside of the service via message passing, then it is sufficient for
each client to simply use the client libraries to communicate with
the services. If clients do communicate via message passing (e.g., a mobile
phone proxying its requests through multiple Web servers), then as discussed in
Section~\ref{sec:discussion:mp}, some metadata must be propagated
between processes executing on different machines to ensure that the services
return values reflecting all causal constraints. Fortunately, existing
frameworks, such as Baggage Contexts~\cite{mace2018baggageContexts}, can
automatically propagate this metadata between processes.

In the proofs below, we assume this metadata propagation, if necessary, is
implemented correctly within the application's runtime. For \spannerrss{}, this
metadata is the minimum read timestamp $\tmin$, and for \gryffrs{}, it is the
dependency tuple $\textit{dep}$ that is piggybacked on the next interaction with
\gryffrs{}.

\subsection{\spannerrss{}}
\label{sec:correctness:spannerrss}

We begin with three observations about \spanner{}'s protocol:

\noindentparagraph{Observation 1.} If a read-write (RW) transaction has committed
at a shard with key $k$ and timestamp $\tcommit$, then there cannot be a current
or future prepared transaction that writes $k$ and has a prepare timestamp less
than $\tcommit$. This follows from \spanner{}'s use of strict two-phase locking
and the fact that each RW transaction chooses its prepare timestamp to be
greater than all previously committed writes at each participant
shard~\cite{corbett2013spanner}.

\noindentparagraph{Observation 2.} If two RW transactions conflict, then
their commit timestamps cannot be equal. This follows from \spanner{}'s use of
strict two-phase locking and the fact that each RW transaction chooses its
commit timestamp to be greater than the prepare timestamp from each participant
shard.

\noindentparagraph{Observation 3.} The commit timestamp of each RW transaction is
guaranteed to be between its real start and end times. This is shown in the
original paper~\cite{corbett2013spanner}
and is what makes \spanner{} strictly serializable.

We now prove several supporting lemmas and then use them to prove the
correctness of \spannerrss{}. We define a transaction's timestamp $t$
as $\tcommit$ if it is a RW transaction and $\tsnap$ if it is a read-only (RO)
transaction. We denote a transaction $T_1$'s timestamp as $t_1$. We use the
transactions' timestamps to construct a total order.
Lemmas~\ref{theorem:spannerrss:causal-timestamps}
and~\ref{theorem:spannerrss:rt-timestamps} prove properties about
the timestamps of transactions related by causality or real time, and we use
them to show the constructed total order satisfies \rss{}.
Lemmas~\ref{theorem:spannerrss:rw-valid} and~\ref{theorem:spannerrss:ro-valid} are used to show
the constructed total order is in \spannerrss{}'s sequential
specification (i.e., that the order is consistent with the values
returned by each transaction's reads). 

\begin{lemma}
    \label{theorem:spannerrss:causal-timestamps}
    If $T_1$ and $T_2$ are transactions such that $T_1 \caused T_2$, then $t_1 \leq t_2$. Further, if $T_1$ and $T_2$ are both RW transactions, then $t_1 < t_2$.
\end{lemma}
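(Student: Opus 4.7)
The plan is to proceed by structural induction on the derivation of $T_1 \caused T_2$, dispatching each of the three base cases (process order, message passing, reads-from) and then invoking transitivity. Throughout the argument I would lean on four mechanisms of \spannerrss{}: the client's running $\tmin$, its propagation along message passing, \spanner{}'s commit-wait (Observation~3), and the shard-side resolution of prepared transactions via the set $B$ in Algorithm~\ref{alg:spannerrss-shard}.

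\textbf{Process order and message passing.} In both cases the client issuing $T_2$ enters it with $\tmin \geq t_1$---either by direct assignment at line~12 of Algorithm~\ref{alg:spannerrss-client} (or its RW analog) for same-client process order, or by runtime propagation of $\tmin$ over the exchanged message. For a RW $T_2$, commit wait on $T_1$ yields $\textit{TT.now.earliest} > t_1$ at $T_1$'s completion, so $T_2$'s coordinator necessarily picks a $\tcommit$ that exceeds $T_2.\textit{start}$ and all participants' prepare timestamps, giving the strict inequality $t_1 < t_2$ demanded for the RW--RW case. For a RO $T_2$, I would argue $\tsnap \geq \tmin \geq t_1$ by leaning on the shard's $B$-set rule: any prepared transaction with $\tprepare \leq \tmin$ is resolved before the fast reply is sent, so the values fed into \textsc{CalculateSnapshotTS} reflect every relevant earlier write and the snapshot check cannot certify a $\tsnap$ below $\tmin$.

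\textbf{Reads-from.} Here $T_1$ must be RW and $T_2$ must read some key $k$ that $T_1$ wrote. If $T_2$ is RW, strict two-phase locking forces $T_2$'s locks on $k$ to be acquired only after $T_1$ releases them, and the shard's prepare-timestamp monotonicity (Observation~1) drives $T_2$'s prepare timestamp on that shard, and hence its coordinator's commit timestamp, strictly above $t_1$. If $T_2$ is RO, the shard returns $T_1$'s version with commit timestamp $t_1$, so this commit timestamp appears in $V$ and propagates through the per-key minimum and cross-key maximum in \textsc{CalculateSnapshotTS} to give $\tsnap \geq t_1$. For transitivity, chaining $t_i \leq t_{i+1}$ along the derivation yields $t_1 \leq t_2$; whenever both endpoints are RW, the chain must contain at least one edge that the base-case analysis supplies with a strict inequality, and that strictness propagates through the surrounding non-strict links.

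\textbf{Main obstacle.} I expect the most delicate step to be establishing $\tsnap \geq \tmin$ for the RO subcase of process order and message passing when $T_2$'s read set is disjoint from every causally preceding write, since then no single returned value directly forces $\tsnap$ upward. The argument will need an auxiliary invariant coupling \textsc{CalculateSnapshotTS} and \textsc{CheckSnapshot} with the shard's handling of the $B$-set and Paxos safe time---essentially, that at the first moment \textsc{CheckSnapshot} returns COMMIT, the client's effective snapshot dominates the $\tmin$ it shipped with the request. Without such an invariant, an isolated RO could fall below its causal predecessor in the timestamp order, so pinning this step down is essential to closing the induction.
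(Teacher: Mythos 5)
Your proposal takes essentially the same route as the paper's proof: a case analysis over the causal-edge types and the RO/RW types of $T_1$ and $T_2$, relying on $\tmin$ propagation, commit wait (Observation~3), strict two-phase locking, and the shard's treatment of prepared transactions in line~6 of Algorithm~\ref{alg:spannerrss-shard}, with the strict RW--RW inequality threaded through the transitive closure exactly as the paper does it. The step you flag as the main obstacle---establishing $\tsnap \geq \tmin$ for a causally downstream RO whose read set misses all recent writes---is precisely the step the paper dispatches in a single sentence, asserting that because line~6 forces the RO to observe every write with $\tcommit \leq \tmin$ it follows that $\tmin \leq \tsnap$; so your approach matches the paper's, and the auxiliary invariant you ask for is not something the paper's proof supplies beyond that assertion.
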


\begin{proof}
  We first consider the four pairs of transactions. For each case, we consider the three direct causal relationships: process order, message passing, and reads-from. We then consider transitivity.

  $\text{RO}_1 \caused \text{RO}_2$. Observe that because the first transaction is RO, it is not possible for the second to read from the first. If the two RO transactions are causally related by process order or message passing, then the second RO transaction's $\tmin$ will be greater than or equal to the first's by the assumption that applications propagate the necessary metadata. As a result, line 6 of Algorithm~\ref{alg:spannerrss-shard} guarantees the second RO transaction will include any writes with $\tcommit \leq \tmin$, so $\tmin \leq t_2$. Thus, $t_1 \leq t_2$.
  
  $\text{RO} \caused \text{RW}$. As above, the RW transaction cannot read from the RO transaction. If the two transactions are causally related by process order or message passing, then the RO transaction must precede the RW transaction in real time. Because RW transactions perform commit wait, if a write is returned in a RO transaction, then that write's commit timestamp is guaranteed to be in the past before the RO transaction ends. As a result, $t_1$ is guaranteed to be less the RO's end time. Combined with the fact that a RW transaction's commit timestamp is guaranteed to be after its start time, this implies $t_1 < t_2$.
  
  $\text{RW} \caused \text{RO}$. If the RO transaction reads from the RW transaction, then clearly $t_1 \leq t_2$ by the way $\tsnap$ is calculated (Alg.~\ref{alg:spannerrss-client}, lines 15-20). If instead the transactions are causally related by process order or message passing, then because a process sets its $\tmin$ to be at least $\tcommit$ after a RW transaction finishes, line 6 of Algorithm~\ref{alg:spannerrss-shard} guarantees the RO transaction includes any writes with $\tcommit^\prime \leq \tcommit$, and thus $t_1 \leq t_2$.
  
  $\text{RW}_1 \caused \text{RW}_2$. If the two RW transactions are causally related by process order or by message passing, then $\text{RW}_1$ must precede $\text{RW}_2$ in real time. By observation 3 above about \spanner{}'s RW transactions (and thus \spannerrss{}'s), it must be that $t_1 < t_2$. If $\text{RW}_2$ reads from $\text{RW}_1$, then the two transactions conflict and \spannerrss{}'s use of strict two-phase locking guarantees $t_1 < t_2$.
  
  We now consider transitivity. Clearly, if $t_1 \leq t_2$ holds for each pair of causally related transactions, then $t_1 \leq t_2$ applies for pairs of transactions causally related through transitivity. Further, because in the cases two and four above we have shown $t_1 < t_2$, it must be that $t_1 < t_2$ for pairs of RW transactions causally related through transitivity.
\end{proof}

\begin{lemma}
    \label{theorem:spannerrss:rt-timestamps}
    If $T_1$ is a RW transaction and $T_2$ is a conflicting RO transaction such that $T_1 \rt T_2$, then $t_1 \leq t_2$.
\end{lemma}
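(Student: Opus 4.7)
The plan is to show $t_1 = \tcommit(T_1) \leq \tsnap(T_2) = t_2$ in three stages: (i) establish $\tcommit(T_1) < \tread$ for $T_2$ using commit wait and TrueTime; (ii) show that every entry for a shared key $k$ ever placed in $T_2$'s client-side map $V$ carries a commit timestamp of at least $\tcommit(T_1)$; and (iii) read off $\tsnap \geq \tcommit(T_1)$ from the definition of \textsc{CalculateSnapshotTS}.

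First I would fix a witness key $k \in W(T_1) \cap R(T_2)$ (which exists because the two transactions conflict) and the shard $S$ hosting $k$. From $T_1 \rt T_2$ I get that $T_1$'s response precedes $T_2$'s invocation, so $T_1$'s real-time end point is at most $T_2$'s real-time start. Observation~3 gives $\tcommit(T_1) < T_1.\textit{end}$, and the client's assignment $\tread \gets \textsc{TrueTime::Now.Latest}$ at the beginning of $T_2$ guarantees $T_2.\textit{start} \leq \tread$. Chaining these inequalities yields (i).

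Next I would bound the entries for $k$ in $V$ by splitting them by source. The fast-reply entry is produced by $\textsc{ReadAtTimestamp}(\mathcal{D}, K, \tread)$ on $S$; the Paxos safe-time wait on line~4 of Alg.~\ref{alg:spannerrss-shard} together with (i) ensures $T_1$'s write to $k$ is already applied before the read runs, so the returned version for $k$ has commit timestamp at least $\tcommit(T_1)$ (either $T_1$'s own, or that of a later writer to $k$ with $\tcommit \leq \tread$). Any slow-reply entry for $k$ corresponds to a prepared transaction $T_3$ at $S$ with $\tprepare(T_3) \leq \tread$ that later commits; applying Observation~1 to the already-committed $T_1$ forces $\tprepare(T_3) > \tcommit(T_1)$, hence the eventual $\tcommit(T_3) > \tcommit(T_1)$ as well. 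Stage (iii) then follows from $t_{\text{earliest}}(k) \geq \tcommit(T_1)$ and $\tsnap \geq t_{\text{earliest}}(k)$, both by inspection of \textsc{CalculateSnapshotTS}.

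The main obstacle will be stage (ii), and in particular maintaining the invariant ``every entry for $k$ in $V$ has $\tcommit \geq \tcommit(T_1)$'' across iterations of the client's while-loop at lines~9--11 of Alg.~\ref{alg:spannerrss-client}, where \textsc{UpdatePrepared} may insert additional entries. Observation~1 is the right hammer for the invariant, but it must be invoked with the fact that $T_1$ is already committed (not merely prepared) by the time $S$ services $T_2$'s \textsc{ROCommit}; that in turn leans on stage (i) together with the behavior of the Paxos safe-time wait, so the three stages are not as independent as they may first appear.
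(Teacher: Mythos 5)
Your stages (i) and (iii) match the paper's argument, but stage (ii) rests on the wrong mechanism, and that is where the lemma actually lives. You claim that ``the Paxos safe-time wait on line~4 of Algorithm~\ref{alg:spannerrss-shard} together with (i) ensures $T_1$'s write to $k$ is already applied before the read runs.'' It does not. Line~4 only waits until $\tread \leq \textsc{Paxos::MaxWriteTS}$; this guarantees that no \emph{future} Paxos write (hence no future prepare) will carry a timestamp $\leq \tread$, but it says nothing about whether a participant shard has yet applied $T_1$'s \emph{commit}. $T_1 \rt T_2$ only tells you the coordinator committed and replied to $T_1$'s client before $T_2$ began; a participant shard $S$ may still hold $T_1$ as prepared-but-not-committed in $\mathcal{P}$ when $T_2$'s \textsc{ROCommit} arrives. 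In that case $T_1$'s write is not in $\mathcal{D}$, the fast-path $\textsc{ReadAtTimestamp}(\mathcal{D},K,\tread)$ returns an \emph{older} version of $k$, and your stage-(ii) invariant fails at the first entry: $t_{\text{earliest}}(k)$, and hence $\tsnap$, could end up below $\tcommit(T_1)$ (the client would not even be forced to wait, since $\tprepare(T_1)$ may exceed that stale $\tsnap$, so \textsc{CheckSnapshot} would return COMMIT).

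The step you are missing is the one the paper's proof leans on: the blocking set $B$ on line~6 of Algorithm~\ref{alg:spannerrss-shard}. Because $T_1$'s client waited until $\elb < \textit{TT.now.earliest}$ before finishing, $\elb(T_1) < T_1.\textit{end} < T_2.\textit{start} \leq \tread$, so if $T_1$ is still prepared at $S$ it satisfies $\elb \leq \tread$ and lands in $B$; line~7 then forces the shard to wait for $T_1$'s commit before performing the read on line~8, which therefore returns $T_1$'s write or a newer one. This $\elb$-based blocking --- not the safe-time wait --- is precisely the piece of \spannerrss{}'s design that encodes ``$T_1$ could have ended before $T_2$ started, so it must be observed,'' and the lemma is false for the protocol with that condition deleted. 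Your slow-reply analysis via Observation~1 is fine once this is in place (and it also disposes of your worry about the while-loop, since every prepared conflicting writer of $k$ then has $\tprepare > \tcommit(T_1)$), but as written the proposal's central step does not go through.
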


\begin{proof}
  Since $T_1$ ends before $T_2$ starts, $t_1$ must be less than $T_2$'s start time. Further, line 4 of Algorithm~\ref{alg:spannerrss-client} guarantees the RO transaction's $\tread$ is greater than its start time, so $t_1 < \tread$.
  
  As a result, since $T_1$ has committed and its earliest end time $\elb$ has passed, when $T_2$ executes at any shard with conflicting keys, lines 6-8 of Algorithm~\ref{alg:spannerrss-shard} ensure it will read $T_1$'s write or one with a greater timestamp. Thus, $t_1 \leq t_2$. 
\end{proof}

\begin{lemma}
    \label{theorem:spannerrss:rw-valid}
    Suppose $T_2$ is a RW transaction that commits with timestamp $t_2$. Then for
    each key, $T_2$'s reads return the values written by the RW transaction with
    the greatest commit timestamp $t_1$ such that $t_1 < t_2$.
\end{lemma}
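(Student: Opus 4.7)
The plan is to combine \spannerrss{}'s strict two-phase locking with Observations 1 and 2 about prepare and commit timestamps to uniquely pin down the version of $k$ that $T_2$ reads. Fix a key $k$ that $T_2$ reads, let $T_1^*$ be the RW transaction whose write to $k$ is returned to $T_2$, and let $t_1^*$ be its commit timestamp. I need to show two things: first, that $t_1^* < t_2$, and second, that there is no RW transaction $T_1'$ writing $k$ with $t_1^* < t_1' < t_2$.

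For the inequality $t_1^* < t_2$, I would argue as follows. Since $T_2$ observes $T_1^*$'s value, $T_1^*$ must have committed and released its write lock on $k$ before $T_2$ acquired its read lock on $k$. When $T_2$ later prepares on the shard that holds $k$, Observation 1 forces its prepare timestamp there to exceed $t_1^*$; and \spannerrss{}'s coordinator picks $t_2$ to be at least as large as every participant's prepare timestamp. Chaining these inequalities yields $t_1^* < t_2$.

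For the maximality statement, I plan to assume for contradiction the existence of a RW transaction $T_1'$ writing $k$ with $t_1^* < t_1' < t_2$, and split on which of $T_2$ and $T_1'$ first acquires its lock on $k$. If $T_1'$ acquires its write lock first, then by strict 2PL it holds that lock until commit, so $T_1'$ committed before $T_2$ read $k$; but then the version $T_2$ read would have commit timestamp at least $t_1' > t_1^*$, contradicting the choice of $T_1^*$. If instead $T_2$ acquires its read lock first, then $T_1'$ must wait for $T_2$'s read lock release, which only happens once $T_2$ commits and the shard holding $k$ has been notified of $t_2$; Observation 1 then forces $T_1'$'s prepare timestamp on that shard to exceed $t_2$, and the coordinator's timestamp rule yields $t_1' > t_2$, contradicting $t_1' < t_2$.

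The main obstacle will be the second case of the maximality argument, where I rely on the shard that holds $k$ knowing $t_2$ even when $T_2$ does not write to it. This is not spelled out in the excerpt but is part of \spanner{}'s commit protocol (which \spannerrss{} inherits unchanged for RW transactions): the coordinator notifies every participant of the commit outcome together with the commit timestamp so that participants can release read locks while maintaining the monotonicity invariant on their prepare timestamps. I would cite that piece of \spanner{}'s design explicitly rather than re-derive it; the remainder of the argument is a careful but routine application of Observations 1 and 2.
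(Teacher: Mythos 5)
Your proof is correct and takes essentially the same approach as the paper: the paper's own proof of this lemma is a one-sentence appeal to \spanner{}'s correctness argument, which it says ``follows from the correctness of strict two-phase locking and \spanner{}'s timestamp assignment,'' and your write-up simply carries that argument out in detail using exactly those ingredients (Observations~1--2, the lock order on $k$, and the participant/coordinator timestamp rules). The one step you flag---that a read-only participant learns $t_2$ at commit and folds it into its prepare-timestamp monotonicity---is indeed part of \spanner{}'s commit protocol, which \spannerrss{} inherits unchanged, so citing it is the right move.
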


\begin{proof}
  Since \spannerrss{}'s RW transaction protocol is nearly identical to \spanner{}'s, this follows from the correctness argument for \spanner{}, which follows from the correctness of strict two-phase locking~\cite{berstein1987ccbook} and \spanner{}'s timestamp assignment~\cite{corbett2013spanner}.
\end{proof}

\begin{lemma}
    \label{theorem:spannerrss:ro-valid}
    Suppose $T_2$ is a RO transaction with a snapshot time of $t_2$. Then for each
    key, $T_2$ returns the values written by the RW transaction with the greatest
    commit timestamp $t_1$ such that $t_1 \leq t_2$.
\end{lemma}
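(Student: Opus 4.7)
The plan is to show, for each key $k$ in $T_2$'s read set $K$, that the value returned by $T_2$ for $k$ is the one written by the RW transaction $T_1$ with the greatest commit timestamp $t_1 \leq t_2$ that writes $k$. This reduces to two facts: (i) every value returned has commit timestamp $\leq t_2$, and (ii) $T_1$'s value is present in the client's set $V$ when the final $\textsc{ReadAtTimestamp}(V, \tsnap)$ call is made on line 13 of Algorithm~\ref{alg:spannerrss-client}.

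Fact (i) follows immediately from the semantics of $\textsc{ReadAtTimestamp}(V, \tsnap)$, which, for each key, returns the value in $V$ with the largest $\tcommit \leq \tsnap = t_2$. For fact (ii), I would first observe that $\tsnap \leq \tread$: by $\textsc{CalculateSnapshotTS}$, $\tsnap$ is the maximum over keys of the earliest initial commit timestamp returned for that key, and all initial values come from $\textsc{ReadAtTimestamp}(\mathcal{D}, K, \tread)$ on the shards, hence have $\tcommit \leq \tread$. (The loop on line 9 leaves $\tsnap$ fixed.) Thus $t_1 \leq t_2 \leq \tread$.

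Next, I would do a case analysis on $T_1$'s state at its shard $s$ when the $\textsc{ROCommit}$ message arrives. In Case 1, $T_1$ is already committed at $s$. The $\textsc{Paxos::MaxWriteTS}$ wait on line 4 of Algorithm~\ref{alg:spannerrss-shard}, combined with Observation~1 (monotonicity of prepare/commit timestamps), ensures that any write with commit timestamp $\leq \tread$ is already applied to $\mathcal{D}$, since no future prepare at $s$ can be assigned a timestamp $\leq \tread$. Hence the fast reply returns some value for $k$ with timestamp $t_\text{ret} \in [t_1, \tread]$, and by the definition of $\tsnap$, $t_\text{ret} \leq \tsnap = t_2$. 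Maximality of $T_1$ forces $t_\text{ret} = t_1$. In Case 2, $T_1$ is still prepared at $s$ with $\tprepare \leq t_1 \leq \tread$, so $T_1 \in P$ (line 5). Either $T_1 \in B$, whereupon the shard blocks for $T_1$'s commit on line 7 and includes $T_1$'s value in the fast-path $V$; or $T_1 \notin B$ and its $\tprepare$ appears in $Q$. Since $\tprepare \leq t_1 \leq \tsnap$, $\textsc{CheckSnapshot}$ returns \textsc{Wait}, and when $T_1$'s slow reply eventually arrives with $\tcommit = t_1 \leq \tsnap$, $\textsc{UpdatePrepared}$ inserts $T_1$'s value into $V$.

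The main obstacle is ruling out an ``intervening'' transaction $T_3$ writing $k$ with $t_1 < t_3 \leq t_2$ that might appear in $V$ and cause $\textsc{ReadAtTimestamp}(V, \tsnap)$ to select $T_3$ over $T_1$. But any such $T_3$ would itself be an RW transaction writing $k$ with commit timestamp $\leq t_2$, directly contradicting the maximality of $T_1$ in the lemma's hypothesis. Hence $T_1$'s value is the latest entry in $V$ with timestamp $\leq \tsnap$, and it is exactly what $\textsc{ReadAtTimestamp}(V, \tsnap)$ returns.
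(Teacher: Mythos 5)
Your overall strategy matches the paper's: reduce the claim to showing that $T_1$'s value reaches the client's set $V$, then use maximality of $t_1$ together with the definition of $\tsnap$ to argue that the final $\textsc{ReadAtTimestamp}$ selects it. Your Cases 1 and 2 roughly correspond to the paper's Cases 3 and 2, your observation that $t_\text{ret} \leq \tsnap$ by construction of $\textsc{CalculateSnapshotTS}$ is a clean rephrasing of the paper's contradiction argument, and your handling of the intervening-$T_3$ concern and of the $\tprepare \leq \tsnap$ forcing a WAIT are sound. However, your case analysis is incomplete: you consider only the possibilities that $T_1$ is already committed or already prepared at $k$'s shard, and omit the case where $T_1$ has not yet prepared there when $T_2$ begins executing (i.e., when line 5 of Algorithm~\ref{alg:spannerrss-shard} runs, after the wait on line 4 completes). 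In that case $T_1$'s write is in neither $\mathcal{D}$ nor $\mathcal{P}$ at the shard, so it appears in neither the fast reply nor $Q$, and nothing in your argument ever places its value into $V$; the lemma would simply fail for such a $T_1$ unless this case is shown to be vacuous.

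The paper closes exactly this case (its Case 1) by deriving a contradiction with the hypothesis $t_1 \leq t_2$: since the shard has already waited until $\tread \leq \textsc{Paxos::MaxWriteTS}$, and prepare timestamps are chosen strictly greater than the shard's current maximum write timestamp, any prepare of $T_1$ issued after that point satisfies $\tprepare > \tread \geq \tsnap = t_2$; combined with $t_1 = \tcommit \geq \tprepare$, this gives $t_1 > t_2$. You actually state the needed fact (``no future prepare at $s$ can be assigned a timestamp $\leq \tread$'') inside your Case 1, but you use it only to argue that committed writes are visible, never to rule out the unprepared case. Adding that third case---and aligning the moment of the case split with the completion of the line-4 wait rather than the arrival of the $\textsc{ROCommit}$ message, since $T_1$ could prepare with $\tprepare \leq \tread$ in the interval between the two---completes the proof.
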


\begin{proof}
  Let $T_1$ be an arbitrary RW transaction that conflicts with $T_2$ at keys $K$. Fix a $k \in K$, and assume $t_1$ is the greatest timestamp for a write of $k$ such that $t_1 \leq t_2$. For ease of exposition, assume each key resides on a different shard.
  
  We say a RO transaction begins executing at a shard once it finishes waiting for $\tread$ to be less than the Multi-Paxos maximum write timestamp (i.e., it reaches line 5 of Algorithm~\ref{alg:spannerrss-shard}). There are three cases. In each case, we either derive a contradiction or show that $T_2$ returns $T_1$'s write of $k$.
  
  First, suppose there is at least one $k^\prime \in K$ such that $T_1$ has not prepared at $k^\prime$'s shard when $T_2$ begins executing there. Then combining the facts that $\tsnap \leq \tread$, $T_2$ waits until $\tread \leq \textsc{Paxos::MaxWriteTS}$, and $T_1$'s prepare timestamp at each shard is chosen to be strictly greater than the shard's $\textsc{Paxos::MaxWriteTS}$, $T_1$'s prepare timestamp $\tprepare$ at the shard must be strictly greater than $\tread$. Since $\tsnap \leq \tread$ and $\tprepare \leq t_1$, this contradicts the assumption that $t_1 \leq t_2$.
  
  Next, for each $k \in K$, suppose $T_1$ has prepared but not committed at $k$'s shard when $T_2$ begins executing there. Because $\tprepare \leq t_1$ and $t_1 \leq t_2$, $\tprepare \leq t_2$. Further since $t_2 \leq \tread$, $\tprepare \leq \tread$, so lines 5, 9, and 10 of Algorithm~\ref{alg:spannerrss-shard} ensure $T_1$'s prepare timestamp is returned to $T_2$'s client.
  
  Since $\tprepare \leq t_1 \leq \tsnap = t_2$ by assumption, lines 22 and 23 of Algorithm~\ref{alg:spannerrss-client} ensure the client waits until $T_1$ commits. Once $T_1$ commits, lines 13-15 of Algorithm~\ref{alg:spannerrss-shard} and lines 10-11 of Algorithm~\ref{alg:spannerrss-client} transmit $T_1$'s values to $T_2$'s client. Since $t_1$ was assumed to be the greatest timestamp for key $k$ such that $t_1 \leq t_2$, line 13 of Algorithm~\ref{alg:spannerrss-client} returns $T_1$'s write of $k$.
  
  Finally, suppose $T_1$ has prepared at all shards containing keys $K$ and further, there is at least one $k^\prime \in K$ such that $T_1$ has committed at $k^\prime$'s shard when $T_2$ begins executing there. There are two sub-cases: If $T_1$ has not committed at the shard containing $k$ when $T_2$ begins executing there, then by similar reasoning as in the previous case, $T_1$'s write of $k$ will ultimately be sent to $T_2$'s client and returned. Now suppose $T_1$ has committed at $k$'s shard when $T_2$ begins executing there. We argue that line 8 of Algorithm~\ref{alg:spannerrss-shard} must return $T_1$'s write.
  
  Since line 8 returns the latest write with timestamp less than $\tread$, the only other possibility is that it returns some write with a timestamp $t_3 > t_1$. In this case, since $T_1$ already committed, its write would never be returned to $T_2$'s client. As a result, the value of $t_{\text{earliest}}$ calculated for key $k$ (Alg.~\ref{alg:spannerrss-client}, line 18) will be at least $t_3$, so by line 19 of Algorithm~\ref{alg:spannerrss-client}, $\tsnap$ would ultimately be at least $t_3$, contradicting the assumption that $t_1$ is the write of $k$ with the greatest timestamp less than or equal to $\tsnap = t_2$. Thus, line 8 of Algorithm~\ref{alg:spannerrss-shard} must return $T_1$'s write of $k$. By observation 1, there are no prepared transactions that write $k$ with timestamps less than $t_1$, so line 13 of Algorithm~\ref{alg:spannerrss-client} ultimately returns $T_1$'s write.
\end{proof}

\begin{theorem}
  \label{theorem:correctness:spanner}
  \spannerrss{} guarantees \rss{}.
\end{theorem}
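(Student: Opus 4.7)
The plan is to construct a total order $S$ of the committed transactions in $\alpha$ by sorting them by their timestamps (recall $t = \tcommit$ for RW and $t = \tsnap$ for RO), then verify the three conditions of \rss{}. To handle ties in timestamps, I would define a strict partial order $\lesssim$ as the transitive closure of (i) timestamp order; (ii) causal order $\caused_\alpha$ restricted to transactions; and (iii) real-time order $\rt_\alpha$ restricted to pairs $(W, T)$ with $W$ RW and $T \in \conflicts_\alpha(W) \cup \mathcal{W}$. Observation~2 plus Lemmas~\ref{theorem:spannerrss:causal-timestamps} and~\ref{theorem:spannerrss:rt-timestamps} show $\lesssim$ is acyclic (every edge is weakly increasing in timestamp, and RW--RW edges are strictly increasing, so any cycle would have to lie inside a single equal-timestamp class containing only RO transactions or RO and non-conflicting RW transactions, where both $\caused_\alpha$ and $\rt_\alpha$ are acyclic). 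Let $S$ be any linear extension of $\lesssim$, first extending $\alpha$ to $\alpha_2$ by adding the necessary responses for any incomplete transactions whose effects are observed by others.

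Next I would show $S \in \spec$ and is equivalent to $\complete(\alpha_2)$. Since the sequential specification only constrains the values returned by reads, it suffices to argue: for every transaction $T$ in $S$ and every key $k$ that $T$ reads, $T$'s return value for $k$ equals the value written by the RW transaction immediately preceding $T$ in $S$ that writes $k$ (or the initial value). Lemma~\ref{theorem:spannerrss:rw-valid} gives this when $T$ is RW, because $S$'s RW-only subsequence is ordered by commit timestamp. Lemma~\ref{theorem:spannerrss:ro-valid} gives it when $T$ is RO, after observing that by the tie-breaking in $\lesssim$, any RW transaction with commit timestamp equal to $T$'s snapshot time is placed before $T$ (it is forced there, since if such a RW were not so ordered we would get an inconsistency with Lemma~\ref{theorem:spannerrss:ro-valid}, which has $t_1 \leq t_2$).

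I would then verify the causal and real-time conditions directly. The causal condition $T_1 \caused_\alpha T_2 \implies T_1 <_S T_2$ is immediate from the construction of $\lesssim$: every causal edge is in $\lesssim$, and $S$ is a linear extension. The real-time condition similarly follows, combining Lemma~\ref{theorem:spannerrss:rt-timestamps} for the RW--RO conflict case with Observation~3 for the RW--RW case (which guarantees strict timestamp inequality and hence ordering in $S$ independent of tie-breaking).

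The main obstacle I anticipate is justifying the acyclicity of $\lesssim$ and in particular ruling out cycles among equal-timestamp RO transactions that are related by $\caused_\alpha$ (e.g., two causally dependent RO transactions that both end up with the same snapshot time). This requires a careful case analysis showing that within a single equal-timestamp class the induced causal relation is itself a partial order (which it is, because $\caused_\alpha$ is a partial order globally) and that no real-time edge can create a cycle with it (because $\rt_\alpha$ is also a partial order, and the two agree on the direction for any pair they both relate). Once acyclicity is in hand, the linear extension $S$ exists and the remaining verifications reduce to direct invocations of the four lemmas and Observation~3.
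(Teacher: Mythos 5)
Your proposal is correct and follows essentially the same route as the paper's proof: order transactions by their timestamps ($\tcommit$ for RW, $\tsnap$ for RO), break ties within an equal-timestamp class by a causality-respecting order (the paper arranges each class into DAGs of a RW transaction plus its causally related ROs and topologically sorts them, which is just a concrete way of taking the linear extension you describe), and then discharge legality and the causal and real-time conditions via Lemmas~\ref{theorem:spannerrss:causal-timestamps}--\ref{theorem:spannerrss:ro-valid} and Observations~2 and~3. The acyclicity worry you flag is resolved exactly as you suspect: within a class, RW--RW real-time edges are impossible by Observation~3, and a RW--conflicting-RO real-time edge with equal timestamps forces a reads-from (hence causal) edge in the same direction, so the within-class relation reduces to the acyclic causal order.
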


\begin{proof}
  Let $\alpha_1$ be a well-formed execution of \spannerrss{}. We first construct a sequence $S$ of transaction invocations and responses from $\alpha_1$. We then use $S$ to extend $\alpha_1$ to $\alpha_2$ such that $S$ is equivalent to $\complete(\alpha_2)$ and finally, show $S$ satisfies \rss{}.

  To start, define a RW transaction as complete if it has committed at its coordinator and a RO transaction as complete if it has returned to its client.
  Further, using Observation 2 and the second part of
  Lemma~\ref{theorem:spannerrss:causal-timestamps}, observe that the set of
  transactions with a given timestamp $t$ comprises a set of non-conflicting,
  causally unrelated RW transactions and for each RW transaction, a set of
  casually related RO transactions. Thus, the set of transactions with a given
  timestamp $t$ can be arranged into a set of directed acyclic graphs (DAGs).
  Each DAG's vertices are a RW transaction and its causally related RO
  transactions, and each DAG's edges are defined by $\caused$ between the
  transaction vertices. The directed graphs are acyclic because $\caused$ is
  acyclic.
  
  Using this observation, we define a strict total order $\prec$ over pairs of
  complete transactions $T_1,T_2$ in $\alpha_1$. two steps: First, order the sets of transactions
  according to their timestamps $t$. Second, for each set of transactions,
  choose an arbitrary order for its DAGs and then within each DAG,
  topologically sort the transactions.
  
  To show $\prec$ is a strict total order, we must show it is irreflexive, total,
  and transitive. Irreflexivity follows from the irreflexivity of
  $\caused$ and the fact that a transaction can only belong to one set (since it
  only has one timestamp). Totality follows from the fact that timestamps are
  totally ordered, the fact that the arbitrary order of DAGs is chosen to be
  total, and the fact that the directed graphs of transactions are acyclic.
  We now show $\prec$ is transitive.
  
  Let $T_1$, $T_2$, and $T_3$ be three transactions with timestamps $t_1$,
  $t_2$, and $t_3$ such that $T_1 \prec T_2$ and $T_2 \prec T_3$. We show
  $T_1 \prec T_3$.

  There are four cases. 
  (1) If $t_1 < t_2$ and $t_2 < t_3$, then clearly $t_1 < t_3$ and $T_1 \prec T_3$.
  Similarly, (2) if $t_1 < t_2$ and $t_2 = t_3$ or (3) if $t_1 = t_2$ and $t_2 <
  t_3$, $T_1 \prec T_3$. For the case (4) where
  $t_1 = t_2 = t_3$, there are four sub-cases:
  
  \begin{enumerate}[label=(\alph*)]
    \item If $T_1 \caused T_2$ and $T_2 \caused T_3$, then $T_1$,
    $T_2$, and $T_3$ are in the same DAG of causally related transactions with
    the same timestamp. By the transitivity of $\caused$, $T_1 \caused T_3$, so
    $T_1$ will be topologically sorted before $T_3$.
    \item Now suppose $T_1 \caused T_2$ and $T_2 \not\caused T_3$.
    $T_1$ and $T_2$ are in the same DAG, but $T_3$ is in a different DAG.
    Furthermore, because $T_2 \prec T_3$ by assumption, $T_2$'s (and $T_1$'s) DAG
    is ordered before $T_3$'s. Thus, $T_1 \prec T_3$.
    \item Similar reasoning applies to the case where $T_1 \not\caused T_2$
    and $T_2 \caused T_3$.
    \item Finally, if $T_1 \not\caused T_2$ and $T_2 \not\caused T_3$, all
    three transactions are in different DAGs. Since $T_1 \prec T_2$ and $T_2
    \prec T_3$, $T_3$'s DAG must follow $T_2$'s and thus $T_1$'s, so $T_1 \prec
    T_3$.
  \end{enumerate}
  Thus, $\prec$ is a strict total order.
  
  Let $S$ be the sequence of transaction invocations and responses defined by $\prec$. $\alpha_1$, however, may not contain some responses that are in $S$, in particular, those of committed RW transactions whose response did not yet reach the client.
  
  We thus construct $\alpha_2$ by extending $\alpha_1$ with responses for these RW transactions. For each key $k$ read by a RW transaction whose response must be added, let the returned value be that of the most recent write of $k$ that precedes it in $S$. Then let $\alpha_2$ be the extension of $\alpha_1$ with any necessary response actions with these return values.
  
  To conclude the proof, we first show that $S$ is in
  \spannerrss{}'s sequential specification (i.e., the order is consistent with
  the values returned by each transaction's reads) and then that it satisfies
  \rss{}.
  
  To show $S$ is in \spannerrss{}'s sequential
  specification, we use Lemmas~\ref{theorem:spannerrss:rw-valid}
  and~\ref{theorem:spannerrss:ro-valid}. First, observe that since a read in \spannerrss{} can only return a transaction's
  write after the transaction has committed, any transaction whose writes have
  been observed will be complete, have a commit timestamp, and thus be ordered by $\prec$.
  
  Let $T_2$ be a RW transaction.
  Since $\prec$ orders the transactions according to their timestamps $t$, by
  Lemma~\ref{theorem:spannerrss:rw-valid}, $T_2$'s reads include the writes of all
  transactions $T_1$ such that $T_1 \prec T_2$.

  
  Now let $T_2$ be a RO transaction. By Lemma~\ref{theorem:spannerrss:ro-valid},
  $T_2$'s reads reflect all the writes of all RW transactions $T_1$ such that
  $t_1 < t_2$. Further, by Observation 2 about conflicting RW transactions and
  Lemma~\ref{theorem:spannerrss:ro-valid}, if $T_1$ is a conflicting RW transaction
  such that $t_1 = t_2$, then $T_2$ reads from $T_1$, so $T_1 \caused T_2$. Then
  by the definition of $\prec$, $T_1 \prec T_2$. Thus, the sequence $S$ is in \spannerrss{}'s sequential
  specification. We now show it satisfies the three requirements of \rss{}:

  (1) By construction, $S$ contains the same invocations and responses as $\alpha_2$, which extends $\alpha_1$ with zero or more response actions. Further, as we show below, $S$ respects causality, which subsumes the clients' process orders. Thus, $S$ is equivalent to $\complete(\alpha_2)$.
  
  (2) Consider two transaction $T_1$ and $T_2$. Assume that $T_1 \caused T_2$.
  Lemma~\ref{theorem:spannerrss:causal-timestamps} implies that $t_1 \leq t_2$. If
  $t_1 < t_2$, then $T_1 \prec T_2$ because $\prec$ is first defined on the
  order of the timestamps of transactions. Otherwise, if $t_1 = t_2$, then $T_1$
  and $T_2$ are in the same DAG of causally related transactions. The topological
  sort of the DAG ensures that $T_1 \prec T_2$.
  
  (3) By Observation 3, it is clear that if $T_1$ and $T_2$ are RW transactions and
  $T_1 \rt T_2$, $t_1 < t_2$, so $T_1 \prec T_2$. Further, by
  Lemmas~\ref{theorem:spannerrss:rt-timestamps} and~\ref{theorem:spannerrss:ro-valid},
  if $T_1$ is a RW transaction and $T_2$ is a conflicting RO transaction, then
  $T_2$'s conflicting reads will return $T_1$'s writes or newer versions. As a
  result, either $t_1 = t_2$ and $T_1 \caused T_2$ or $t_1 < t_2$. In either
  case, $T_1 \prec T_2$.
  
\end{proof}

\subsection{\gryffrs{}}
\label{sec:correctness:gryffrs}

Unless stated otherwise, we consider an arbitrary, well-formed execution $\alpha$ of a set of application processes
interacting with a \gryffrs{} service. In a slight abuse of notation, we define $o_1 \caused_\alpha o_2$
to mean that $o_1$'s response causally precedes $o_2$'s invocation, and
we define $o_1 \rt_\alpha o_2$ similarly.
For simplicity, we assume real-time values are unique.
To reason about the order of operations in \gryffrs{}, we first introduce several
definitions:

Given an operation $o$, we define its \emph{decision point}, denoted $\fq(o)$, as the time at which the last replica in its first-round quorum
chooses a \timestamp{}. If $o$ is a write or rmw, then $o$'s \emph{visibility point}, denoted $\vp(o)$, is the earliest time at which its write is applied to one replica, and $o$'s
\emph{propagation point}, denoted $\pp(o)$, is the earliest time that $o$'s write is applied to a quorum of replicas. The latter can occur either as part of $o$'s protocol or through dependency propagation.

Given an execution $\alpha$, we define an operation $o$ as \textit{complete} as follows:
If $o$ is a write or rmw, at least one replica has applied its key-value-\timestamp{}
tuple (while processing a \textit{Write2} message or executing a rmw command,
respectively).
Note that by definition, all complete operations have a decision point, and all complete writes and rmws have a visibility point. However, not all complete writes and rmws have a propagation point. Unless specified otherwise, we henceforth only consider complete operations.

Recall that each write or read-modify-write (rmw) in \gryffrs{} has a unique \timestamp{}. Further, a read's \timestamp{} is equal to the \timestamp{} of the write or rmw it reads from. Using these observations, we can define a total order over the operations to a single object $x \in X$.  

Let $o_1$ and $o_2$ be complete operations on object $x \in X$ with \timestamp{}s $\mts_1$ and $\mts_2$. We define a strict total order $<_x$ as follows: if $\mts_1 \neq \mts_2$, then $o_1 <_x o_2$ if and only if $\mts_1 < \mts_2$; otherwise, $\mts_1 = \mts_2$, and $o_1 <_x o_2$ if and only if $\fq(o_1) < \fq(o_2)$.

\begin{lemma}
  $<_x$ is a strict total order.
  \begin{proof}
    We must show $<_x$ is irreflexive, total, and transitive. Irreflexivity follows from the fact that each operation has one \timestamp{} and the irreflexivity of $<$. Totality follows from the fact that \timestamp{}s and real-time values are totally ordered. We now show $<_x$ is transitive.

    Let $o_1$, $o_2$, and $o_3$ be three complete operations with \timestamp{}s $\mts_1$, $\mts_2$, and $\mts_3$ such that $o_1 <_x o_2$ and $o_2 <_x o_3$. We must show $o_1 <_x o_3$.

    If $o_1 <_x o_2$ because $\mts_1 < \mts_2$ and $o_2 <_x o_3$ because $\mts_2 < \mts_3$, then clearly $o_1 <_x o_3$. Similarly, if $\mts_1 < \mts_2$ and $\mts_2 = \mts_3$ or if $\mts_1 = \mts_2$ and $\mts_2 < \mts_3$, then $o_1 <_x o_3$. Finally, if $\mts_1 = \mts_2 = \mts_3$, then $o_1 <_x o_2$ implies $\fq(o_1) < \fq(o_2)$ and $o_2 <_x o_3$ implies $\fq(o_2) < \fq(o_3)$. Thus, $\fq(o_1) < \fq(o_3)$. This implies $o_1 <_x o_3$.
  \end{proof}
\end{lemma}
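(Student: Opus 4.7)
The plan is to verify the three defining properties of a strict total order on the set of complete operations on $x$: irreflexivity, totality (trichotomy), and transitivity. The definition of $<_x$ has a two-level structure, first by \timestamp{} and then by decision point $\fq$ as a tie-breaker, so each property reduces to invoking the corresponding property of $<$ on real numbers at one or both levels.

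First I would handle irreflexivity and totality together since they are essentially immediate. For irreflexivity, any operation $o$ has a single \timestamp{} $\mts$ and a single $\fq(o)$, so neither $\mts < \mts$ nor $\fq(o) < \fq(o)$ holds, ruling out $o <_x o$. For totality between two distinct complete operations $o_1, o_2$, I would split on whether $\mts_1 = \mts_2$: if not, the total order on \timestamp{}s decides $<_x$; if so, the standing assumption that real-time values are unique forces $\fq(o_1) \neq \fq(o_2)$, and the total order on real times decides $<_x$.

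The bulk of the work is transitivity. Given $o_1 <_x o_2$ and $o_2 <_x o_3$, I would do a case analysis on which clause of the definition justifies each inequality, yielding four combinations on the pair (first clause, second clause): (\timestamp{}, \timestamp{}), (\timestamp{}, $\fq$), ($\fq$, \timestamp{}), and ($\fq$, $\fq$). In the pure cases, the conclusion follows immediately from transitivity of $<$ on \timestamp{}s or on decision points. In the mixed cases, one has a strict inequality on \timestamp{}s at one step and an equality at the other step, which combine into a strict \timestamp{} inequality $\mts_1 < \mts_3$ and hence $o_1 <_x o_3$.

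There is not really a main obstacle here; the proof is bookkeeping once the definitional structure is unpacked. The closest thing to a subtle point is noticing that the assumption of unique real-time values is exactly what makes $\fq$ a legitimate tie-breaker for trichotomy, and that in the $(\fq, \fq)$ transitivity case one must chain the equalities $\mts_1 = \mts_2 = \mts_3$ to stay within the $\fq$-clause for the conclusion rather than accidentally mixing levels.
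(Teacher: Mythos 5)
Your proposal is correct and follows essentially the same approach as the paper's proof: irreflexivity and totality from the corresponding properties of $<$ on \timestamp{}s and (unique) real-time values, and transitivity by the same four-way case split on whether each hypothesis comes from the \timestamp{} clause or the $\fq$ tie-breaker. Your explicit appeal to the uniqueness of real-time values for trichotomy is a slightly more careful reading of the same assumption the paper states just before defining $<_x$.
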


\begin{definition}
  The \emph{sequential specification} $\spec_x$ of an object $x \in X$ is the
  set of all sequences of invocation-response pairs of reads, writes, and rmws to $x \in X$ such that each read or
  rmw returns the value written by the most recent write or rmw (or the initial
  value if no such write or rmw exists).
\end{definition}

\begin{lemma}
  \label{lemma:correctness:gryff:legality:single-object}
  The sequence $S_x$ defined by $<_x$ over the invocation-response pairs of complete operations to object $x \in X$ is in the sequential specification $\spec_x$.
  \begin{proof}
    Consider a read $r$ and let $w$ be the write or rmw that $r$ reads from.
    Then by \gryffrs{}'s protocol, $\mts_r = \mts_w$.
    
    Since $r$ reads
    from $w$, it must the case that $\vp(w) < \fq(r)$. Otherwise, $w$ would not
    have been applied at any replica when $r$ read from the replica. By the
    definition of $\fq(w)$, $\fq(w) < \vp(w)$, so $\fq(w) < \fq(r)$. Since
    $\mts_r = \mts_w$ and $\fq(w) < \fq(r)$, $w <_x r$.
    Further, since writes and rmws have unique \timestamp{}s, there does not exist
    any other write or rmw $w'$ with $\mts_w = \mts_{w'}$. This implies that for
    all other writes or rmws $w'$, either $w' <_x w <_x r$ or $w <_x r <_x w'$.
    Since $S_x$ is the sequence of invocation-response pairs
    defined by $<_x$, the same holds for $r$ in $S_x$.

    Now consider a rmw $\mathit{rmw}$. Let $w$ be the write or rmw that
    $\mathit{rmw}$ reads from. We proceed by contradiction.

    Assume for a contradiction that there exists a write or rmw $w'$ such that
    $w <_x w' <_x \mathit{rmw}$.
    Since $w <_x w' <_x \mathit{rmw}$, the definiton of $<_x$ implies that
    $\mts_w < \mts_{w'} < \mts_\mathit{rmw}$.
    Cases 3.2.2, 3.2.3, and 3.2.4 in the proof of Lemma~B.10 from the \gryff{}
    proof of correctness~\cite{burke2020gryff} show that such an ordering of
    \timestamp{}s is impossible when \timestamp{}s are assigned to writes and
    rmws as in \gryff{}. Since \gryffrs{} uses the same process for \timestamp{}
    assignment, this impossibility is a contradiction resulting from the
    earlier assumption.
    Since $S_x$ is the sequence of invocation-response pairs defined by $<_x$, there is thus also 
    no $w'$ such that $w <_{S_x} w' <_{S_x} \mathit{rmw}$.
  \end{proof}
\end{lemma}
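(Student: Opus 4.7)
The plan is to show that $S_x$ lies in $\spec_x$ by verifying the two non-trivial cases: every read in $S_x$ returns the value written by the immediately preceding write or rmw in $S_x$, and every rmw does the same. Writes need no special treatment since $\spec_x$ imposes no constraint on their return values. I would dispatch the read case first because it is essentially an exercise in chasing points-in-time, and save the rmw case for last because it is the only step that inherits a nontrivial obligation from \gryff{}'s \timestamp{} machinery.

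For the read case, let $r$ read from write or rmw $w$. First I would observe that the protocol copies $w$'s \timestamp{} onto $r$, so $\mts_r = \mts_w$. Next I would place $w$ before $r$ under $<_x$: because $r$'s first-round quorum returned $w$'s value, $w$'s write must have been applied at some replica before that quorum chose its \timestamp{}s, giving $\vp(w) < \fq(r)$. Combined with $\fq(w) < \vp(w)$ by definition, this yields $\fq(w) < \fq(r)$, and with $\mts_r = \mts_w$, it yields $w <_x r$. Finally I would rule out any intervening operation: since writes and rmws carry unique \timestamp{}s, no other write or rmw $w'$ has $\mts_{w'} = \mts_w$, so every other write or rmw is strictly below $w$ or strictly above $r$ in $<_x$; any other read with the same \timestamp{} is also a read and so is irrelevant to the sequential specification's constraint on $r$. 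Thus $w$ is the immediate predecessor of $r$ in $S_x$ among writes and rmws.

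For the rmw case, let $\mathit{rmw}$ read from $w$. I would argue by contradiction: suppose some write or rmw $w'$ satisfies $w <_x w' <_x \mathit{rmw}$. By the definition of $<_x$ and \timestamp{} uniqueness, this forces $\mts_w < \mts_{w'} < \mts_\mathit{rmw}$. At this point I would invoke the \timestamp{}-ordering case analysis from \gryff{}'s Lemma B.10, which establishes that the rules by which writes choose their \timestamp{} from their first-round quorum and rmws choose theirs at the PreAccept phase make such a strict ordering of \timestamp{}s impossible whenever $\mathit{rmw}$ actually reads from $w$. Because \gryffrs{}'s only change to this part of the protocol is to piggyback a dependency tuple that gets applied \emph{before} the standard processing of the incoming message, the rules governing \timestamp{} selection are untouched, so \gryff{}'s argument transfers verbatim.

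The main obstacle is the rmw case, and the crux is justifying that \gryff{}'s B.10-style reasoning still applies after the piggybacking change. I would handle this by explicitly noting that $d$ is processed via the same \textsc{Apply} used by ordinary writes, so by the time the PreAccept body runs the replica's $(V[k],\textit{CS}[k])$ is in a state indistinguishable from one reachable in unmodified \gryff{}; from there the \timestamp{}-assignment rules and their consequences are identical. The final sentence of the proof is then trivial: since $S_x$ is the sequence induced by $<_x$, the order of operations in $S_x$ coincides with $<_x$, so the $<_x$-level claims immediately give the corresponding $<_{S_x}$-level claims required by the sequential specification.
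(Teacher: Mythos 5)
Your proposal is correct and follows essentially the same route as the paper's proof: the read case via the chain $\fq(w) < \vp(w) < \fq(r)$ plus carstamp uniqueness, and the rmw case by contradiction through the impossibility of $\mts_w < \mts_{w'} < \mts_{\mathit{rmw}}$ established in \gryff{}'s Lemma~B.10. Your added justification that the piggybacked dependency is applied via the same \textsc{Apply} path before normal message processing---so the carstamp-assignment rules are untouched---is a useful elaboration of the paper's one-line appeal to ``the same process for \timestamp{} assignment,'' but it does not change the argument.
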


Given an execution $\alpha$, we define $\caused'_\alpha \subseteq \caused_\alpha$ as the relation that omits the reads-from case. $\caused'_\alpha$ thus also omits any pairs derived transitively using one or more reads-from pairs. We prove three useful lemmas involving $\caused'_\alpha$. 

\begin{lemma}
  \label{lemma:decision-point-causality}
  Given operations $o_1$ and $o_2$ such that $o_1 \caused'_\alpha o_2$, $\fq(o_1) < \fq(o_2)$.
  \begin{proof}
    Since $\caused'_\alpha$ omits the reads-from case, $o_2$ must causally follow
    $o_1$ through some sequence of one or more actions related by process order
    or message passing. As a result, for any pair of adjacent operations
    $o_i,o_{i+1}$ in this sequence, $o_i \rt_\alpha o_{i+1}$. The transitivity
    of $\rt_\alpha$ thus implies $o_1 \rt_\alpha o_2$.
    
    By the definition of $\fq$, for any operation $o$, $\fq(o)$ is between $o$'s
    invocation and response. Since $o_1 \rt_\alpha o_2$, $\fq(o_1) < \resp(o_1)
    < \inv(o_2) < \fq(o_2)$.
  \end{proof}
\end{lemma}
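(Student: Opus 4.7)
The plan is to show that without the reads-from case, any causal chain $o_1 \caused'_\alpha o_2$ must be built entirely out of process-order and message-passing steps, each of which preserves real-time order, so the entire chain lifts to a real-time ordering between $o_1$ and $o_2$. Once I have $o_1 \rt_\alpha o_2$, the conclusion follows because each operation's decision point lies strictly between its invocation and response.

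First, I would induct on the length of the derivation of $o_1 \caused'_\alpha o_2$ (or equivalently on the number of intermediate actions in the chain witnessing $\caused'_\alpha$). In the base case, $o_1 \caused'_\alpha o_2$ is established directly by process order or message passing between actions $\pi_1$ of $o_1$ and $\pi_2$ of $o_2$. The well-formedness assumption on executions (processes have at most one outstanding invocation and do not take output steps while waiting for a response) guarantees that process order forces $\resp(o_1) < \inv(o_2)$, i.e., $o_1 \rt_\alpha o_2$. For message passing, a $\sendto$ action from $o_1$'s process can only occur after $o_1$'s response (again by well-formedness), and the matching $\receive$ only occurs before $o_2$'s invocation at its process, so once more $\resp(o_1) < \inv(o_2)$.

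In the inductive step, the chain factors as $o_1 \caused'_\alpha o_3 \caused'_\alpha o_2$ for some intermediate operation $o_3$ (or intermediate non-operation action, in which case I would absorb that into one of the two endpoints of the sub-chains). By the inductive hypothesis applied to each sub-chain, $o_1 \rt_\alpha o_3$ and $o_3 \rt_\alpha o_2$, and then transitivity of $\rt_\alpha$ yields $o_1 \rt_\alpha o_2$. Finally, by the definition of the decision point, $\fq(o)$ lies strictly between $\inv(o)$ and $\resp(o)$ for every operation $o$, so $\fq(o_1) < \resp(o_1) < \inv(o_2) < \fq(o_2)$, which is exactly the claim.

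The only real subtlety is being careful with the definition of $\caused'_\alpha$: the original $\caused_\alpha$ also has a transitivity clause that can freely mix reads-from with process-order and message-passing steps, so ``omits the reads-from case'' must be read as omitting it at every link of the transitive chain (not just the final one). The author's stated proof writes this as ``some sequence of one or more actions related by process order or message passing,'' which is what the induction formalizes. Apart from this bookkeeping, the proof is routine, so I would keep it short and appeal directly to transitivity of $\rt_\alpha$ rather than redoing the induction explicitly.
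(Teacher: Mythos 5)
Your proposal is correct and follows essentially the same route as the paper's proof: strip out the reads-from case, observe that every remaining link in the causal chain forces a real-time ordering, conclude $o_1 \rt_\alpha o_2$ by transitivity, and then sandwich the decision points between invocations and responses. The explicit induction and the remark about the transitivity clause only make rigorous what the paper's proof states informally, so the two arguments are the same in substance.
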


\begin{lemma}
  \label{lemma:propagation-point-causality}
  Let $o_1$ be a write or rmw, $o_2$ be a read that reads from $o_1$, and $o_3$ be an operation such that $o_2 \caused'_\alpha o_3$.
  Then $\pp(o_1) \leq \fq(o_3)$. 
  \begin{proof}
    There are two cases.

    \noindentparagraph{Case 1.} Suppose $o_1$'s write is returned by a quorum of
    replicas in $o_2$. Then clearly $\pp(o_1) < \fq(o_2)$. By the definition of $\caused'_\alpha$, since $o_2 \caused'_\alpha o_3$, $o_2 \rt_\alpha o_3$, so $\resp(o_2) < \inv(o_3)$. Further, by the definition of $\fq$, $\fq(o_2) < \resp(o_2)$ and $\inv(o_3) < \fq(o_3)$. Together, these inequalities imply $\pp(o_1) < \fq(o_3)$.

    \noindentparagraph{Case 2.} Now suppose $o_1$'s write is not returned by a
    quorum of replicas in $o_2$.
    Then by \gryffrs{}'s read protocol, $o_2$'s client will store $o_1$'s write
    as its dependency $d$. Since $o_2 \caused'_\alpha o_3$, there must exist
    some sequence of operations that begins with an operation $o$ such that
    $o_2 \caused'_\alpha o
    \caused'_\alpha \ldots \caused'_\alpha o_3$. By the definition of
    $\caused'_\alpha$, $o_2 \rt_\alpha o$. There are two sub-cases.

    Assume $o = o_3$. If $o_1$ finishes applying its
    write to a quorum before $\fq(o_3)$, then we are done, so suppose not. Since
    $o_2 \caused'_\alpha o_3$ and $o_2$ stored $o_1$'s write as a dependency,
    \gryffrs{}'s dependency propagation ensures that $o_1$'s write is
    propagated as a dependency to the process invoking $o_3$. And since \gryffrs{}'s
    dependency propagation includes $o_1$'s write as part of the first round
    round messages for $o_3$, $\pp(o_1) = \fq(o_3)$.

    Now assume $o \neq o_3$. By similar reasoning about \gryffrs{}'s dependency
    propagation, $\pp(o_1) \leq \fq(o)$. By the definition of $\caused'_\alpha$
    and the transitivity of $\rt_\alpha$,
    $o \rt_\alpha o_3$. Thus,  $\pp(o_1) \leq \fq(o) < \resp(o) <
    \inv(o_3) < \fq(o_3)$.
  \end{proof}
\end{lemma}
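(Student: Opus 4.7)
The plan is to prove the lemma by splitting into two cases based on whether the read $o_2$ observed $o_1$'s write at a full first-round quorum. This distinction is natural because \gryffrs{}'s read protocol only stores a dependency $d$ when the replicas disagree on the returned value; otherwise, the value has already been applied at a quorum.

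First I would handle the easy case: $o_2$ received $o_1$'s key-value-\timestamp{} tuple from every replica in its first-round quorum. In this situation, $o_1$'s write has been applied at a quorum of replicas by the time $o_2$ completes its first round, so $\pp(o_1) \leq \fq(o_2)$. Then I would chain the time-ordered inequalities $\fq(o_2) < \resp(o_2) < \inv(o_3) < \fq(o_3)$, using the fact that $o_2 \caused'_\alpha o_3$ omits reads-from edges and therefore, by a short induction on the causal chain (process order and message passing only), implies $o_2 \rt_\alpha o_3$; combined with $\fq$ lying strictly between invocation and response of an operation, this finishes case one.

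The second case, in which $o_2$'s first-round quorum disagrees on the returned value, is the main obstacle and is where \gryffrs{}'s piggybacking mechanism matters. Here $o_2$'s client sets $d \gets (k, v_{o_1}, \mathit{cs}_{o_1})$, and this tuple is propagated by the context framework from $o_2$'s client along the causal chain to whichever process invokes the next operation using this dependency. I would argue by considering the first operation $o$ along the causal chain from $o_2$ to $o_3$ that actually carries $d$ into its first round: when $o$ executes, its first-round messages cause each receiving replica to invoke \textsc{Apply} with $d$ before processing $o$'s request, so by the time $o$ reaches its decision point, $o_1$'s write has been applied at a quorum. Hence $\pp(o_1) \leq \fq(o)$. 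If $o = o_3$ we are done; otherwise $o \rt_\alpha o_3$ (again because the remaining causal edges are process order or message passing), so $\fq(o) < \fq(o_3)$ by the same strict-between-invocation-and-response argument.

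The only subtlety I would need to be careful about is that the dependency $d$ is guaranteed to reach $o_3$'s invocation through the chain of process order and message-passing steps witnessing $o_2 \caused'_\alpha o_3$; this relies on our standing assumption (from the introduction to this section) that the application runtime correctly propagates the dependency metadata between processes, and on \librss{}/the client library always attaching $d$ to the next operation invoked by a client holding it. Once that assumption is in hand, both cases combine to yield $\pp(o_1) \leq \fq(o_3)$.
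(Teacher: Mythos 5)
Your proposal is correct and follows essentially the same route as the paper's proof: the identical case split on whether $o_2$'s first-round quorum unanimously returned $o_1$'s write, the same chain of inequalities $\fq(o_2) < \resp(o_2) < \inv(o_3) < \fq(o_3)$ derived from $o_2 \rt_\alpha o_3$ in the first case, and the same dependency-piggybacking argument (with the same appeal to the standing metadata-propagation assumption) applied to the first operation in the causal chain in the second case. The only cosmetic difference is that you phrase the second case around the first operation that carries $d$ rather than explicitly sub-casing on whether the next operation in the chain equals $o_3$, which changes nothing of substance.
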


\begin{lemma}
  \label{lemma:decision-point-object-and-causal}
  Let $o_1$, $o_2$, and $o_3$ be operations such that $o_1 <_x o_2$ for
  some $x \in X$ and $o_2 \caused'_\alpha o_3$. Then $\fq(o_1) < \fq(o_3)$.
  \begin{proof}
    To start, observe that by the definition of $\caused'_\alpha$, since $o_2
    \caused'_\alpha o_3$, the process executing $o_2$ must have either executed
    another operation or sent a message after $o_2$. This implies $o_2$ executed
    its entire protocol. Further, by the reasoning in
    Lemma~\ref{lemma:decision-point-causality}, $\fq(o_2) < \resp(o_2) <
    \inv(o_3) < \fq(o_3)$. There are two cases.

    \noindentparagraph{Case 1.} Assume $o_2$ is a write or rmw. Since
    $o_1 <_{x_1} o_2$ and $o_2$ is a
    write or rmw, $\mts_{o_1} < \mts_{o_2}$. Further, since $o_2$ executes its
    entire protocol, it must be the case that $\pp(o_2) < \resp(o_2)$. Finally,
    $\fq(o_1) < \pp(o_2)$ because otherwise $o_1$ would read $o_2$'s \timestamp{}
    at at least one replica in its first-round quorum. By \gryffrs{}'s protocol,
    this would force $\mts_{o_2} \leq \mts_{o_1}$, contradicting the fact that
    $\mts_{o_1} < \mts_{o_2}$. Together, these inequalities imply $\fq(o_1) <
    \pp(o_2) < \resp(o_2) < \inv(o_3) < \fq(o_3)$.

    \noindentparagraph{Case 2.} Assume $o_2$ is a read. There are two sub-cases.

    (2a) Assume $o_1$ is also a read. Suppose $o_1$ and $o_2$ read from the same
    write or rmw. Then $\mts_{o_1} = \mts_{o_2}$, and by the definition of
    $<_{x_1}$, $\fq(o_1) < \fq(o_2)$. Combined with the inequality above,
    $\fq(o_1) < \fq(o_2) < \fq(o_3)$.
    
    Now suppose $o_1$ reads from $w_1$ and $o_2$ reads from $w_2 \neq w_1$.
    Since $o_1 <_{x_1} o_2$, $\mts_{o_1} = \mts_{w_1} < \mts_{w_2} = \mts_{o_2}$.
    Assume to contradict that $\fq(o_3) < \fq(o_1)$. By
    Lemma~\ref{lemma:propagation-point-causality}, because $o_2$ reads from
    $w_2$ and $o_2 \caused'_\alpha o_3$, $\pp(w_2) \leq \fq(o_3)$, so $\pp(w_2)
    < \fq(o_1)$. But then $w_2$ would be applied at at least one replica before
    that replica chooses a \timestamp{} for $o_1$ in $o_1$'s first-round quorum.
    This implies $\mts_{o_2} = \mts_{w_2} \leq \mts_{o_1}$, which contradicts
    the fact that $\mts_{o_1} < \mts_{o_2}$.   

    (2b) Assume $o_1$ is a write or rmw. Suppose $o_2$ reads from $o_1$. By the
    definition of visibility point, $\fq(o_1) < \vp(o_1)$. Since $o_2$ read from
    $o_1$, clearly $\vp(o_1) < \fq(o_2)$. Combined with the reasoning above,
    these inequalities imply $\fq(o_1) < \vp(o_1) < \fq(o_2) < \fq(o_3)$.

    Now suppose $o_2$ reads from some $w_2 \neq o_1$. Since $o_1 <_{x_1} o_2$,
    $\mts_{o_1} < \mts_{w_2} = \mts_{o_2}$. Assume to contradict that $\fq(o_3)
    < \fq(o_1)$. By Lemma~\ref{lemma:propagation-point-causality}, because $o_2$
    read from $w_2$ and $o_2 \caused'_\alpha o_3$, $\pp(w_2) \leq \fq(o_3)$, so
    $\pp(w_2) < \fq(o_1)$. But then $w_2$ would be applied at at least one
    replica before that replica chooses a \timestamp{} for $o_1$ in $o_1$'s
    first-round quorum. As above, this contradicts the fact that $\mts_{o_1} <
    \mts_{o_2}$.
  \end{proof}
\end{lemma}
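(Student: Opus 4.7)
The plan is to case-split on the type of $o_2$, and within the read case, to further split on the type of $o_1$. Before any cases, I would prove a universal prelude: since $o_2 \caused'_\alpha o_3$ omits the reads-from edge, the process running $o_2$ must take some further process-order or message-passing step, so $o_2$ runs its protocol to completion. Combined with the argument behind Lemma~\ref{lemma:decision-point-causality}, this yields $\fq(o_2) < \resp(o_2) < \inv(o_3) < \fq(o_3)$, and any write that $o_2$ itself propagates to a quorum does so before $\resp(o_2)$, and hence before $\fq(o_3)$.

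In the first case, $o_2$ is a write or rmw. Writes and rmws receive unique \timestamp{}s, so $o_1 <_x o_2$ forces $\mts_{o_1} < \mts_{o_2}$. I would argue $\fq(o_1) < \pp(o_2)$ by contradiction: otherwise some replica in $o_1$'s first-round quorum has already applied $o_2$'s tuple when it picks $o_1$'s \timestamp{}, which would force $\mts_{o_2} \leq \mts_{o_1}$. Because $o_2$ completes its protocol, $\pp(o_2) < \resp(o_2)$, and the prelude then gives $\fq(o_1) < \pp(o_2) < \fq(o_3)$.

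In the second case, $o_2$ is a read. Sub-case (2a): $o_1$ is also a read. If the two reads share a source write/rmw, then $\mts_{o_1} = \mts_{o_2}$, and the tiebreak clause of $<_x$ directly gives $\fq(o_1) < \fq(o_2) < \fq(o_3)$. Otherwise they read from distinct $w_1$ and $w_2$ with $\mts_{o_1} = \mts_{w_1} < \mts_{w_2} = \mts_{o_2}$, and I would assume $\fq(o_3) \leq \fq(o_1)$ for contradiction, apply Lemma~\ref{lemma:propagation-point-causality} to obtain $\pp(w_2) \leq \fq(o_3) \leq \fq(o_1)$, and conclude that some replica in $o_1$'s first-round quorum applies $w_2$ before choosing $o_1$'s \timestamp{}, forcing $\mts_{w_2} \leq \mts_{o_1}$ and contradicting $\mts_{o_1} < \mts_{w_2}$. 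Sub-case (2b): $o_1$ is a write or rmw. If $o_2$ reads from $o_1$ itself, the chain $\fq(o_1) < \vp(o_1) \leq \fq(o_2) < \fq(o_3)$ finishes it; otherwise $o_2$ reads from some $w_2 \neq o_1$ and the same contradiction argument as (2a), with $w_2$ in place of the competing write, applies.

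The hard part, I expect, will be seeing why Lemma~\ref{lemma:propagation-point-causality} is the correct and essential tool in the sub-cases where $o_1$'s and $o_2$'s \timestamp{}s come from different writes. The reads-from edge from $w_2$ into $o_2$ lies outside $\caused'_\alpha$, so without the dependency-propagation guarantee provided by that lemma there is no direct timing link between $w_2$'s propagation and $o_3$'s decision point; pairing it with the observation that a \timestamp{} applied at any single replica before a first-round quorum's choice pins down the whole operation's \timestamp{} is what closes each contradiction. The remainder is disciplined bookkeeping over operation-type combinations.
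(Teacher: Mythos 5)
Your proposal is correct and follows essentially the same route as the paper's proof: the same prelude establishing that $o_2$ completes its protocol and $\fq(o_2) < \fq(o_3)$, the same case split on the type of $o_2$ (and sub-split on $o_1$ in the read case), and the same use of Lemma~\ref{lemma:propagation-point-causality} to derive a carstamp contradiction when $o_1$ and $o_2$ draw their carstamps from distinct writes. The only cosmetic difference is assuming $\fq(o_3) \leq \fq(o_1)$ rather than a strict inequality for the contradiction, which is immaterial given the paper's assumption that real-time values are unique.
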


\begin{lemma}{(Zig-Zag Lemma)}
  \label{lemma:zig-zag}
  Consider a sequence of $m \geq 2$ operations $o_1,...,o_m$ such that the following hold: (1) successive
  pairs alternate between belonging to $\caused'_\alpha$ and
  $<_{x_k}$ for some $x_k \in X$; and (2) the last pair $(o_{m-1},o_m)$ belongs to $\caused'_\alpha$. Then $\fq(o_1) < \fq(o_m)$.
  \begin{proof}
    By the definition of the sequence, there are two cases. We prove the first
    and then use the result to prove the second.
    
    \noindentparagraph{Case 1.} Assume $o_1 <_{x_1} o_2$.
    The sequence ends with $o_{m-1} \caused'_\alpha o_m$, so by the
    assumption of the case, $m \geq 3$ and $m$ must be odd. The sequence thus may be
    written as $o_{2k-1} <_{x_{2k-1}} o_{2k} \caused'_\alpha o_{2k+1}$ for all
    $1 \leq k \leq (m-1)/2$.
    
    We prove the case by induction on $k$, so first consider the base case $o_1
    <_{x_1} o_2 \caused'_\alpha o_3$. Lemma~\ref{lemma:decision-point-object-and-causal}
    implies that $\fq(o_1) < \fq(o_3)$.

    \emph{Inductive hypothesis}: Assume $\fq(o_1) <
    \fq(o_{2(j-1)+1})$ for some $j < (m-1)/2$.

    \emph{Inductive step}: By the definition of the sequence, $o_{2j-1} <_{x_{j-1}}
    o_{2j} \caused'_\alpha o_{2j+1}$. Lemma~\ref{lemma:decision-point-object-and-causal}
    implies that $\fq(o_{2j-1}) < \fq(o_{2j+1})$. Since
    $o_{2(j-1)+1} = o_{2j-1}$, the inductive hypothesis implies
    $\fq(o_1) < \fq(o_{2(j-1)+1}) = \fq(o_{2j-1}) <
    \fq(o_{2j+1})$.

    \noindentparagraph{Case 2.} Assume $o_1 \caused'_\alpha o_2$. By
    Lemma~\ref{lemma:decision-point-causality}, $\fq(o_1) < \fq(o_2)$.
    If $m=2$, we are done. Otherwise, the sequence of operations
    $o_2,...,o_m$ comprises successive pairs alternating
    between $<_{x_k}$ and $\caused'_\alpha$, starting with $<_{x_2}$ and ending with $\caused'_\alpha$.
    As a result, the first case shows that $\fq(o_2) < \fq(o_m)$, and thus, $\fq(o_1) < \fq(o_2) < \fq(o_m)$.
  \end{proof}
\end{lemma}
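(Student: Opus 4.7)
The plan is to induct on the length $m$ of the zig-zag, using the two strictness lemmas already established: Lemma~\ref{lemma:decision-point-causality} ($o \caused'_\alpha o' \Rightarrow \fq(o) < \fq(o')$) and Lemma~\ref{lemma:decision-point-object-and-causal} (if $o_1 <_x o_2$ and $o_2 \caused'_\alpha o_3$, then $\fq(o_1) < \fq(o_3)$). The latter is precisely what is needed to ``skip'' over a same-object step when it is immediately followed by a causal step, which is what the zig-zag structure is designed to enable.

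I would split on whether the first pair $(o_1, o_2)$ belongs to $<_{x_1}$ or to $\caused'_\alpha$. In the first sub-case, since pairs strictly alternate and the final pair lies in $\caused'_\alpha$, the length $m$ must be odd and $m \ge 3$, so the sequence can be parsed as consecutive overlapping triples of the form $o_{2k-1} <_{x_{2k-1}} o_{2k} \caused'_\alpha o_{2k+1}$ for $k = 1, \ldots, (m-1)/2$. I would induct on $k$: the base case $k=1$ gives $\fq(o_1) < \fq(o_3)$ directly from Lemma~\ref{lemma:decision-point-object-and-causal}, and the inductive step composes that lemma applied at the $k$th triple with the inductive hypothesis $\fq(o_1) < \fq(o_{2k-1})$. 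In the second sub-case, Lemma~\ref{lemma:decision-point-causality} handles $m=2$ immediately, and for $m > 2$ the suffix $o_2,\ldots,o_m$ satisfies the hypotheses of the first sub-case, so chaining $\fq(o_1) < \fq(o_2) < \fq(o_m)$ finishes the argument.

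I do not expect any genuine obstacle here: the lemma is essentially the transitive closure of the two prior lemmas tailored to the zig-zag pattern, and the only care required is bookkeeping on parity (ensuring that the ``last pair is causal'' assumption matches up with the induction so we never try to close a triple with a same-object step at its end, which would lack the Lemma~\ref{lemma:decision-point-object-and-causal} hypothesis). As long as the triples are aligned so that each ends with a $\caused'_\alpha$ edge, every inductive step has the correct shape.
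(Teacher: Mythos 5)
Your proposal is correct and follows essentially the same route as the paper: the same case split on the type of the first edge, the same parity observation forcing $m$ odd in the $<_{x_1}$ case, the same parsing into triples $o_{2k-1} <_{x_{2k-1}} o_{2k} \caused'_\alpha o_{2k+1}$ handled by induction via Lemma~\ref{lemma:decision-point-object-and-causal}, and the same reduction of the causal-first case to the other case on the suffix using Lemma~\ref{lemma:decision-point-causality}. No gaps.
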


Let $<_\psi$ be a partial order defined over pairs of complete operations $o_1,o_2$ as follows:
\begin{itemize}
  \item $o_1 <_x o_2 \implies o_1 <_\psi o_2$,
  \item $o_1 \in \mathcal{W} \land o_2 \in \conflicts_\alpha(o_1) \cup \mathcal{W} \land o_1 \rt_\alpha o_2 \implies
    o_1 <_\psi o_2$,
  \item $o_1 \caused'_\alpha o_2 \implies o_1 <_\psi o_2$, and
  \item $o_1 <_\psi o_2 \land o_2 <_\psi o_3 \implies o_1 <_\psi o_3$.
\end{itemize}

\begin{lemma}
  The partial order $<_\psi$ is acyclic.
  \begin{proof}
    We prove the lemma by contradiction, so assume there exists a
    cycle. Consider a shortest such cycle of $m$ operations $o_1,o_2,...,o_m$.
    Observe that $<_\psi$ is irreflexive by definition, so $m \geq 2$.
    First, we prove three useful properties of the cycle.

    \noindentparagraph{Property 1.} \emph{There cannot be two consecutive $\caused'_\alpha$ edges in
        the cycle.} Assume to contradict that there are
        two consecutive $\caused'_\alpha$ edges $o_j \caused'_\alpha o_{j+1}
        \caused'_\alpha o_{j+2}$. By the transitivity of $\caused'_\alpha$,
        there must exist an edge $o_j \caused'_\alpha o_{j+2}$, which forms a
        shorter cycle $...,o_j,o_{j+2},...,o_j$. This contradicts our choice of
        a shortest cycle.

    \noindentparagraph{Property 2.} \emph{There cannot be two consecutive $<_x$ edges in the cycle.} By similar reasoning as above, $o_j <_x o_{j+1} <_x o_{j+2}$ implies
    the existence of a shorter cycle using the edge $o_j <_x o_{j+2}$. This contradicts our choice of a shortest cycle.

    \noindentparagraph{Property 3.} \emph{There is at most one $\rt_\alpha$ edge in the cycle.}
        Assume to contradict that there are two $\rt_\alpha$ edges
        in the cycle. Let them be $o_i \rt_\alpha o_j$ and $o_k
        \rt_\alpha o_\ell$, re-indexing the cycle if necessary such that $i < j
        \leq k < \ell$. By the definition of $\rt_\alpha$, $\resp(o_i) < \inv(o_j)$ and $\resp(o_k) < \inv(o_\ell)$. It must be the case that
        $\inv(o_\ell) < \resp(o_i)$; otherwise the edge $o_i \rt_\alpha
        o_\ell$ would exist, which allows for a shorter cycle $...,o_i, o_\ell, ..., o_i$. It also must be the case that $\inv(o_j) < \resp(o_k)$, since the contrary would similarly imply the existence
        of a shorter cycle. Together, however, these inequalities imply that $\resp(o_i) <
        \inv(o_j) < \resp(o_k) < \inv(o_\ell) < \resp(o_i)$, contradicting the irreflexivity of $<$.

    We are now ready to prove $<_\psi$ is acyclic. Recall that $<_\psi$ has three types of edges: $<_x$,
    $\caused'_\alpha$, and $\rt_\alpha$. By Property 3, there are two cases.

    \noindentparagraph{Case 1.} Assume there are zero $\rt_\alpha$ edges in the cycle.
    Since $m \geq 2$, there
    are at least two edges in the cycle, and by Property 1, at
    least one is a $<_x$ edge. Without loss of generality,
    re-index the cycle $o_1,o_2,...,o_m,o_1$ such that $o_1 <_{x_1} o_2$.
    By Properties 1 and 2, the sequence $o_1,...,o_m,o_1$ must
    alternate between $<_{x_i}$ and $\caused'_\alpha$, starting with $<_{x_1}$
    and ending with $\caused'_\alpha$.
    Lemma~\ref{lemma:zig-zag} thus implies $\fq(o_1) < \fq(o_1)$, contradicting the irreflexivity of $<$.
    
    \noindentparagraph{Case 2.} Assume there is one $\rt_\alpha$ edge in the cycle. Without
    loss of generality, re-index the cycle $o_1,o_2,...,o_m,o_1$ such that
    $o_1 \rt_\alpha o_2$. Note that since $o_1 \rt_\alpha o_2$, $\fq(o_1) < \fq(o_2)$. We proceed by cases.

    (2a) $o_m \caused'_\alpha o_1$. From the re-indexing of the cycle and by
    the assumptions of this case, the sequence of operations from
    $o_2,...,o_m,o_1$ alternates between $<_{x_i}$ and $\caused'_\alpha$, ending with $\caused'_\alpha$.
    Thus, by Lemma~\ref{lemma:zig-zag}, $\fq(o_2) < \fq(o_1)$, contradicting the inequality above.

    (2b) $o_m <_{x_m} o_1$. By
    the assumptions of this case, the sequence of operations from
    $o_2,...,o_{m-1},o_m$ alternates between $<_{x_i}$ and $\caused'_\alpha$, ending with $\caused'_\alpha$.
    Thus by Lemma~\ref{lemma:zig-zag}, $\fq(o_2) < \fq(o_m)$.
    
    By the definitions of $<_\psi$, $\rt_\alpha$, and $\fq(o_2)$, since $o_1 \rt_\alpha o_2$,
    $o_1 \in \mathcal{W}$ and $\pp(o_1) < \resp(o_1) < \inv(o_2) < \fq(o_2)$.
    This implies $\pp(o_1) < \fq(o_m)$.

    Since $\pp(o_1) < \fq(o_m)$, $o_1$'s write
    would be applied at at least one replica when that replica replies with a
    \timestamp{} in $o_m$'s first-round quorum. This implies $\mts_{o_1} <
    \mts_{o_m}$, which contradicts the assumption that $o_m <_{x_m} o_1$.
  \end{proof}
\end{lemma}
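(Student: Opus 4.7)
The plan is to argue by contradiction: assume $<_\psi$ contains a cycle and fix a shortest such cycle $o_1,\ldots,o_m,o_1$. Since $<_\psi$ is built from three kinds of edges — $<_x$, $\caused'_\alpha$, and $\rt_\alpha$ restricted to writes and their conflicts — the first step is to constrain how these edges can appear in a minimal cycle. Two consecutive $\caused'_\alpha$ edges collapse by transitivity of $\caused'_\alpha$, and two consecutive $<_x$ edges collapse by transitivity of $<_x$ on the same object; both would contradict minimality. The more delicate structural observation is that a minimal cycle contains at most one $\rt_\alpha$ edge: two such edges $o_i \rt_\alpha o_j$ and $o_k \rt_\alpha o_\ell$ with indices arranged so $i<j\le k<\ell$ would, after ruling out shortcut $\rt_\alpha$ edges $o_i \rt_\alpha o_\ell$ and $o_k \rt_\alpha o_j$ (which themselves shorten the cycle), force the chain $\resp(o_i) < \inv(o_j) < \resp(o_k) < \inv(o_\ell) < \resp(o_i)$, an irreflexivity contradiction.

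With these structural restrictions in hand, I would case-split on the number of $\rt_\alpha$ edges in the cycle. With zero, the cycle strictly alternates between $<_x$ and $\caused'_\alpha$ edges. After re-indexing so that $o_1 <_{x_1} o_2$ is the first edge and $o_m \caused'_\alpha o_1$ closes the cycle, the sequence is exactly the form handled by the Zig-Zag Lemma (Lemma~\ref{lemma:zig-zag}), which yields $\fq(o_1) < \fq(o_1)$ — a contradiction.

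The harder case, and the main obstacle, is when the cycle contains exactly one $\rt_\alpha$ edge. Re-index so that $o_1 \rt_\alpha o_2$ is that edge; by the definitions of $\rt_\alpha$, $\fq$, and $\pp$, together with $o_1 \in \mathcal{W}$, I expect to conclude $\pp(o_1) < \fq(o_2)$, because $o_1$'s response cannot occur until its write is applied at a quorum. Then I split again on whether the cycle closes with $o_m \caused'_\alpha o_1$ or $o_m <_{x_m} o_1$. In either sub-case the remaining path is a zig-zag ending in $\caused'_\alpha$, so the Zig-Zag Lemma gives $\fq(o_2) < \fq(o_1)$ (immediately contradicting $\fq(o_1) < \fq(o_2)$) or $\fq(o_2) < \fq(o_m)$, respectively. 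In the second sub-case, combining with $\pp(o_1) < \fq(o_2)$ gives $\pp(o_1) < \fq(o_m)$, and I would argue that this forces some replica in $o_m$'s first-round quorum to have already applied $o_1$'s write and therefore return a \timestamp{} at least $\mts_{o_1}$, so that $\mts_{o_1} < \mts_{o_m}$, contradicting $o_m <_{x_m} o_1$. I expect the bookkeeping around the quorum, visibility-point, and propagation-point definitions to be the most delicate part of this final sub-case.
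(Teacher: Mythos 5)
Your proposal is correct and follows essentially the same route as the paper's proof: a shortest-cycle contradiction, the same three structural properties (no consecutive $\caused'_\alpha$ edges, no consecutive $<_x$ edges, at most one $\rt_\alpha$ edge via the $\resp/\inv$ inequality chain), the same case split on the number of $\rt_\alpha$ edges resolved by the Zig-Zag Lemma, and the same final quorum-intersection argument showing $\pp(o_1) < \fq(o_m)$ forces $\mts_{o_1} < \mts_{o_m}$. The delicate step you flag — deriving $\pp(o_1) < \fq(o_2)$ from $o_1 \rt_\alpha o_2$ with $o_1 \in \mathcal{W}$ — is handled in the paper exactly as you anticipate, via $\pp(o_1) < \resp(o_1) < \inv(o_2) < \fq(o_2)$.
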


\begin{lemma}
  \label{lemma:correctness:gryff:legality}
  A topological sort $S$ of $<_\psi$ over complete operations to objects $X$ is in the sequential
  specification $\prod_{x \in X} \spec_x$.
  \begin{proof}
    Since each operation targets a single object $x \in X$, it
    suffices to only consider the orders of subsets of operations to each $x$.
    For a fixed $x$, this order is solely dictated by the partial order $<_x$ in the topological sort of $<_\psi$.
    Lemma~\ref{lemma:correctness:gryff:legality:single-object} shows that the sequence $S_x$ defined by
    $<_x$ is in $x$'s sequential specification $\spec_x$. Since $S$ is a topological sort
    over $<_\psi$, which by definition generalizes each $<_x$, it follows that the entire sequence $S$
    is in $\prod_{x \in X} \spec_x$.
  \end{proof}
\end{lemma}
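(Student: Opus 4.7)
The plan is to reduce the composite claim to per-object claims by exploiting the fact that in \gryffrs{} each operation targets exactly one object. By the definition of composition of sequential specifications (Section~\ref{sec:proof:preliminaries:services}), a sequence lies in $\prod_{x \in X} \spec_x$ if and only if, for every $x \in X$, its projection onto invocation-response pairs of operations on $x$ lies in $\spec_x$. So it suffices to fix an arbitrary $x \in X$ and argue about the projection $S|_x$ of the topological sort $S$ onto operations touching $x$.

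First I would observe that $<_x$ is a sub-relation of $<_\psi$ by the first clause in the definition of $<_\psi$, so every topological sort of $<_\psi$ must respect $<_x$ on the operations on $x$. Second, since $<_x$ was already shown to be a \emph{strict total order} on the complete operations on $x$, the order of these operations within $S$ is fully determined: $S|_x$ must be exactly the sequence $S_x$ defined by $<_x$.

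With that identification in hand, I would invoke Lemma~\ref{lemma:correctness:gryff:legality:single-object}, which states precisely that $S_x \in \spec_x$. Applying this to every $x \in X$ and using the characterization of $\prod_{x \in X} \spec_x$ as arbitrary interleavings of per-object legal sequences yields $S \in \prod_{x \in X} \spec_x$, completing the argument.

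I do not anticipate a substantive obstacle here: all of the real work (showing $<_\psi$ is acyclic so that a topological sort exists, and establishing single-object legality) has been done in the preceding lemmas. The only subtlety worth stating carefully is the composition step, namely that since \gryffrs{} does not support multi-object operations, the product specification does not constrain interleavings across objects, so per-object legality is the only requirement a topological sort of $<_\psi$ must meet.
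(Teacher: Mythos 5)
Your proposal is correct and follows essentially the same route as the paper's proof: reduce to per-object projections, note that $<_\psi$ contains each $<_x$ so the projection of the topological sort onto object $x$ is exactly the sequence $S_x$ (since $<_x$ is a strict total order), apply Lemma~\ref{lemma:correctness:gryff:legality:single-object}, and conclude via the interleaving characterization of the product specification. Your explicit observation that $S|_x$ is \emph{fully determined} because $<_x$ is total is a slightly sharper statement of the step the paper phrases as ``this order is solely dictated by $<_x$,'' but the argument is the same.
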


\begin{theorem}
  \label{theorem:correctness:gryff}
  \gryffrs{} guarantees \rs{}.
  \begin{proof}
    Let $\alpha_1$ be a well-formed execution of \gryffrs{}. Extend $\alpha_1$ to $\alpha_2$ by adding a response action for any complete operation $o$
    that does not have one in $\alpha_1$.
    
    Let $S$ be a topological sort of $<_\psi$ on the operations in $\alpha_2$. Lemma~\ref{lemma:correctness:gryff:legality} implies that $S \in
    \prod_{x \in X} \spec_x$. We now show that $S$ satisfies the three properties of \rs{}:

    (1) By construction, $S$ contains the same invocations and responses as
    $\alpha_2$, which extends $\alpha_1$ with zero or more response actions.
    Further, as we show below, $S$ respects $\caused_{\alpha_1}$, which subsumes the
    clients' process orders. Thus, $S$ is equivalent to $\complete(\alpha_2)$.

    (2) Consider two operations $o_1$ and $o_2$ such that $o_1 \caused_{\alpha_1} o_2$.
    By the definition of $\alpha_2$, $\caused'_{\alpha_1} \subseteq \caused'_{\alpha_2}$. Further,
    since the definition of $<_\psi$ includes $\caused'_{\alpha_2}$, $<_x$ (including $o_2$ reading from $o_1$), and their transitive closure, $<_\psi$ thus also includes $\caused_{\alpha_1}$. Since $S$ is a topological sort of $<_\psi$, $o_1 <_S o_2$.
    
    (3) Consider two operations $o_1 \in \mathcal{W}$ and $o_2 \in \conflicts_{\alpha_1}(o_1)
    \cup \mathcal{W}$ such that $o_1 \rt_{\alpha_1} o_2$. By the definition of $<_\psi$,
    $o_1 <_\psi o_2$, and since $S$ is a topological sort of $<_\psi$, $o_1 <_S o_2$.
  \end{proof}
\end{theorem}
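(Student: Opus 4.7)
My plan is to exhibit, for every well-formed execution $\alpha_1$ of \gryffrs{}, an extension $\alpha_2$ (obtained by appending responses for operations that are complete in the protocol sense but whose response has not yet reached the client) together with a total order $S$ on $\complete(\alpha_2)$ that witnesses the three conditions of \rs{}. I would construct $S$ as a topological sort of a partial order $<_\psi$ built from three ingredients: a per-object carstamp ordering $<_x$, the \rs{} real-time constraint on writes, and the non-reads-from fragment of causality $\caused'$, plus transitive closure. The reads-from portion of causality need not be added explicitly because it is already subsumed by the per-object order.

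The first block of work is the per-object order $<_x$: every write and rmw carries a unique carstamp, every read inherits the carstamp of the value it returns, so carstamps totally order writes/rmws, and ties among reads sharing a carstamp can be broken by decision point $\fq$. A short argument---essentially the \gryff{} carstamp-assignment proof---shows that the resulting per-object sequence lies in $\spec_x$: a read returns the write with the maximum carstamp applied to a replica in its first-round quorum, and no other write or rmw can be slipped between them without contradicting how an intervening rmw would have chosen its base. Conditions (2) and (3) of \rs{} then fall out of the definition of $<_\psi$: causality is generated into it directly, and the real-time edges for writes are one of its explicit generators.

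The main obstacle is showing that $<_\psi$ is acyclic so that a topological sort exists. I would attack this with a shortest-cycle argument, underpinned by two auxiliary lemmas about decision points: $o_1 \caused' o_2$ forces $\fq(o_1) < \fq(o_2)$ (because a $\caused'$ path implies $o_1 \rt o_2$), and a ``zig-zag'' lemma saying that a path whose consecutive edges alternate between $<_x$ and $\caused'$ and that ends in a $\caused'$ edge has strictly increasing $\fq$. The zig-zag lemma is where \gryffrs{}-specific mechanics enter: a read that did \emph{not} observe its source write on a full quorum stores that write as a dependency and piggybacks it onto the next operation, so by the time the next operation's first-round replies are sampled, the dependency has been applied on at least one replica, which forces the carstamp comparison at the next decision point to line up. Given these lemmas, a shortest cycle can contain neither two adjacent $<_x$ edges nor two adjacent $\caused'$ edges (each such pair would shorten it by transitivity) and can contain at most one $\rt$ edge (two would directly contradict the irreflexivity of real time via a crossover argument on responses and invocations). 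The remaining two cases---zero or exactly one real-time edge---feed directly into the zig-zag lemma and yield either $\fq(o_1) < \fq(o_1)$ or the incompatible pair $\pp(o_1) < \fq(o_m)$ together with $o_m <_{x_m} o_1$. With acyclicity in hand, any topological sort of $<_\psi$ serves as the desired $S$, and the three \rs{} conditions follow from the construction.
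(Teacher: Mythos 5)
Your proposal is correct and follows essentially the same route as the paper: the same partial order $<_\psi$ generated by the per-object carstamp order, the real-time edges for writes, and the non-reads-from fragment of causality; the same decision-point and zig-zag lemmas (with dependency piggybacking supplying the key inequality); and the same shortest-cycle acyclicity argument with the identical three structural properties and case analysis. No gaps worth noting.
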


\end{document}